\documentclass[prx,twocolumn,superscriptaddress,amssymb,floatfix]{revtex4-2}

\usepackage{amsmath}
\usepackage{amssymb}
\usepackage{amsfonts}
\usepackage{amsthm}
\usepackage{mathtools}
\usepackage{bm}
\usepackage{physics}

\usepackage{array}
\newcolumntype{L}[1]{>{\raggedright\arraybackslash}p{#1}}

\usepackage{booktabs}
\usepackage[table]{xcolor}
\definecolor{tablehighlightblue}{RGB}{225,239,255}
\usepackage{graphicx}
\usepackage{tabularx}

\makeatletter
\let\switch@array\relax
\makeatother

\newcolumntype{Y}{>{\raggedright\arraybackslash}X}

\usepackage[linesnumbered,ruled,vlined]{algorithm2e}
\SetKwInput{KwIn}{Input}
\SetKwInput{KwOut}{Output}
\SetKw{KwRet}{Return}
\DontPrintSemicolon
\SetAlgoCaptionSeparator{.}

\usepackage[colorlinks, citecolor=blue, linkcolor=blue]{hyperref}

\usepackage{stmaryrd}

\newcommand{\cH}{ {\cal H} }
\newcommand{\cL}{ {\cal L} }

\newcommand{\rd}{\partial}

\newcommand{\secref}[1]{Sec.\,\ref{#1}}
\newcommand{\appref}[1]{Appendix~\ref{#1}}

\newcommand{\eqnref}[1]{Eq.\,\eqref{#1}}

\theoremstyle{plain}
\newtheorem{theorem}{Theorem}
\newtheorem{corollary}{Corollary}[theorem]
\newtheorem{lemma}[theorem]{Lemma}
\newtheorem{proposition}[theorem]{Proposition}

\theoremstyle{definition}

\newcommand{\Z}{\mathbb{Z}}
\newcommand{\F}{\mathbb{F}}
\newcommand{\RG}{R_G}
\newcommand{\wt}{\operatorname{wt}}
\newcommand{\rankop}{\operatorname{rank}}
\newcommand{\coker}{\operatorname{coker}}

\newcommand{\transpose}{\mathsf{T}}
\newcommand{\LP}{\mathsf{LP}}
\newcommand{\HGP}{\mathsf{HGP}}
\newcommand{\row}{\mathsf{row}}
\newcommand{\Wcert}{W_{\rm full}^{\rm cert}}
\newcommand{\Wgen}{W_{\rm gen}^{\rm conj}}
\newcommand{\Dbasis}{D_{\rm basis}^{\rm ub}}
\newcommand{\Ddiag}{D_{\rm diag}^{\rm ub}}
\newcommand{\Dseed}{D_{\rm seed}^{\rm ub}}
\newcommand{\Dminor}{D_{\rm minor}^{\rm ub}}
\newcommand{\Dquot}{D_{\rm quot}^{\rm ub}}
\newcommand{\Dsyz}{D_{\rm syz}^{\rm ub}}
\newcommand{\Dwit}{D_{\rm wit}^{\rm ub}}
\newcommand{\Dbest}{D_{\rm best}^{\rm ub}}
\newcommand{\Drref}{D_{\rm RREF,min}^{\rm ub}}
\newcommand{\Wrref}{W_{\rm RREF,pair}^{\rm max}}
\newcommand{\Wrrefred}{W_{\rm RREF,pair}^{\rm max,red}}
\newcommand{\seedlabel}[4]{\ensuremath{\mathsf{#1}^{#2}_{#3,#4}}}

\graphicspath{{figures/}}

\begin{document}

\title{Logical Spectroscopy: Lifted-Product Codes with Addressable Bases}

\author{Jong Yeon Lee}
\email{jongyeon@illinois.edu}
\affiliation{Physics Department, University of Illinois at Urbana-Champaign, Urbana, Illinois 61801, USA}
\affiliation{Korea Institute for Advanced Study, Seoul 02455, South Korea}

\begin{abstract}
Quantum low-density parity-check (LDPC) memories can encode many logical qubits, but that alone does not make them usable: applications need to know where the logical operators are supported, how they are labeled, and how conjugate $X/Z$ partners pair. For hypergraph-product (HGP) codes this information follows from row reduction over $\mathbb{F}_2$. For Abelian lifted-product codes, which include prominent high-rate constructions, it does not: the entries of the defining seed matrices live in a group algebra rather than a field, so pivots need not be invertible and row reduction can fail.

To address this problem, we introduce \emph{logical spectroscopy}. For an odd-order Abelian lift group, the Chinese remainder theorem splits the group algebra into finite fields, one for each Frobenius orbit of characters; we call these orbits packets. Packet by packet, we solve ordinary finite-field linear algebra, lift the answers back to the physical code with the associated packet projectors, and pair $X$ and $Z$ logicals between reciprocal packets by trace duality. The result is a complete conjugate logical basis for finite Abelian lifted products $\mathsf{LP}(A,B)$ that is addressable: every logical coordinate carries canonical packet and K\"unneth-summand labels, deterministic within-block indices, a conjugate partner, and an explicit binary representative. The code's own translation symmetry then acts on these coordinates in closed form. The same computation supplies design diagnostics: the packet-resolved logical dimension, certified weight bounds for the constructed representatives, and an exact packet-by-packet account of which whole-orbit erasure patterns destroy logical information. We apply the construction to quasi-cyclic examples with up to roughly $5000$ physical qubits, including high-rate examples whose basis-width and distance-witness are reported explicitly. We further extend the packet decomposition and dimension formulas to even-order lifts, where nilpotent Smith profiles replace packet ranks; in fact, every even-lift complex is shown to be a square-zero thickening of its half-length complex. Logical spectroscopy thus equips Abelian lifted products with an explicit logical coordinate system and an algebraic toolkit for high-rate code search.
\end{abstract}

\maketitle

\section{Introduction}

Quantum error-correcting (QEC) codes are the basic substrate for fault-tolerant quantum computation. A QEC code is often characterized by parameters $[[n,k,d]]$, where $n$ is the number of physical qubits, $k$ is the number of encoded logical qubits, and $d$ is the minimum weight of a nontrivial logical Pauli operator~\cite{Gottesman1997}. Much of QEC code design seeks large $k$ and large $d$ at fixed or growing $n$~\cite{BreuckmannEberhardt2021}. For applications, the parameter $k$ must be supplemented by concrete logical data: where the logical operators are, how they are labeled, and how to choose conjugate $X/Z$ pairs whose supports can be measured or coupled to ancillary systems, for example in lattice-surgery, code-deformation, or qLDPC logical-measurement schemes~\cite{HorsmanFowlerDevittVanMeter2012,Litinski2019,VuillotLaoCrigerAlmudeverBertelsTerhal2018,CohenKimBartlettBrown2021,Xu2024,Williamson_2026,Swaroop2026,Cowtan2026,zheng2025highratesurgeryconstantoverheadlogical,BaspinBerentCohen2025,QGPU2026}. We preview the convention formalized below by calling such a conjugate basis \emph{addressable}: its packet-level labels are canonical, and its remaining coordinates are fixed by deterministic conventions rather than by arbitrary pivot or coordinate-order choices.

This addressability requirement already appears at the level of ordinary CSS codes~\cite{CalderbankShor1996,Steane1996}.  Such a code is specified by two binary check matrices $H_X$ and $H_Z$, whose rows generate the $X$- and $Z$-type stabilizer checks and which obey the commutation condition $H_X H_Z^\transpose = 0$ over $\mathbb F_2$. With this convention, the logical Pauli spaces are the quotient spaces $\cL_X=\ker H_Z/ \Im H_X$ and $\cL_Z=\ker H_X/ \Im H_Z$.
In practice, extracting a useful paired logical basis from a large sparse pair $(H_X,H_Z)$ can be difficult: generic binary Gaussian elimination gives some quotient basis, but the resulting labels are arbitrary, presentation-dependent, and often incompatible with the algebraic or geometric structure of the code. For hypergraph-product codes~\cite{TillichZemor2014}, the situation is comparatively transparent. The input classical codes $(A,B)$ are ordinary matrices over the field $\mathbb F_2$; Gaussian elimination exposes pivot coordinates, free coordinates, kernels, and quotient coordinates;
and the product construction turns these data into the familiar row/column grid of HGP logical operators~\footnote{This is evident from the fact that, for two classical input codes with parameters $[n_1,k_1,d_1]$ and $[n_2,k_2,d_2]$, the hypergraph product in the standard convention has $k = k_1 k_2 + k_1^\transpose k_2^\transpose$, where $k_i^\transpose=k_i+r_i-n_i$ is the number of logical bits of the transpose code defined by the transposed $r_i$-row check matrix; see \appref{app:reader-guide} for the convention used in this paper.}.
This structured addressability is one ingredient in recent parallel logical-computation schemes for homological-product codes~\cite{Xu2024,QGPU2026}. However, the distance of hypergraph-product codes scales only as the square root of the blocklength, short of asymptotically good quantum low-density parity-check (qLDPC) families.

\begin{figure*}[!t]
\includegraphics[width=0.99\textwidth]{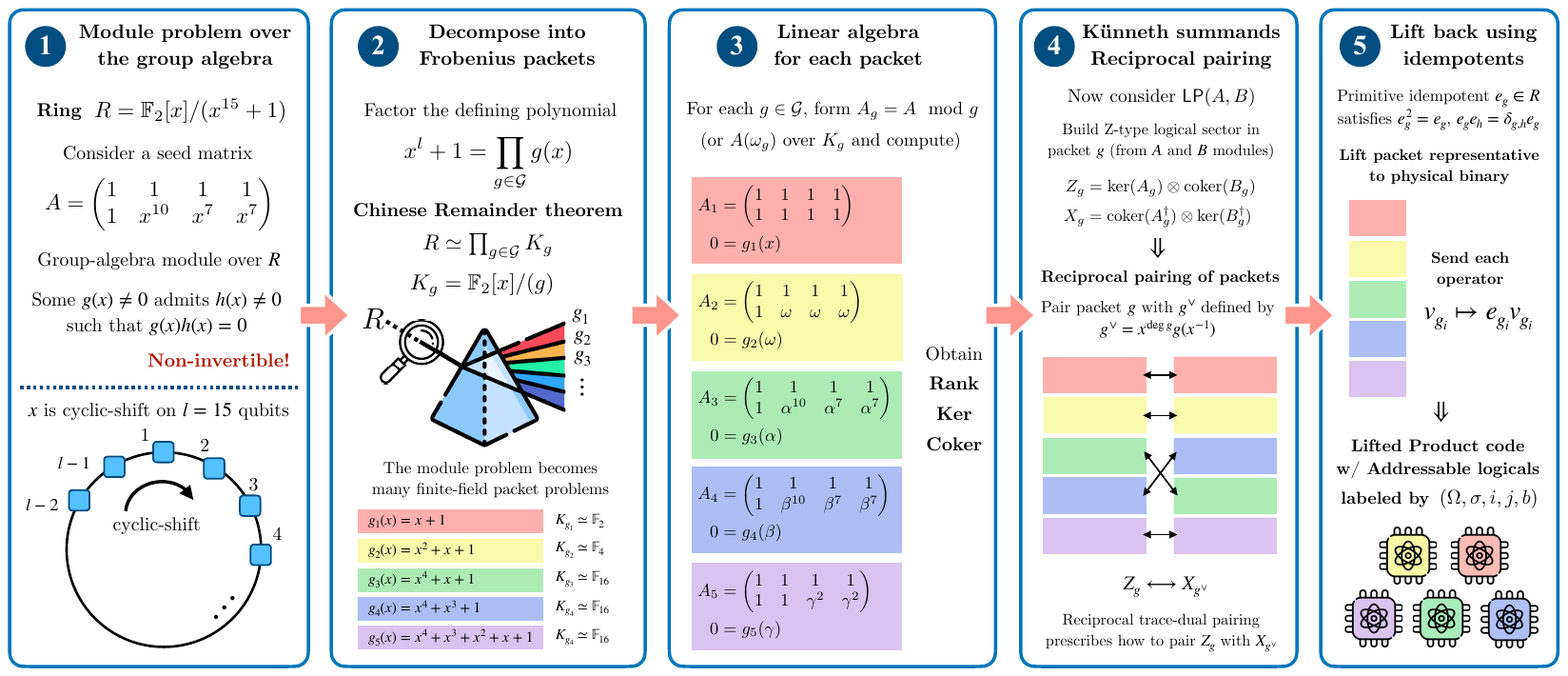}
\caption{ {\bf Logical spectroscopy.} (1) The toy example $\seedlabel{T}{15}{2}{4}$ from \appref{app:worked-example} is a $2\times4$ monomial matrix over $R=\F_2[\mathbb{Z}_{15}] = \F_2[x]/(x^{15}+1)$. For this example, HGP-style row reduction at the level of $A$ over $R$ is not available before binary expansion.
(2) By the Chinese remainder theorem, factoring $x^{15}+1$ decomposes $R$ into finite-field packets $K_g=\F_2[x]/(g)$.
(3) In each packet, the seed reduces to an ordinary finite-field matrix, so packet ranks, kernel and quotient coordinates, and K\"unneth representatives can be computed independently.
(4) Same-packet tensor data give the $Z$-logical packet contributions, while conjugate $X/Z$ labels are paired through reciprocal packets and trace-dual coefficient bases because the binary commutation pairing sends $x$ to $x^{-1}$.
(5) The primitive idempotents $e_g$ then lift packet representatives to physical binary logical representatives, preserving packet labels and giving an addressable logical basis.
The color coding follows a fixed packet throughout the diagram, emphasizing that the construction factors, solves, pairs, and lifts packet by packet. }
\label{fig:microscope}
\end{figure*}

By contrast, lifted-product (LP) codes can achieve high-rate and high-distance qLDPC families~\cite{PanteleevKalachev2022,PanteleevKalachev2022STOC}. At first sight the construction is similar, again using two classical seed checks. The linear algebra, however, takes place over a different object. In the single-variable quasi-cyclic setting, the seed entries lie in
\begin{equation}
        R_\ell=\F_2[x]/(x^\ell+1).
\end{equation}
The symbol $x$ represents a one-step cyclic shift; after expansion, a monomial $x^a$ becomes an $\ell\times \ell$ circulant permutation block.  A sparse polynomial such as $x^a+x^b$ becomes the sum of two such blocks.  A compact polynomial seed therefore describes a large binary block-circulant check matrix.  The resulting binary code is quasi-cyclic rather than cyclic: its coordinates split into many cyclic block-orbits, and the simultaneous shift inside all orbits is a code automorphism.
The difficulty is that $R_\ell$ is not a field: a nonzero determinant need not be a unit. Even if a candidate square pivot block has a nonzero determinant as a polynomial class, it may still fail to be invertible over $R_\ell$ and cannot necessarily be used as a Gaussian-elimination pivot.
In principle, one can still obtain logical representatives by applying binary Gaussian elimination and symplectic Gram--Schmidt~\cite{Wilde2009} to the expanded CSS code.  When used in this direct binary form, however, the labels are tied to pivot choices in the expanded coordinate system rather than to the quasi-cyclic or finite Abelian structure.
The resulting representatives can depend strongly on coordinate order and need not be organized by the lift symmetry, making it difficult to design reusable measurement or addressing patterns.
This structural mismatch means that the efficient logical-operation protocols of Ref.~\cite{Xu2024}, which rely on HGP coordinates, do not carry over directly to LP codes.

In this paper, we develop a spectral replacement for global row reduction.  The Chinese remainder theorem (CRT) says that, when the lift group has odd order, the group algebra splits into a product of finite fields.  For $R_\ell$, this amounts to factoring $x^\ell+1$ over $\F_2$.  Each irreducible factor determines a \emph{Frobenius packet}, or simply a packet, of cyclic-shift modes; as orbits of characters under the Frobenius squaring map, packets are the Frobenius orbits, or $2$-cyclotomic cosets, of classical coding theory~\cite{LingSole2001,LidlNiederreiter1997}.  These packets are the finite-field analogue of Fourier or character modes, but we use the single term \emph{packet} after this point.  We reduce the seed matrix in each packet, solve ordinary finite-field kernel and quotient problems there, and then lift the answers back to the physical quasi-cyclic code using the primitive idempotents attached to the packets.  These idempotents act as packet projectors supplied by the Chinese remainder decomposition: after substituting $x$ by the cyclic-shift matrix, they become matrix projectors onto the corresponding shift-invariant spectral subspaces.

\begin{table*}[!t]
\scriptsize
\setlength{\tabcolsep}{2.5pt}
\begin{tabular}{lcccccccccccc}
\toprule
Seed & shape & $\ell$ & $\quad \llbracket n,k,d\rrbracket \quad$ & $k/n$ & $w_{\rm check}$  & $\Ddiag$ & $\Dwit$ & $\Dbasis$ & $\Wcert$ & $\Drref$ & $\Wrref$ & $\Wrrefred$ \\
\midrule
\seedlabel{P}{15}{2}{4} & $2\times4$ & 15 & $\llbracket 300,60,9\rrbracket$ & 0.200 & 11 & 17 & 9 & 15 & 24 & 44 & 102 & 81\\
\seedlabel{E}{14}{3}{4} & $3\times4$ & 14 & $\llbracket 350,26,\leq 14\rrbracket$ & 0.074 & 7 & 14 & 14 & 24 & 168 & 35 & 101 & 75\\
\midrule
\seedlabel{M}{45}{4}{10} & $4\times10$ & 45 & $\llbracket 5220,1674,\leq 16\rrbracket$ & 0.321 & 14 &  18 & 16 & 60 & 126 & 72 & 1672 & 1422 \\
\seedlabel{R}{75}{3}{7} & $3\times7$ & 75 & $\llbracket 4350,1224,\leq 16\rrbracket$ & 0.281 & 10 & 22 & 16 & 90 & 160 & 96 & 1462 & 1242\\
\seedlabel{R}{81}{3}{7} & $3\times7$ & 81 & $\llbracket 4698,1320,\leq 16\rrbracket$ & 0.281 & 10 & 18 & 16 & 60 & 162 & 102 & 1602 & 1314\\
\rowcolor{tablehighlightblue}
\seedlabel{M}{45}{5}{9} & $5\times9$ & 45 & $\llbracket 4770,784,\leq 50\rrbracket$  & 0.164 & 14 & 50 & 86 & 90 & 150 & 110 & 1476 & 1296\\
\rowcolor{tablehighlightblue}
\seedlabel{P}{75}{3}{7} & $3\times7$ & 75 & $\llbracket 4350,1200,\leq 25\rrbracket$ & 0.276 & 18/19 & 25 & 25 & 50 & 180 & 89 & 1109 & 972\\
\rowcolor{tablehighlightblue}
\seedlabel{P}{91}{3}{7} & $3\times7$ & 91 & $\llbracket 5278,1456,\leq 70\rrbracket$ & 0.276 & 18/19  & 70 & -- & 91 & 208 & 123 & 1515 & 1346\\
\midrule
\seedlabel{H}{41}{3}{8} & $3\times8$ & 41 & $\llbracket 2993,1025,\leq 26\rrbracket$ & 0.342 & 21 & 80 & 26 & 41 & 84 & 45 & 884 & 703\\
\seedlabel{H}{33}{3}{9} & $3\times9$ & 33 & $\llbracket 2970,1188,\leq 18\rrbracket$ & 0.400 & 23 & 62 & 18 & 33 & 74 & 39 & 656 & 563\\
\midrule
\seedlabel{R}{21}{4}{5} & $4\times5$ & 21 & $\llbracket 861,45,\leq 26\rrbracket$ & 0.052 & 9 & 58 & 26 & 42 & 60 & 174 & 248 & 202\\
\seedlabel{P}{35}{3}{4} & $3\times4$ & 35 & $\llbracket 875,35,\leq 35\rrbracket$ & 0.040 & 13 & 75 & 35 & 35 & 98 & 188 & 258 & 237\\
\seedlabel{P}{31}{2}{5} & $2\times5$ & 31 & $\llbracket 899,279,\leq 8\rrbracket$ & 0.310 & 13 & 10 & 8 & 31 & 48 & 23 & 258 & 216\\
\midrule
\seedlabel{A}{45}{3}{7} & $3\times7$ & 45 & $\llbracket 2610,744,\leq 16\rrbracket$ & 0.285 & 10 & 18 & 16 & 36 & 120 & 58 & 794 & 670\\
\seedlabel{A}{75}{3}{7} & $3\times7$ & 75 & $\llbracket 4350,1224,\leq 20\rrbracket$ & 0.281 & 10 & 24 & 20 & 90 & 200  & 92 & 1474 & 1234\\
\seedlabel{A}{91}{3}{7} & $3\times7$ & 91 & $\llbracket 5278,1480,\leq 22\rrbracket$ & 0.280 & 10 & 24 & 22 & 112 & 224 & 122 & 1582 & 1316 \\
\bottomrule
\end{tabular}
\caption{{\bf Self-adjoint construction examples.} Seed matrices are given in \appref{app:seed-data}; seed-label conventions and distance-witness definitions are given in \secref{sec:applications}, and the certificate notation is summarized in Table~\ref{tab:certificate-notation}. The code-parameter column is written as $\llbracket n,k,d\rrbracket$ using the exact distance when available and otherwise the tightest reported verified upper bound $\Dbest$. Entries such as $18/19$ in $w_{\rm check}$ report the two CSS check sides when they differ; $\Dwit$ entries marked ``--'' mean that no completed independent distance witness is reported. The even-order row \seedlabel{E}{14}{3}{4} is included as a small exponent-lift example: its dimension and $\Ddiag=14$ primary-packet witness use the nonsemisimple tools of Theorem~\ref{thm:beyond-odd} and Appendix~\ref{app:even-lifts}; the matching $\Dwit=14$ entry is a completed randomized QDist witness.  Its $\Dbasis$ and $\Wcert$ entries come from a primary-packet-projected binary quotient basis paired by finite binary Gram inversion, while its RREF columns are computed from the direct binary expansion as for the odd rows.
The highlighted rows are \seedlabel{M}{45}{5}{9}, \seedlabel{P}{75}{3}{7}, and \seedlabel{P}{91}{3}{7}; their Cramer/minor and quotient-lift diagnostics are analyzed in Appendices~\ref{app:quotient-lift-witnesses} and \ref{app:cramer-spectra}.
The $\mathsf A$ rows are from Ref.~\cite{Cain2026}. Several $\Dwit$ entries use the fast distance-sampling method from Ref.~\cite{ZhouMaskara2026}; the \seedlabel{P}{75}{3}{7} entry is a verified mixed-packet packet-sum witness.
}
\label{tab:main}
\end{table*}

The spectral method also explains how conjugate logicals pair.  In the tensor product over $R_G$, only diagonal packet components survive: an elementary tensor with packets $\Omega$ and $\Lambda$ vanishes unless $\Omega=\Lambda$.  A general product representative is therefore a sum of same-packet contributions.  By contrast, the binary $X/Z$ commutation pairing transposes cyclic shifts, which sends $x$ to $x^{-1}$.  A $Z$ representative in one packet pairs with an $X$ representative in the reciprocal packet, so same-packet tensoring and reciprocal-packet pairing play different roles.

Beyond the basis itself, the packet computation yields rank-defect contributions to $k$, group-basis supports of lifted idempotent representatives, and exact whole-orbit erasure ranks with packet attribution.  Figure~\ref{fig:microscope} summarizes this packetwise pipeline.

This addressability viewpoint is especially relevant for high-rate LP memories, where the number of encoded qubits can be large enough that arbitrary quotient-basis labels are not operationally useful for the structured logical operations envisioned in recent work~\cite{CohenKimBartlettBrown2021,Williamson_2026,Swaroop2026,Xu2024,Cowtan2026,zheng2025highratesurgeryconstantoverheadlogical,BaspinBerentCohen2025,QGPU2026}.

\vspace{5pt}

\textbf{Relation to prior work.}
CRT and constituent decompositions are classical tools in the theory of quasi-cyclic codes~\cite{LingSole2001,GuneriLingOzkaya2020}; repeated-root and chain-ring constituents cover the even case~\cite{vanLint1991,DinhLopez2004,LingSole2003}.  Circulant-permutation, exponent-matrix, and permanent/minor methods are standard in QC-LDPC code design~\cite{Fossorier2004,SmarandacheVontobel2004,SmarandacheVontobel2012}.  On the quantum side, K\"unneth-type decompositions underlie hypergraph-product, homological-product, lifted-product, fiber-bundle, balanced-product, non-Abelian lifted-product, and quantum Tanner codes~\cite{BravyiHastings2013,PanteleevKalachev2021,HastingsHaahODonnell2021,BreuckmannEberhardtBalanced2021,PanteleevKalachev2022,LeverrierZemor2022}.

The present contribution is the logical-coordinate structure obtained by keeping these ingredients together for rectangular Abelian $\LP(A,B)$ codes.  The three pieces used throughout the paper are: a reciprocal-packet trace-dual construction of conjugate $X/Z$ bases; a block-diagonal translation action on the stored representatives (Corollary~\ref{cor:equivariance}); and exact packet attribution for whole-orbit erased-logical dimensions.  These outputs supplement the packetwise dimension count with labeled representatives, conjugate pairing data, basis-width data for the constructed representatives, and erasure attribution.

Structured cyclic and group-algebra quantum LDPC families include hyperbicycle, two-block group-algebra, generalized-bicycle, bivariate-bicycle, multivariate-bicycle, univariate-bicycle, and radial codes~\cite{KovalevPryadko2013,LinPryadko2024,BravyiCross2024,ScrubyHillmannRoffe2026,Voss2024,EberhardtSteffan2024,RabetiMahdavifar2026,DavenportBlueChuang2026}.  These works analyze important subclasses, including explicit logical operators, automorphisms, distance upper bounds, or transversal gates.  Radial codes~\cite{ScrubyHillmannRoffe2026} give a canonical basis with geometric code-copy and ring labels; the univariate-bicycle analysis of Ref.~\cite{RabetiMahdavifar2026} describes logical cosets and distance upper bounds for a structured generalized-bicycle subclass; and Ref.~\cite{EberhardtSteffan2024} constructs explicit logical operators and fold-transversal gates for bivariate-bicycle codes.  The framework here covers rectangular seeds over arbitrary odd-order finite Abelian lift groups, with the even-order extensions summarized in Appendix~\ref{app:even-lifts}.  Generalized surgery, code-deformation, and qLDPC logical-measurement schemes motivate the addressability requirement~\cite{HorsmanFowlerDevittVanMeter2012,Litinski2019,VuillotLaoCrigerAlmudeverBertelsTerhal2018,CohenKimBartlettBrown2021,Cowtan2026,zheng2025highratesurgeryconstantoverheadlogical,BaspinBerentCohen2025,QGPU2026}; the construction below supplies the packet-labeled coordinates on which such protocol layers can build.  This viewpoint is also complementary to recent work on cohomological invariants and logical gates in qLDPC codes~\cite{LiShao2026}.

\vspace{5pt} \noindent The paper is organized as follows. \secref{sec:pivots-to-packets} identifies the obstruction to global HGP-style row reduction and introduces Frobenius packets, primitive idempotents, and the reciprocal pairing rule. \secref{sec:finite-abelian-basis} proves the finite Abelian addressable-basis theorem. \secref{sec:representatives-certificates} turns the theorem into explicit representatives, basis-width certificates, and verified distance witnesses. \secref{sec:applications} applies the construction to quasi-cyclic construction examples, including high-rate codes, and to a noncyclic two-variable example, and proves the whole-orbit erasure decomposition. \secref{sec:discussion} concludes with rank-profile design, certificate-search diagnostics, neutral-atom motivation, and open directions.

\begin{table*}[t]
\renewcommand{\arraystretch}{1.12}
\begin{tabularx}{\textwidth}{L{0.2\textwidth}L{0.28\textwidth}Y}
\toprule
Method & What it gives & Addressability structure \\
\midrule
Binary Gaussian elimination & Logical quotient representatives of the expanded binary CSS code. & Valid representatives with supports; labels depend on binary coordinate pivots rather than packets. \\ \midrule
Symplectic Gram--Schmidt~\cite{Wilde2009,Williamson_2026} & A conjugate binary logical basis. & Valid $X/Z$ pairing after binary-level processing; packet and K\"unneth labels are not identified by itself. \\ \midrule
This work & Packet-labeled LP logical representatives. & Packet and K\"unneth-summand labels, reciprocal-packet partners with trace-dual coefficient bases, and idempotent lifts to binary representatives. \\
\bottomrule
\end{tabularx}
\caption{{\bf Logical-coordinate information exposed by different constructions.} Generic binary methods correctly produce logical representatives, their supports, and, after pairing, conjugate bases.  The distinction here is the coordinate system: the spectral construction attaches packet labels, K\"unneth-summand labels, and reciprocal-packet partners with trace-dual coefficient bases to its representatives.}
\label{tab:logical-coordinate-comparison}
\end{table*}

\section{Spectral Decomposition}
\label{sec:pivots-to-packets}

This section isolates the algebraic obstruction that prevents a direct HGP-style logical-basis construction for lifted products, and then introduces the spectral coordinates that replace it.  Appendix~\ref{app:reader-guide} reviews hypergraph-product and lifted-product codes, together with the mathematical tools used below; here we keep only the main mechanism.

\subsection{The pivot picture and its failure over a ring}

For an ordinary binary input matrix, the HGP logical basis is built from row reduction over a field.  If a full-row-rank matrix over $\F_2$ has $r$ rows, one chooses $r$ pivot columns, solves for the pivot coordinates in terms of the free coordinates, and obtains kernel vectors and quotient-complement coordinates.  The product construction then tensors these single-factor data into a grid of logical labels.

A direct lifted-product analogue would require one global pivot block over a group algebra.  In the single-variable quasi-cyclic setting this ring is $R_\ell=\F_2[\Z_\ell]=\F_2[x]/(x^\ell+1)$.
Suppose $A\in R_\ell^{r\times n}$ and one chooses a set $P$ of $r$ seed columns.  Writing the corresponding square block as $A_P$, a global systematic form would need
\begin{align}
A=\begin{bmatrix}A_P&A_F\end{bmatrix}, \quad
A_P^{-1}A=\begin{bmatrix}I&A_P^{-1}A_F\end{bmatrix}.
\end{align}
The issue is not whether $A_P$ has a nonzero determinant as a polynomial class, but whether that determinant is a unit of $R_\ell$.  When $\ell$ is odd, $x^\ell+1$ is square-free and the Chinese remainder theorem (CRT) decomposes $R_\ell$ into finite fields:
\begin{align}
x^\ell+1=\prod_{g\in\mathcal G_\ell}g(x), \quad
R_\ell\simeq\prod_{g\in\mathcal G_\ell}K_g,
\end{align}
where $K_g=\F_2[x]/(g)$ is a field. Accordingly, $A_P$ is invertible over $R_\ell$ if and only if $\det A_P$ is nonzero in every $K_g$.
Equivalently, in the single-variable quasi-cyclic setting, $\gcd(\det A_P,x^\ell+1)=1$.

Dense monomial seeds fail this global-pivot test in the simplest packet.  Reducing modulo $x+1$ is the same as setting $x=1$, so every monomial entry $x^a$ becomes $1$.  A dense monomial $r\times n$ seed reduces in the $x+1$ packet to the all-ones matrix $J_{r\times n}$.  For $r\geq2$, every $r\times r$ minor of this matrix vanishes.  Every candidate pivot determinant is zero in the $x+1$ packet and is not a unit of $R_\ell$.  The obstruction is the absence of a single pivot block that works simultaneously in all packets.

This is a coordinate obstruction.  It means that the logical degrees of freedom are distributed across packets, and that one should not try to expose them by forcing one global module row reduction.  The replacement is to decompose the lift ring first and do ordinary finite-field linear algebra packet by packet.

\subsection{Spectral decomposition}

We now define the CRT/Frobenius packets used for finite Abelian lifts.  Let
\begin{align} \label{eq:group_def}
G=\mathbb Z_{\ell_1}\times\cdots\times\mathbb Z_{\ell_m}, \quad
|G|\text{ odd}, \quad
R_G=\F_2[G].
\end{align}
The odd-order assumption implies that $\operatorname{char}\F_2$ does not divide $|G|$, so $R_G$ is semisimple.  Since $G$ is Abelian, this semisimple algebra is a product of finite fields~\cite{DummitFoote2004,LidlNiederreiter1997}.

Let $\overline{\F}_2$ denote an algebraic closure of $\F_2$, and let
\begin{align}
\widehat G(\overline{\F}_2)=\operatorname{Hom}(G,\overline{\F}_2^\times)
\end{align}
be the set of character modes, equivalently the Fourier modes of the regular $G$-action after extending scalars to $\overline{\F}_2$.

For $G$ as in \eqnref{eq:group_def}, a character may be written as a tuple
\begin{align}
\omega=(\zeta_1,\ldots,\zeta_m),\qquad \zeta_i^{\ell_i}=1.
\end{align}
The Frobenius map acts by squaring character values:
\begin{align}
\operatorname{Fr}(\omega)=\omega^2=(\zeta_1^2,\ldots,\zeta_m^2).
\end{align}
A Frobenius packet, or simply a packet, is an orbit
\begin{align}
\Omega=\{\omega,\omega^2,\omega^4,\ldots\}.
\end{align}
Equivalently, a packet is one $\F_2$-rational collection of Fourier modes: individual characters may live over extension fields, but binary coefficients force all Frobenius conjugates to appear together.  In classical coding terminology, a packet is a $2$-cyclotomic coset of characters.

For a packet $\Omega$, let $K_\Omega$ be the finite field generated by the coordinates of any representative character in $\Omega$, and let $e_\Omega\in R_G$ be the corresponding primitive central idempotent attached to that packet (see \appref{app:idempotent}).  The semisimple CRT decomposition can be written as
\begin{align}
R_G\simeq\prod_\Omega K_\Omega, \quad
1=\sum_\Omega e_\Omega, \quad
e_\Omega e_\Lambda=\delta_{\Omega\Lambda}e_\Omega.
\end{align}
Multiplication by $e_\Omega$ is the algebraic projector onto the $\Omega$ packet subspace.  These projectors are the coordinates in which the global ring obstruction from the previous subsection becomes ordinary finite-field linear algebra.

\subsection{Packet evaluation}

Projecting a seed matrix to a packet turns the group-algebra problem into ordinary linear algebra over a finite field.  For a seed map
\begin{align}
A:R_G^{n_A}\to R_G^{r_A},
\end{align}
the $\Omega$ packet is the map
\begin{align}
A_\Omega:K_\Omega^{n_A}\to K_\Omega^{r_A}
\end{align}
obtained by evaluating every group-algebra entry of $A$ on a representative character in $\Omega$.  Choosing another representative in the same Frobenius orbit gives a Frobenius-conjugate matrix, hence the same ranks, kernel and quotient dimensions, and binary packet subspace after lifting.

Row reduction is now performed over the field $K_\Omega$.  Every nonzero pivot is invertible inside the packet.  Different packets may use different pivot columns, kernel bases, and quotient complements.  This packet-by-packet freedom is precisely what is unavailable in a single global systematic form over $R_G$.

At the module level, the seed map restricts as
\begin{align}
A:e_\Omega R_G^{n_A}\to e_\Omega R_G^{r_A},
\end{align}
and, after the identification $e_\Omega R_G\simeq K_\Omega$, this restricted map is exactly $A_\Omega$.  Packet evaluation is the finite-field coordinate description of the original binary block-circulant map restricted to one invariant packet subspace.

\subsection{Two packet rules}
\label{subsec:two-packet-rules}

The construction uses two packet rules.  The first governs product representatives, and the second governs conjugate $X/Z$ pairing.

\begin{lemma}[Same-packet tensor rule]
\label{lem:same-packet-tensor}
If $M$ and $N$ are $\RG$-modules, then
\begin{align}
(e_\Omega M)\otimes_{\RG}(e_\Lambda N)=0
\end{align}
whenever $\Omega\neq\Lambda$.  Only diagonal packet components survive in the tensor product over $\RG$: an elementary tensor contribution must use the same packet on both tensor legs, while a general product representative may be a sum over such same-packet contributions.
\end{lemma}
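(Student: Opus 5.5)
The plan is the standard "move the idempotent across the tensor sign" argument. It uses only that $\RG$ is commutative and that the packet idempotents are orthogonal, $e_\Omega e_\Lambda=\delta_{\Omega\Lambda}e_\Omega$, as recorded in the CRT decomposition above.

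First, I will reduce to elementary tensors. The module $(e_\Omega M)\otimes_{\RG}(e_\Lambda N)$ is generated as an abelian group by elementary tensors $u\otimes v$ with $u\in e_\Omega M$ and $v\in e_\Lambda N$. It therefore suffices to show that every such $u\otimes v$ vanishes.

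Second, I will note that $e_\Omega$ acts as the identity on $e_\Omega M$. Indeed, if $u=e_\Omega m$, then $e_\Omega u=e_\Omega^2 m=e_\Omega m=u$ by idempotence. Similarly $e_\Lambda v=v$.

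Third, I will use $\RG$-balancedness of the tensor product. Since $\RG$ is commutative, $M$ and $N$ are both left and right modules, and scalars may be moved across the tensor sign. This gives
\begin{align}
u\otimes v=(e_\Omega u)\otimes v=u\otimes(e_\Omega v)=u\otimes(e_\Omega e_\Lambda v)=u\otimes 0=0,
\end{align}
where the last step uses $e_\Omega e_\Lambda=0$ for $\Omega\neq\Lambda$. Hence the whole tensor product vanishes.

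For the interpretive clause of the lemma, I will decompose $M=\bigoplus_\Omega e_\Omega M$ and $N=\bigoplus_\Lambda e_\Lambda N$ using $1=\sum_\Omega e_\Omega$ and orthogonality. Tensor products distribute over finite direct sums, so
\begin{align}
M\otimes_{\RG}N\simeq\bigoplus_{\Omega,\Lambda}(e_\Omega M)\otimes_{\RG}(e_\Lambda N)=\bigoplus_\Omega(e_\Omega M)\otimes_{\RG}(e_\Omega N).
\end{align}
Consequently a general product representative is a sum of same-packet contributions, while an elementary tensor is nonzero only if it has a nonzero component on a common packet.

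No step is a genuine obstacle. The only point needing care is to state explicitly that balancedness permits moving $e_\Omega$ from the left factor to the right factor. This relies on commutativity of $\RG$, which holds because $G$ is Abelian, or alternatively on the idempotents being central. Semisimplicity and the odd-order hypothesis are not needed beyond the existence of the orthogonal idempotents.
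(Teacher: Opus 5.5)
Your proposal is correct and is essentially the paper's argument: the paper's proof is the one-line computation $m\otimes n=m\otimes e_\Omega n=m\otimes e_\Omega e_\Lambda n=0$, which moves the idempotent across the tensor sign by balancedness and then uses orthogonality. Your added decomposition $M\otimes_{\RG}N\simeq\bigoplus_\Omega(e_\Omega M)\otimes_{\RG}(e_\Omega N)$ makes explicit the interpretive clause that the paper leaves informal.
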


\begin{proof}
For $m\in e_\Omega M$ and $n\in e_\Lambda N$,
\begin{align}
m\otimes n
=
m\otimes e_\Omega n
=
m\otimes e_\Omega e_\Lambda n
=
0.
\end{align}
\end{proof}

\begin{lemma}[Reciprocal binary pairing]
\label{lem:reciprocal-binary-pairing}
Let $\Omega^\vee$ denote the packet obtained by inverting the characters in $\Omega$.  If
\begin{align}
a\in e_\Omega \RG^m,
\quad
b\in e_\Lambda \RG^m,
\end{align}
then the physical binary pairing between $a$ and $b$ can be nonzero only when $\Lambda=\Omega^\vee$.
\end{lemma}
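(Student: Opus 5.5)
The plan is to express the physical binary pairing through a ring-valued pairing on $\RG^m$, and then move idempotents from one argument to the other using the involution $\iota\colon g\mapsto g^{-1}$. For $a=(a_1,\dots,a_m)$ and $b=(b_1,\dots,b_m)$ in $\RG^m$, each viewed as a binary vector of length $m|G|$ indexed by $(i,g)$, the binary inner product is
\begin{align}
\langle a,b\rangle_{\F_2}=\epsilon\Big(\sum_{i=1}^m a_i\,\iota(b_i)\Big),
\end{align}
where $\epsilon\colon\RG\to\F_2$ extracts the coefficient of the identity element. This holds because the identity coefficient of $a_i\iota(b_i)$ is $\sum_g a_{i,g}b_{i,g}$. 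The map $\iota$ is an $\F_2$-algebra automorphism of $\RG$, since $G$ is Abelian; it is the "transpose of the cyclic shift sends $x$ to $x^{-1}$" statement in algebraic form.

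The key step is to identify how $\iota$ acts on the primitive idempotents: I will show $\iota(e_\Lambda)=e_{\Lambda^\vee}$. Since $\iota$ is a ring automorphism, it permutes the primitive central idempotents. To see which permutation, evaluate at characters. The idempotent $e_\Lambda$ is characterized by $\chi(e_\Lambda)=1$ for $\chi\in\Lambda$ and $\chi(e_\Lambda)=0$ otherwise, where $\chi$ is extended linearly to $\RG\to\overline{\F}_2$ (Appendix~\ref{app:idempotent}). We have $\chi(\iota(u))=\chi^{-1}(u)$, because $\chi(g^{-1})=\chi(g)^{-1}$. Hence $\chi(\iota(e_\Lambda))=1$ exactly when $\chi^{-1}\in\Lambda$, that is, when $\chi\in\Lambda^\vee$. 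The set $\Lambda^\vee$ is again a Frobenius orbit, because inversion commutes with squaring. By semisimplicity an element is determined by its character values, so $\iota(e_\Lambda)=e_{\Lambda^\vee}$.

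With this, the computation is short. Write $a_i=e_\Omega a_i$ and $b_i=e_\Lambda b_i$. Then
\begin{align}
a_i\,\iota(b_i)=e_\Omega a_i\,e_{\Lambda^\vee}\iota(b_i)=e_\Omega e_{\Lambda^\vee}\,a_i\iota(b_i),
\end{align}
using commutativity. This vanishes unless $\Omega=\Lambda^\vee$, which is equivalent to $\Lambda=\Omega^\vee$. Applying $\epsilon$ shows that the binary pairing is zero in every other case.

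I expect the main obstacle to be the idempotent identification in the middle step, specifically justifying that an element of $\RG$ is determined by its character values and that the packet idempotent has the stated evaluation profile. Both follow from the CRT isomorphism $\RG\simeq\prod_\Omega K_\Omega$ recalled earlier together with the appendix definition of $e_\Omega$. A fallback that avoids characters is to write $e_\Lambda=f(x)$ in the cyclic case and note $\iota(e_\Lambda)=f(x^{-1})$. This is an idempotent whose annihilating factor is the reciprocal polynomial $g^*$, and the general Abelian case then follows coordinatewise.
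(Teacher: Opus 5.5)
Your proposal is correct and takes the same route as the paper. Both write the binary dot product as the identity coefficient of $\sum_i a_i\,\iota(b_i)$, use that the inversion involution sends $e_\Lambda$ to $e_{\Lambda^\vee}$, and conclude from $e_\Omega e_{\Lambda^\vee}=0$ for $\Omega\neq\Lambda^\vee$. Your character-evaluation argument for $\iota(e_\Lambda)=e_{\Lambda^\vee}$ supplies the justification for the one step the paper only asserts, so the polynomial fallback is unnecessary.
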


\begin{proof}
The physical CSS commutation pairing is the binary dot product.  In one cyclic coordinate it can be written as
\begin{align}
\langle a,b\rangle_{\operatorname{bin}}
=
[x^0]\sum_i a_i(x)b_i(x^{-1}).
\end{align}
Here $[x^0]$ is coefficient-extraction notation: if $f(x)\in R_\ell$ is represented as $f(x)=c_0+c_1x+\cdots+c_{\ell-1}x^{\ell-1}$, then $[x^0]f(x)=c_0$.  The sum is over the block coordinates of the vector, i.e. if $a,b\in R_\ell^m$, then $i=1,\ldots,m$.  This gives the binary dot product because the constant term of $a_i(x)b_i(x^{-1})$ is $\sum_t a_{i,t}b_{i,t}$.
The inverse appears because transposing a cyclic shift sends $x$ to $x^{-1}$.  For a general Abelian group, this is the group-inversion involution $u\mapsto u^{-1}$.

Algebraically, the involution sends $e_\Lambda$ to $e_{\Lambda^\vee}$, and the pairing contains the product $e_\Omega e_{\Lambda^\vee}$.  This product vanishes unless $\Omega=\Lambda^\vee$, equivalently $\Lambda=\Omega^\vee$.
\end{proof}

In the single-variable quasi-cyclic case, $\Omega^\vee$ corresponds to the reciprocal factor $g^\vee$, whose roots are the inverses of the roots of $g$.  Elementary tensor contributions are same-packet, a general representative may sum over packets, and $X/Z$ pairing is reciprocal-packet.

Applying ordinary finite-field homology in each packet, forming same-packet tensor contributions, lifting representatives back with primitive idempotents, and pairing reciprocal packets by trace duality assembles the basis theorem of the next section: the spectral construction restores the HGP kernel-and-quotient picture after the group algebra has been resolved into Frobenius packets.

\section{Finite Abelian basis theorem}
\label{sec:finite-abelian-basis}

We now state the main construction for the lifted product codes $\LP(A,B)$.
The single-variable quasi-cyclic formulas used in the examples will be treated later as an implementation layer. Throughout this section, we work in the setting of \eqnref{eq:group_def} where $G$ is an odd-order Abelian group.
The packets $\Omega$, fields $K_\Omega$, idempotents $e_\Omega$, and reciprocal packets $\Omega^\vee$ are those of Sec.~\ref{sec:pivots-to-packets}.

\subsection{The product complex}

We refer to \appref{app:LP_code} for the review of lifted-product codes.  Let
\begin{align}
A:\RG^{n_A}\to\RG^{r_A},
\quad
B:\RG^{n_B}\to\RG^{r_B}
\end{align}
be two seed maps.  Let $\bar{\cdot}:\RG\to\RG$ denote the group-inversion involution, so $\overline{\sum_g a_g g}=\sum_g a_g g^{-1}$ and, in the single-variable quasi-cyclic case, $\overline{p(x)}=p(x^{-1})$.  For a matrix $M$ over $\RG$, define the group-algebra adjoint by
\begin{align}
M^\dagger=\overline{M}^{\transpose},
\end{align}
where the bar is applied entrywise.  Here $\transpose$ denotes the ordinary transpose of the displayed module indices.  If $\rho$ denotes regular binary expansion, then $\rho(M^\dagger)=\rho(M)^\transpose$.  The symbol $\dagger$ is the physical adjoint after binary expansion, whereas $\transpose$ alone is only an index transpose unless the matrix is already binary.  The lifted-product code $\LP(A,B)$ is defined from the product complex
\begin{align}
Q_2
\xrightarrow{\partial_2}
Q_1
\xrightarrow{\partial_1}
Q_0,
\end{align}
where
\begin{align}
Q_2&=\RG^{n_A}\otimes_{\RG}\RG^{n_B},\\
Q_1&=(\RG^{n_A}\otimes_{\RG}\RG^{r_B})
\oplus
(\RG^{r_A}\otimes_{\RG}\RG^{n_B}),\\
Q_0&=\RG^{r_A}\otimes_{\RG}\RG^{r_B}.
\end{align}
The boundary maps $\rd_{1,2}$ are given in \eqnref{eq:boundary_map}.
After binary expansion, the number of physical qubits is
\begin{align}
n=|G|(n_A r_B+r_A n_B).
\end{align}
We take $H_X$ from the binary expansion of $\partial_1$, and $H_Z$ from the binary transpose of the expansion of $\partial_2$, equivalently from the expansion of $\partial_2^\dagger$.  The $Z$-logical space is the degree-one homology
\begin{align}
\cL_Z=\cH_1=\ker\partial_1/\Im\partial_2,
\end{align}
while the $X$-logical space is the dual cohomology
\begin{align}
\cL_X=\cH^1=\ker\partial_2^\dagger/\Im\partial_1^\dagger.
\end{align}

\noindent\textbf{Addressability convention.}
In this paper, a conjugate logical basis is addressable if its labels are reproducible, meaning determined by the construction and by fixed conventions rather than by arbitrary pivot or coordinate-order choices; each label has explicit stabilizer-commuting binary $X$- and $Z$-representatives, and the final binary Gram matrix is verified:
\begin{align}
Z_\lambda X_\mu^\transpose=\delta_{\lambda\mu}.
\end{align}
Here $Z_\lambda$ and $X_\mu$ denote binary support vectors for the corresponding $Z$- and $X$-type logical Pauli representatives, with Pauli phases suppressed; the displayed product is the mod-two commutation pairing.
The spectral construction below gives these labels from the packet index, finite-field kernel and quotient coordinates, and a coefficient-basis coordinate.

\subsection{Packet homology}

Projecting the product complex to a packet $\Omega$ identifies $e_\Omega\RG$ with the finite field $K_\Omega$.  The seed maps become ordinary finite-field matrices
\begin{align}
A_\Omega:K_\Omega^{n_A}\to K_\Omega^{r_A}, \quad
B_\Omega:K_\Omega^{n_B}\to K_\Omega^{r_B}.
\end{align}
Set
\begin{align}
s_A^\Omega=\rankop_{K_\Omega}A_\Omega,\quad
s_B^\Omega=\rankop_{K_\Omega}B_\Omega,\quad
d_\Omega=[K_\Omega:\F_2].
\end{align}
The packet product complex is an ordinary finite-field product complex.  Its degree-one homology therefore has the same K\"unneth decomposition as a hypergraph-product complex over the field $K_\Omega$:
\begin{align}
\cH_{1,\Omega} \simeq & \ker A_\Omega\otimes_{K_\Omega}\coker B_\Omega \nonumber \\
& \oplus  \coker A_\Omega\otimes_{K_\Omega}\ker B_\Omega.
\label{eq:packet-homology}
\end{align}
Each packet contribution has finite-field dimension
\begin{align}
\dim_{K_\Omega}\cH_{1,\Omega}
&=
(n_A-s_A^\Omega)(r_B-s_B^\Omega) \nonumber \\
&+
(r_A-s_A^\Omega)(n_B-s_B^\Omega).
\label{eq:packet-dimension}
\end{align}

\begin{lemma}[Reciprocal trace pairing]
\label{lem:reciprocal-trace-pairing}
Let $\tau_\Omega:K_{\Omega^\vee}\to K_\Omega$ be the field isomorphism induced by group inversion on characters.  Define
\begin{align}
\langle a,b\rangle_{\Omega,\operatorname{tr}}
=
\operatorname{Tr}_{K_\Omega/\F_2}\bigl(a\,\tau_\Omega(b)\bigr),
\quad
a\in K_\Omega,\quad b\in K_{\Omega^\vee}.
\end{align}
This pairing is nondegenerate over $\F_2$.  For any $\F_2$-basis $\{\beta_a\}$ of $K_\Omega$, there is a unique trace-dual basis $\{\beta_a^\#\}$ in the reciprocal packet field $K_{\Omega^\vee}$ satisfying
\begin{align}
\operatorname{Tr}_{K_\Omega/\F_2}\bigl(\beta_a\,\tau_\Omega(\beta_b^\#)\bigr)
=
\delta_{ab}.
\end{align}
Equivalently, this trace-dual basis in $K_{\Omega^\vee}$ represents the dual coordinate functionals on $K_\Omega$ under the physical binary pairing.
\end{lemma}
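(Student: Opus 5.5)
The plan is to reduce everything to the standard nondegeneracy of the trace form on a finite separable extension, and then to link the abstract pairing to the physical binary pairing through Lemma~\ref{lem:reciprocal-binary-pairing}.

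\emph{Step 1: the isomorphism $\tau_\Omega$.} Group inversion $u\mapsto u^{-1}$ defines the involution $\bar{\cdot}$ on $\RG$. This involution is an $\F_2$-algebra automorphism because $G$ is Abelian. It permutes the primitive idempotents and sends $e_{\Omega^\vee}$ to $e_\Omega$. Restricting it gives a field isomorphism
\begin{align}
e_{\Omega^\vee}\RG\simeq K_{\Omega^\vee}\to e_\Omega\RG\simeq K_\Omega .
\end{align}
On characters, evaluating $\bar f$ at $\omega$ is the same as evaluating $f$ at $\omega^{-1}$. This is the $\tau_\Omega$ of the statement, and it is well defined up to the Frobenius ambiguity already noted in the packet-evaluation discussion. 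Consequently the map $b\mapsto\tau_\Omega(b)$ is an $\F_2$-linear bijection $K_{\Omega^\vee}\to K_\Omega$.

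\emph{Step 2: nondegeneracy and the dual basis.} The extension $K_\Omega/\F_2$ is a finite extension of finite fields, hence separable. Therefore the trace form $(a,c)\mapsto\operatorname{Tr}_{K_\Omega/\F_2}(ac)$ is a nondegenerate symmetric $\F_2$-bilinear form on $K_\Omega$.

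This uses the standard fact that $\operatorname{Tr}$ is not identically zero: it is a polynomial of degree $2^{d_\Omega-1}<|K_\Omega|$. So if $a\neq0$, some $c$ gives $\operatorname{Tr}(ac)\neq0$, namely $c=a^{-1}c_0$ with $\operatorname{Tr}(c_0)=1$.

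Composing the trace form with the bijection $\tau_\Omega$ shows that $\langle\cdot,\cdot\rangle_{\Omega,\operatorname{tr}}$ is nondegenerate. Since both spaces have dimension $d_\Omega$, the induced map $K_{\Omega^\vee}\to\operatorname{Hom}_{\F_2}(K_\Omega,\F_2)$ is injective, hence an isomorphism.

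Now fix an $\F_2$-basis $\{\beta_a\}$ of $K_\Omega$. Define $\beta_b^\#$ as the preimage of the $b$-th coordinate functional under this isomorphism. This gives existence and uniqueness of the trace-dual basis, and the $\beta_b^\#$ are linearly independent, so they form a basis.

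\emph{Step 3: the physical interpretation.} This is the step I expect to be the main obstacle. We must show that, for $a\in e_\Omega\RG$ and $b\in e_{\Omega^\vee}\RG$, the binary pairing $[u^0](a\bar b)$ equals $\operatorname{Tr}_{K_\Omega/\F_2}(a\,\tau_\Omega(b))$ under the identifications. The product $a\bar b$ lies in $e_\Omega\RG\simeq K_\Omega$, so it suffices to show that the coefficient-of-identity functional $[u^0]$ restricted to $e_\Omega\RG$ equals $\operatorname{Tr}_{K_\Omega/\F_2}$.

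I would prove this through the Fourier inversion formula over $\overline{\F}_2$:
\begin{align}
[u^0]f=|G|^{-1}\sum_{\chi}\chi(f).
\end{align}
Here $|G|$ is odd, so $|G|^{-1}=1$ in $\F_2$. For $f\in e_\Omega\RG$, only the characters $\chi\in\Omega$ contribute. These characters are exactly the Frobenius conjugates of one representative evaluation, so the sum is $\operatorname{Tr}_{K_\Omega/\F_2}$ of the packet value.

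The delicate points are two. First, one must check that the packet orbit has exactly $d_\Omega$ elements, so that the sum over $\Omega$ is precisely the trace with no extra multiplicity. Second, one must check that the choice of representative character is compatible with $\tau_\Omega$.

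Given this identity, the dual functionals represented by the $\beta_b^\#$ are exactly the coordinate functionals of the physical binary pairing, which gives the final ``equivalently'' clause.
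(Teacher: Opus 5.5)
Your proposal is correct. Steps 1 and 2 are the paper's proof. The paper only observes that the trace form of the separable extension $K_\Omega/\F_2$ is nondegenerate, that composing the second argument with the isomorphism $\tau_\Omega$ keeps it nondegenerate, and that the trace-dual basis is the ordinary dual basis of the resulting form. Your degree count on the trace polynomial is just an elementary proof of the separability fact it cites.

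Your Step 3 goes beyond the paper. The paper's proof never argues the final ``equivalently'' clause; it later simply asserts it in the proof of Lemma~\ref{lem:packet-homology-cohomology-pairing} (``the binary expansion pairing on a free packet module is the coordinatewise reciprocal trace pairing''). Your character-sum argument does prove it, and both points you flag close without difficulty.

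\emph{Orbit size.} The Frobenius orbit of $\omega$ has exactly $[K_\Omega:\F_2]$ elements. This is because $\omega^{2^t}=\omega$ holds iff every value of $\omega$ lies in $\F_{2^t}$, i.e.\ iff $K_\Omega\subseteq\F_{2^t}$. Hence $\sum_{\chi\in\Omega}\chi(f)=\sum_{j<d_\Omega}\omega(f)^{2^j}=\operatorname{Tr}_{K_\Omega/\F_2}(\omega(f))$, with each conjugate counted once, and the trace does not depend on the chosen embedding.

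\emph{Compatibility with $\tau_\Omega$.} This is automatic if you define $\tau_\Omega$ as in your Step 1, by transporting the involution through the two packet identifications. Then $\omega(\bar b)=\omega^{-1}(b)$ is by construction $\tau_\Omega$ applied to the $\Omega^\vee$-coordinate of $b$, whatever representative was chosen for $\Omega^\vee$. For $\RG=R_\ell$ this is exactly the paper's $\tau_g$ induced by $x\mapsto x^{-1}$.

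What your extra step buys is a genuine proof of a fact the paper only asserts. Namely, the abstract trace-dual basis, lifted by $e_\Omega$ and $e_{\Omega^\vee}$, has binary Gram matrix $\delta_{ab}$. This is precisely what the conjugate-pairing step of Theorem~\ref{thm:finite-abelian-basis} relies on.
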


\begin{proof}
Finite fields are separable over $\F_2$, so the ordinary trace form $(u,v)\mapsto\operatorname{Tr}_{K_\Omega/\F_2}(uv)$ on $K_\Omega$ is nondegenerate~\cite{LidlNiederreiter1997}.  Composing the second argument with the field isomorphism $\tau_\Omega$ gives a nondegenerate pairing between $K_\Omega$ and $K_{\Omega^\vee}$.  The trace-dual basis is then the ordinary dual basis for this nondegenerate bilinear form.
\end{proof}

\noindent\textbf{Example.}
Consider the self-reciprocal packet $g=x^2+x+1$ in $R_3=\F_2[x]/(x^3+1)$, and write $K_g=\F_2[\alpha]/(\alpha^2+\alpha+1)$ with $\alpha=x\bmod g$.  The reciprocal map $\tau_g$ sends $\alpha$ to $\alpha^{-1}=\alpha^2$, and the field trace is $\operatorname{Tr}_{K_g/\F_2}(z)=z+z^2$.  For the basis $\{\beta_1,\beta_2\}=\{1,\alpha\}$, the trace-dual basis in the reciprocal packet is $\{\beta_1^\#,\beta_2^\#\}=\{\alpha,1\}$, since
\begin{align}
\operatorname{Tr}_{K_g/\F_2}\bigl(\beta_a\,\tau_g(\beta_b^\#)\bigr)
=
\delta_{ab}.
\end{align}
This is the coefficient-basis operation used below when a representative labeled by $\alpha$ is paired with a conjugate representative labeled by $\alpha^\#$ in the reciprocal packet.

\begin{lemma}[Packet homology--cohomology pairing]
\label{lem:packet-homology-cohomology-pairing}
For each packet $\Omega$, the physical binary pairing induces a perfect pairing between the homology packet $\cH_{1,\Omega}$ and the reciprocal cohomology packet $\cH^1_{\Omega^\vee}$.  More explicitly, if
$A^\dagger_{\Omega^\vee}:=(A^\dagger)_{\Omega^\vee}$ and
$B^\dagger_{\Omega^\vee}:=(B^\dagger)_{\Omega^\vee}$, then
\begin{align}
\cH^1_{\Omega^\vee}\simeq&
\coker A^\dagger_{\Omega^\vee}\otimes_{K_{\Omega^\vee}}\ker B^\dagger_{\Omega^\vee}
\nonumber\\
&\oplus
\ker A^\dagger_{\Omega^\vee}\otimes_{K_{\Omega^\vee}}\coker B^\dagger_{\Omega^\vee}.
\label{eq:packet-cohomology}
\end{align}
Under the decompositions in Eqs.~\eqref{eq:packet-homology} and \eqref{eq:packet-cohomology}, the pairing is the direct sum of the tensor-product pairings obtained from
\begin{align}
\ker A_\Omega
&\leftrightarrow
\coker A^\dagger_{\Omega^\vee},
\quad
\coker B_\Omega
\leftrightarrow
\ker B^\dagger_{\Omega^\vee},
\label{eq:packet-dualities-left}\\
\coker A_\Omega
&\leftrightarrow
\ker A^\dagger_{\Omega^\vee},
\quad
\ker B_\Omega
\leftrightarrow
\coker B^\dagger_{\Omega^\vee},
\label{eq:packet-dualities-right}
\end{align}
together with the reciprocal trace form of Lemma~\ref{lem:reciprocal-trace-pairing} on coefficient fields.
\end{lemma}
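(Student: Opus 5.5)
The plan is to reduce everything to a single finite-dimensional duality statement over one packet field, and then transport it through the Künneth decomposition.

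\textbf{Step 1: packet decomposition of the pairing.} I start from Lemma~\ref{lem:reciprocal-binary-pairing}. The binary pairing between $Q_1$ and its dual chain space is $\langle a,b\rangle_{\rm bin}=\epsilon\bigl(\sum_i a_i\,\overline{b_i}\bigr)$, where $\epsilon$ extracts the identity coefficient. It pairs $e_\Omega Q_1$ only with $e_{\Omega^\vee}Q_1$. Since $\partial_1,\partial_2$ are $\RG$-linear, they commute with the central idempotents. Homology and cohomology therefore split as $\cH_1=\bigoplus_\Omega\cH_{1,\Omega}$ and $\cH^1=\bigoplus_\Omega\cH^1_{\Omega}$. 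The packet cochain complex at $\Omega^\vee$ is the one built from $A^\dagger,B^\dagger$ evaluated on $\Omega^\vee$; this gives \eqnref{eq:packet-cohomology} by the field Künneth theorem, exactly as for \eqnref{eq:packet-homology}. The global binary pairing is already known to be perfect between $\cH_1$ and $\cH^1$ (standard CSS/linear-algebra duality over $\F_2$). It is block-diagonal with respect to the pairs $(\Omega,\Omega^\vee)$, so each block $\cH_{1,\Omega}\times\cH^1_{\Omega^\vee}$ must itself be perfect. This settles the first claim without further computation.

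\textbf{Step 2: identify the block with a trace form.} For the explicit description, I restrict the pairing to $e_\Omega\RG\times e_{\Omega^\vee}\RG$. Using $e_\Omega\RG\simeq K_\Omega$ and the inversion isomorphism $\tau_\Omega$, I will show that $\epsilon(a\bar b)=\operatorname{Tr}_{K_\Omega/\F_2}(c_\Omega\, a\,\tau_\Omega(b))$ for a fixed nonzero constant $c_\Omega$. To do this, I expand $\epsilon(f)=|G|^{-1}\sum_\chi\chi(f)$, which is valid since $|G|$ is odd. On the $\Omega$ component, the character sum reduces to a Frobenius-orbit sum, i.e.\ a trace. The constant $c_\Omega=|G|^{-1}=1$ in $\F_2$, possibly up to the normalization of the identification. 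Any such unit can be absorbed into the identification, so it matches Lemma~\ref{lem:reciprocal-trace-pairing}.

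\textbf{Step 3: componentwise duality.} Vectorwise, the pairing becomes $\operatorname{Tr}\bigl(\sum_i a_i\tau_\Omega(b_i)\bigr)$, the trace of the standard $K_\Omega$-bilinear dot product after the twist $\tau_\Omega$. Under this form, $(A^\dagger)_{\Omega^\vee}$ corresponds via $\tau_\Omega$ to the transpose $A_\Omega^{\transpose}$. This is the key check: bar commutes with evaluation up to $\tau$. Standard finite-field linear algebra then gives the dualities $\ker A_\Omega\leftrightarrow\coker A_\Omega^{\transpose}$ and $\coker A_\Omega\leftrightarrow \ker A_\Omega^{\transpose}$, and likewise for $B$, which are \eqnref{eq:packet-dualities-left}--\eqnref{eq:packet-dualities-right}. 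I then check that the product pairing on $Q_1=(\RG^{n_A}\otimes\RG^{r_B})\oplus(\RG^{r_A}\otimes\RG^{n_B})$ is the tensor product of factor pairings, summand by summand. With this, the Künneth isomorphisms intertwine the homology pairing with the direct sum of the tensored $K_\Omega$-bilinear pairings. Taking the trace gives the stated form.

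\textbf{Main obstacle.} The delicate step is Step 3's compatibility: showing that the Künneth representatives, of the form $\ker\otimes$(complement chosen for $\coker$), pair through the tensor pairing with no cross terms between the two summands. Cross terms could arise because $\coker$ representatives are not canonical. My first attempt is to argue at the level of homology classes. The pairing of a $\ker A\otimes\coker B$ class with a $\ker A^\dagger\otimes\coker B^\dagger$ class vanishes, because $\ker A_\Omega$ pairs trivially with $\Im$-type data on the left leg. The remaining cross-term obstruction is removed by well-definedness of the pairing on (co)homology.
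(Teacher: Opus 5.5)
Your argument is correct, and Step 3 is essentially the paper's proof. The adjoint identity $\langle A_\Omega u,v\rangle=\langle u,A^\dagger_{\Omega^\vee}v\rangle$ for the reciprocal trace form gives $(\ker A_\Omega)^\perp=\Im A^\dagger_{\Omega^\vee}$ and $(\Im A_\Omega)^\perp=\ker A^\dagger_{\Omega^\vee}$, likewise for $B$, and tensoring these dualities gives the stated form.

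You differ from the paper in how you obtain perfectness:
\begin{itemize}
\item The paper gets it in the same stroke as the explicit form. It tensors the perfect factor pairings and notes that the two K\"unneth summands do not cross-pair.
\item You get it abstractly. You use the perfectness of the global CSS pairing $\ker H_X/\row(H_Z)\times\ker H_Z/\row(H_X)\to\F_2$ together with the block orthogonality of Lemma~\ref{lem:reciprocal-binary-pairing}.
\end{itemize}
Your route needs neither semisimplicity nor the K\"unneth identification. It is the argument the paper itself uses for the even-order analogue, Lemma~\ref{lem:primary-pairing}. It also makes perfectness of each summand pairing automatic once the cross terms vanish. The paper's route has the advantage of directly producing the dual bases that make each Gram block the identity.

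Your Step 2 adds something the paper lacks. You write the identity-coefficient functional as $\sum_\chi\chi(f)$, using $|G|=1$ in $\F_2$, and collapse the $\Omega$-part to a Frobenius-orbit sum, i.e.\ a trace. This justifies the paper's opening claim that the binary pairing on a packet is the reciprocal trace pairing, which the paper only asserts.

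Your ``main obstacle'' paragraph, however, treats a non-issue as the hard step and gives the wrong reason for it. The spaces $\ker A_\Omega\subset K_\Omega^{n_A}$ and $\ker A^\dagger_{\Omega^\vee}\subset K_{\Omega^\vee}^{r_A}$ do not pair with each other at all. So ``$\ker A_\Omega$ pairs trivially with image-type data'' cannot be what kills the cross term.

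The correct reason is the one the paper states: the summands sit in different direct summands of $Q_1$.
\begin{itemize}
\item The homology summand $\ker A_\Omega\otimes\coker B_\Omega$ is represented by tensors $u\otimes q$ in the direct summand $\RG^{n_A}\otimes\RG^{r_B}$ of $Q_1$.
\item The cohomology summand $\ker A^\dagger_{\Omega^\vee}\otimes\coker B^\dagger_{\Omega^\vee}$ is represented in the other direct summand $\RG^{r_A}\otimes\RG^{n_B}$.
\item These are disjoint coordinate blocks, orthogonal under the coordinatewise binary pairing. So the chain-level pairing is zero, and well-definedness on (co)homology carries this to classes. The other cross term is symmetric.
\end{itemize}

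The non-canonical choice of cokernel representatives only matters within a summand. There it is handled by $\langle u,A^\dagger s\rangle=\langle Au,s\rangle=0$ for $u\in\ker A_\Omega$, and $\langle Bz,d\rangle=\langle z,B^\dagger d\rangle=0$ for $d\in\ker B^\dagger_{\Omega^\vee}$. These identities make the within-summand pairing (the trace of the product of the two factor pairings) well defined on classes.
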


\begin{proof}
The binary expansion pairing on a free packet module is the coordinatewise reciprocal trace pairing between $K_\Omega$ and $K_{\Omega^\vee}$.  By the definition of the group-algebra adjoint,
\begin{align}
\langle A_\Omega u,v\rangle_{\Omega,\operatorname{tr}}
=
\langle u,A^\dagger_{\Omega^\vee}v\rangle_{\Omega,\operatorname{tr}},
\end{align}
and similarly for $B$.  We get
\begin{align}
(\ker A_\Omega)^\perp
&=\Im A^\dagger_{\Omega^\vee},
\quad
(\Im A_\Omega)^\perp=\ker A^\dagger_{\Omega^\vee},
\end{align}
with the analogous identities for $B$.  These identities give the perfect pairings between kernel and quotient spaces in Eqs.~\eqref{eq:packet-dualities-left} and \eqref{eq:packet-dualities-right}.  Taking tensor products of these perfect pairings gives a perfect pairing on the two K\"unneth summands, and the two summands live in the two direct summands of $Q_1$, so their cross-pairings vanish.  This proves the claimed perfect pairing between $\cH_{1,\Omega}$ and $\cH^1_{\Omega^\vee}$.
\end{proof}

\begin{theorem}[Abelian spectral basis for $\LP(A,B)$]
\label{thm:finite-abelian-basis}
Assume $G$ is finite Abelian of odd order.  For the lifted-product CSS code $\LP(A,B)$ over $\RG=\F_2[G]$, the binary logical dimension is
\begin{align}
k
&=
\sum_\Omega d_\Omega \bigl[ (n_A-s_A^\Omega)(r_B-s_B^\Omega) + (r_A-s_A^\Omega)(n_B-s_B^\Omega) \bigr],
\label{eq:finite-abelian-K}
\end{align}
where the sum runs over Frobenius packets.

Moreover, choosing finite-field bases of the packet homology spaces in Eq.~\eqref{eq:packet-homology}, lifting them with the idempotents $e_\Omega$, and pairing reciprocal packets by the trace and kernel/quotient dualities of Lemma~\ref{lem:packet-homology-cohomology-pairing} gives a complete addressable conjugate logical basis after binary expansion.
\end{theorem}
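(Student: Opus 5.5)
The plan is to push the CRT decomposition through the whole product complex and then invoke the packet lemmas already established. Since $1=\sum_\Omega e_\Omega$ with orthogonal central idempotents, every $\RG$-module $M$ splits as $M=\bigoplus_\Omega e_\Omega M$, and every $\RG$-linear map respects this splitting. First I will apply this to $Q_2,Q_1,Q_0$. Using $\RG^{a}\otimes_{\RG}\RG^{b}\simeq \RG^{ab}$ and Lemma~\ref{lem:same-packet-tensor}, the tensor product localizes to $e_\Omega Q_i\simeq K_\Omega^{a}\otimes_{K_\Omega}K_\Omega^{b}$. The boundary maps $\partial_1,\partial_2$ then restrict to the finite-field hypergraph-product boundary maps built from $A_\Omega,B_\Omega$. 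Homology commutes with this finite direct sum because each $e_\Omega$ is an exact projector commuting with $\partial$, so $\cH_1=\bigoplus_\Omega \cH_{1,\Omega}$.

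The ordinary K\"unneth formula over the field $K_\Omega$ gives Eq.~\eqref{eq:packet-homology} and the dimension formula Eq.~\eqref{eq:packet-dimension}. Summing over packets, and converting $K_\Omega$-dimension to $\F_2$-dimension by the factor $d_\Omega=[K_\Omega:\F_2]$, yields Eq.~\eqref{eq:finite-abelian-K}. This works because $e_\Omega\RG\simeq K_\Omega$ as $\F_2$-spaces and binary expansion is an $\F_2$-isomorphism of the chain complex.

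For the basis statement, in each packet I choose $K_\Omega$-bases of $\ker A_\Omega$ and $\ker B_\Omega$, together with complements representing $\coker A_\Omega$ and $\coker B_\Omega$. These choices are fixed by a deterministic RREF convention over $K_\Omega$. Tensoring them gives $K_\Omega$-bases of the two K\"unneth summands. Multiplying by an $\F_2$-basis $\{\beta_a\}$ of $K_\Omega$ then gives an $\F_2$-basis of $\cH_{1,\Omega}$. Each such element lifts to an explicit chain in $e_\Omega Q_1$, via the inverse of the identification $e_\Omega\RG\simeq K_\Omega$, i.e. an $\RG$-combination multiplied by $e_\Omega$. It is a cycle since $\partial_1$ is packetwise.

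On the $X$ side I perform the analogous construction in $\Omega^\vee$ for $\cH^1$. There I choose the basis of $\coker A^\dagger_{\Omega^\vee}\otimes\ker B^\dagger_{\Omega^\vee}$ (and of the other summand) to be the dual basis to the $Z$ choice under the dualities Eqs.~\eqref{eq:packet-dualities-left}--\eqref{eq:packet-dualities-right}. The coefficients use the trace-dual basis $\{\beta_b^\#\}$ of Lemma~\ref{lem:reciprocal-trace-pairing}.

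The Gram matrix then has three layers.

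1. By Lemma~\ref{lem:reciprocal-binary-pairing}, pairings between packets other than reciprocal ones vanish.
2. By Lemma~\ref{lem:packet-homology-cohomology-pairing}, cross-summand pairings vanish, and within a summand the pairing is the tensor product of the chosen dual pairings.
3. The trace form reduces the $K_\Omega$-valued pairing to $\delta_{ab}$ on coefficients.

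Together these give $Z_\lambda X_\mu^\transpose=\delta_{\lambda\mu}$. Since the Gram matrix is the identity, both families are linearly independent modulo stabilizers. Their number equals $k$, and $\dim\cL_X=\dim\cL_Z=k$, so each family is complete.

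The main obstacle is step two at the chain level rather than the homology level. Dual bases on $\ker A_\Omega$ versus $\coker A^\dagger_{\Omega^\vee}$ must be realized by actual chain representatives whose binary pairing equals the abstract tensor pairing. This requires showing that the binary dot product on $e_\Omega\RG^m\times e_{\Omega^\vee}\RG^m$ equals coordinatewise $\operatorname{Tr}_{K_\Omega/\F_2}(a\,\tau_\Omega(b))$. I would verify this first. The binary pairing is $[1]\sum_i a_i\bar b_i$, the coefficient of the identity element. For $c\in e_\Omega\RG$, this coefficient equals $|G|^{-1}\sum_{\omega}\omega(c)$, and $|G|^{-1}=1$ since $|G|$ is odd. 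Only $\omega\in\Omega$ contribute, and their sum is the Frobenius-orbit sum, i.e. the trace. Once this identity holds, the pairing is independent of the chosen quotient representatives, because cycles pair to zero with boundaries. The pairing is then well defined on homology, and duality of the chosen bases suffices.
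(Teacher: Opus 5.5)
Your proposal is correct and follows the paper's proof essentially step for step: you split the product complex via CRT into finite-field hypergraph-product complexes over each $K_\Omega$, apply the Tor-free K\"unneth formula with the factor $d_\Omega$, and then use the same-packet, reciprocal-pairing, and homology--cohomology pairing lemmas, with kernel/quotient dual bases and trace-dual coefficient bases making each Gram block the identity. Two of your additions fill in points the paper only asserts: the character-sum check that the binary pairing on $e_\Omega\RG\times e_{\Omega^\vee}\RG$ equals $\operatorname{Tr}_{K_\Omega/\F_2}(a\,\tau_\Omega(b))$ (using $|G|^{-1}=1$ for odd $|G|$), and the completeness argument from the identity Gram matrix together with the count $k$.
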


\begin{proof}
The semisimple CRT decomposition turns every free $\RG$-module into a product of finite-dimensional vector spaces. Thus, the product complex decomposes as
\begin{align}
Q_\bullet\simeq\prod_\Omega Q_{\bullet,\Omega},
\end{align}
where $Q_{\bullet,\Omega}$ is the finite-field product complex built from $A_\Omega$ and $B_\Omega$.

Homology commutes with finite products.  Since each packet component is a complex of vector spaces over the field $K_\Omega$, the K\"unneth formula has no Tor term and gives Eq.~\eqref{eq:packet-homology}.  Concretely, the middle term of the packet product complex is the direct sum
\begin{align}
Q_{1,\Omega}
=
\bigl(K_\Omega^{n_A}\otimes K_\Omega^{r_B}\bigr)
\oplus
\bigl(K_\Omega^{r_A}\otimes K_\Omega^{n_B}\bigr).
\end{align}
The first K\"unneth summand is represented in the first direct summand by tensors $u\otimes q$, with $u\in\ker A_\Omega$ and $q$ a representative of $\coker B_\Omega=K_\Omega^{r_B}/\Im B_\Omega$.  The second is represented in the second direct summand by tensors $q'\otimes v$, with $q'$ representing $\coker A_\Omega$ and $v\in\ker B_\Omega$.  Thus the two displayed K\"unneth summands are not only abstract direct summands of homology; with these representatives they live in the two displayed direct summands of $Q_{1,\Omega}$.  Its $K_\Omega$-dimension is Eq.~\eqref{eq:packet-dimension}, and multiplication by $d_\Omega=[K_\Omega:\F_2]$ converts this to binary dimension.  Summing over packets gives Eq.~\eqref{eq:finite-abelian-K}.

It remains to explain the conjugate pairing.  By Lemma~\ref{lem:same-packet-tensor}, each elementary tensor contribution uses the same idempotent $e_\Omega$ on both tensor legs, and general product representatives are sums of such packetwise contributions.  By Lemma~\ref{lem:reciprocal-binary-pairing}, a $Z$-representative in packet $\Omega$ can pair only with an $X$-representative in the reciprocal packet $\Omega^\vee$.  Lemma~\ref{lem:packet-homology-cohomology-pairing} identifies $\cH^1_{\Omega^\vee}$ as the perfect dual of $\cH_{1,\Omega}$ under the physical binary pairing.  Choosing trace-dual coefficient bases for reciprocal packets and finite-field kernel/quotient dual bases makes each packet Gram block the identity.  For arbitrary packet bases, one obtains the same result by inverting the finite Gram block.  The lifted binary representatives form a complete conjugate basis; in the implementation, this exact pairing is certified by the binary Gram matrix.
\end{proof}

In the single-variable quasi-cyclic setting, the theorem becomes a finite-field packet algorithm over $R_\ell=\F_2[x]/(x^\ell+1)$ with $\ell$ odd.  Algorithm~\ref{alg:packet-certificate} summarizes the construction behind the certificates of Table~\ref{tab:main}; Appendix~\ref{app:deterministic-implementation} gives the deterministic choices of kernel bases, quotient representatives, coefficient bases, and idempotent lifts.

\begin{algorithm}[t]
\DontPrintSemicolon
\small
\caption{Packetwise logical-basis certificate in the single-variable quasi-cyclic setting}
\label{alg:packet-certificate}
\KwIn{Seed matrices $A,B$ over $R_\ell=\F_2[x]/(x^\ell+1)$ with $\ell$ odd.}
\KwOut{Expanded binary representatives $\{Z_i,X_i\}$, packet labels, and the basis-width certificate $\Wcert$ (\secref{sec:representatives-certificates}).}

Factor $x^\ell+1=\prod_{g\in\mathcal G_\ell}g$ over $\F_2$.\;
\ForEach{packet $g\in\mathcal G_\ell$}{
    Set $K_g=\F_2[x]/(g)$ and evaluate $A_g$ and $B_g$ over $K_g$.\;
    Compute finite-field bases for $\ker A_g$, $\coker A_g$, $\ker B_g$, and $\coker B_g$.\;
    Form the same-packet K\"unneth representatives
    $\ker A_g\otimes\coker B_g$ and $\coker A_g\otimes\ker B_g$.\;
    Construct the primitive idempotent $e_g$ attached to packet $g$ and lift each packet representative by multiplying every tensor leg by $e_g$.\;
}
\ForEach{reciprocal packet orbit $\{g,g^\vee\}$}{
    Pair the $Z$ data in packet $g$ with the $X$ data in packet $g^\vee$ using trace-dual coefficient bases, or invert the finite packet Gram block.\;
}
Expand every lifted representative into binary coordinates.\;
Verify stabilizer commutation and the conjugate Gram identity $ZX^\transpose=I$.\;
\KwRet{The packet-labeled basis and $\Wcert=\max_i\{\wt(Z_i),\wt(X_i)\}$.}
\end{algorithm}

Theorem~\ref{thm:finite-abelian-basis} proves correctness in exact algebra.  The final binary step expands the basis into explicit representatives, confirms stabilizer commutation and the conjugate Gram identity exactly, and measures the support weights that enter the certificates; the weights depend on the deterministic choices of Appendix~\ref{app:deterministic-implementation} and are not fixed by the theorem.

\vspace{3pt}\noindent\textbf{Complexity.}
Every stage of Algorithm~\ref{alg:packet-certificate} is polynomial in the code size.  Factoring $x^\ell+1$ over $\F_2$ and constructing the primitive idempotents by the extended Euclidean algorithm cost $\operatorname{poly}(\ell)$~\cite{vonZurGathenGerhard2013}.  Packet evaluation and row reduction cost $O(rn\min(r,n))$ arithmetic operations in $K_g$ per packet; since one $K_g$ operation costs $O(d_g^2)$ bit operations and $\sum_g d_g=\ell$, the total over all packets is $O(rn\min(r,n)\,\ell^2)$ bit operations.  Forming the lifted representatives and running the final binary verification are dominated by the size of the output itself: the basis consists of $2k$ binary vectors of length $n$, and the stabilizer and Gram checks are sparse-matrix products.  The same counts hold for a general odd-order Abelian lift with $\ell$ replaced by $|G|$.  For comparison, at fixed seed shape, dense coordinate-order Gaussian elimination on the expanded matrices of side $\Theta(\ell)$ costs $\Theta(\ell^3)$ bit operations without fast matrix multiplication, followed by a global symplectic Gram--Schmidt pass over the $2k$ quotient representatives; the packetwise route costs $O(\ell^2)$ for the same seed-level linear algebra and needs no global symplectic pass, since reciprocal packets and trace-dual coefficient bases yield the conjugate Gram identity blockwise, with blocks of size at most $\max_\Omega d_\Omega$.  Structured binary methods can also exploit the block-circulant form; the packet route is one systematic way to do so that additionally preserves the labels.

\vspace{3pt}\noindent\textbf{Addressable output.}
For odd-order finite Abelian $G$, the construction outputs logical labels of the form
\begin{align}
\lambda=(\Omega,\sigma,i,j,b),
\end{align}
where $\Omega$ is a Frobenius packet, $\sigma$ specifies the left or right K\"unneth summand, $i$ and $j$ index the corresponding kernel and quotient basis elements, and $b$ indexes an $\F_2$-basis element of $K_\Omega$.  These labels have two levels.  The packet label $\Omega$, the K\"unneth-summand label $\sigma$, and the reciprocal pairing $\Omega\leftrightarrow\Omega^\vee$ are canonical: they are fixed by the primitive central idempotents and the natural K\"unneth splitting, independently of any basis choice, and the translation action of Corollary~\ref{cor:equivariance} is canonical at the same block level.  The indices $i$, $j$, and $b$ are deterministic but convention-dependent coordinates inside these canonical blocks, fixed by Appendix~\ref{app:deterministic-implementation} together with the representative-character choice of Appendix~\ref{app:idempotent}; the specific binary supports, and hence $\Wcert$, inherit this convention dependence.  The situation is that of a basis chosen inside a canonically determined eigenspace.  For each label, the construction gives binary support vectors $Z_\lambda,X_\lambda\in\F_2^n$ that commute with the stabilizers and satisfy the Gram identity $Z_\lambda X_\mu^\transpose=\delta_{\lambda\mu}$ of the addressability convention, and it stores the reciprocal conjugate packet, the primitive-idempotent lift, and the support weight of each representative.

\vspace{3pt}\noindent\textbf{Self-adjoint specialization.}
For the self-adjoint lifted product $\LP(A,A^\dagger)$ with $A:\RG^n\to\RG^r$ and $s_\Omega=\rankop_{K_\Omega}A_\Omega$, the group-inversion involution sends the $\Omega$-packet of $A^\dagger$ to the reciprocal packet of $A$, so $s_B^\Omega=s_A(\Omega^\vee)$ and Eq.~\eqref{eq:finite-abelian-K} becomes
\begin{align}
k
=
\sum_\Omega
d_\Omega
\bigl[
(n-s_\Omega)(n-s_{\Omega^\vee})
+
(r-s_\Omega)(r-s_{\Omega^\vee})
\bigr].
\label{eq:self-adjoint-K}
\end{align}

\begin{corollary}[Translation equivariance of the spectral basis]
\label{cor:equivariance}
For $h\in G$, let $U_h$ denote the CSS-preserving qubit permutation induced by the regular translation action of $h$ on every lift orbit, and let $\xi_\Omega(h)\in K_\Omega^\times$ be the image of $h$ under the CRT projection $\RG\to K_\Omega$; equivalently, $\xi_\Omega(h)=\omega(h)$ for the fixed representative character $\omega$ of $\Omega$, and in the cyclic case $K_g=\F_2[x]/(g)$ one has $\xi_g(x^a)=\xi_g^{\,a}$ with $\xi_g=x\bmod g$ as in Appendix~\ref{app:idempotent}.  Then $U_h$ maps the reported representatives to exact $\F_2$-linear combinations within single coefficient blocks:
\begin{align}
U_h\,Z_{(\Omega,\sigma,i,j,b)}
&=
\sum_{b'}\bigl[M_\Omega(h)\bigr]_{b'b}\,Z_{(\Omega,\sigma,i,j,b')},
\nonumber\\
U_h\,X_{(\Omega^\vee,\sigma,i,j,b)}
&=
\sum_{b'}\bigl[M_{\Omega^\vee}(h)\bigr]_{b'b}\,X_{(\Omega^\vee,\sigma,i,j,b')},
\end{align}
where $M_\Omega(h)\in\mathrm{GL}_{d_\Omega}(\F_2)$ is the matrix of multiplication by $\xi_\Omega(h)$ in the chosen $\F_2$-basis of $K_\Omega$.  In particular, the lift group acts on the logical algebra block-diagonally in the labels $(\Omega,\sigma,i,j)$, through the abelian unit groups $K_\Omega^\times$, and the conjugate Gram identity is preserved.  For $G=\Z_\ell$ and the cyclic generator $x$, $\xi_\Omega(x)=1$ exactly on the trivial packet, so the induced action $U^Z_x$ of $U_x$ on the $Z$-logical coordinates satisfies
\begin{align}
\rank(U^Z_x-I)=k-k_{x+1},
\label{eq:equivariance-rank}
\end{align}
with the same rank for the $X$-side action $U^X_x$,
where $k_{x+1}$ is the trivial-packet contribution to Eq.~\eqref{eq:finite-abelian-K}, and the order of the logical action is the least common multiple of the multiplicative orders of $\xi_\Omega(x)$ over packets with nonzero homology.
\end{corollary}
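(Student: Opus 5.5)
The plan is to reduce the translation action to multiplication by a group element in $\RG$ and then use the CRT decomposition to see that multiplication acts as a scalar in each packet.

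\emph{Step 1: $U_h$ as multiplication.} The regular translation by $h$ on every lift orbit is, at the module level, multiplication by the group element $h\in\RG$ on every coordinate of $Q_1$. Since $\partial_1,\partial_2$ are $\RG$-linear, this action commutes with both boundaries. For the same reason, $\partial_2^\dagger$ and $\partial_1^\dagger$ commute with multiplication by $h$. Hence $U_h$ preserves $H_X$ and $H_Z$ and acts on $\cH_1$ and on $\cH^1$. On the $X$ side one should check that the binary permutation $U_h$ acts on $X$-supports again as multiplication by $h$, not by $h^{-1}$. This holds because a permutation acts identically on both Pauli types; the adjoint only enters the pairing. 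In packet coordinates we then use $e_\Omega h = e_\Omega\,\xi_\Omega(h)$, i.e.\ $h$ maps to $\xi_\Omega(h)\in K_\Omega$ under the CRT projection.

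\emph{Step 2: block action on representatives.} A representative $Z_{(\Omega,\sigma,i,j,b)}$ is the lift of $\beta_b\,(u_i\otimes q_j)$, where the kernel/quotient vectors are fixed and $\beta_b$ runs over the coefficient basis. Multiplying by $h$ replaces $\beta_b$ with $\xi_\Omega(h)\beta_b=\sum_{b'}[M_\Omega(h)]_{b'b}\beta_{b'}$. Because $K_\Omega$-scalar multiplication of a same-packet tensor can be absorbed into either leg (Lemma~\ref{lem:same-packet-tensor}), we obtain exact equality of lifted representatives, not merely of homology classes. The $X$ side is identical, using the field $K_{\Omega^\vee}$.

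\emph{Step 3: Gram preservation.} A permutation preserves the binary dot product, so $Z_\lambda X_\mu^\transpose$ is unchanged. This is also consistent at the level of coefficients: the trace pairing satisfies $\operatorname{Tr}(\xi_\Omega(h) a\,\tau_\Omega(\xi_{\Omega^\vee}(h) b))$ with $\tau_\Omega(\xi_{\Omega^\vee}(h))=\xi_\Omega(h)^{-1}$, so the two actions are mutually contragredient.

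\emph{Step 4: rank identity and order.} I expect the main subtlety to lie here, in showing that $M_\Omega(x)-I$ is invertible on nontrivial packets. The argument is as follows. The block $M_\Omega(x)-I$ is multiplication by $\xi_\Omega(x)-1$ in the field $K_\Omega$, so it is either zero or invertible. It is zero exactly when $\xi_g=1$, i.e.\ when $g=x+1$; this uses that $x\bmod g$ is a root of $g$, and for odd $\ell$ only the factor $x+1$ has root $1$. Summing block ranks therefore gives $k-k_{x+1}$. The order of a block-diagonal action is the lcm of the block orders, and the order of multiplication by $\xi_\Omega(x)$ equals its multiplicative order in $K_\Omega^\times$.
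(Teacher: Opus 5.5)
You follow the paper's route. Translation is $\RG$-linear multiplication by $h$, so it commutes with $\partial_i$, $\partial_i^\dagger$ and the $e_\Omega$, and acts as $\xi_\Omega(h)$ in packet coordinates. It preserves the Gram matrix because it is a qubit permutation. Finally, $\xi_\Omega(x)-1$ is a nonzero, hence invertible, field element off the trivial packet.

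The one step you do not actually establish is the exactness claim in Step~2. Bilinearity of $\otimes$ lets you move $\xi_\Omega(h)$ onto the $\beta_b u_i$ leg, and Lemma~\ref{lem:same-packet-tensor} only kills cross-packet tensors. Together they show only that $U_hZ_{(\Omega,\sigma,i,j,b)}$ is \emph{some} vector in $e_\Omega Q_1$ whose packet image is $\xi_\Omega(h)\beta_b(u_i\otimes q_j)$. They do not show that this vector equals the stored lift of that packet element, which is what an identity between binary representatives requires.

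The missing ingredient is semisimplicity, which is exactly what the paper's proof invokes. For odd $|G|$ the projection $e_\Omega\RG\to K_\Omega$ is an isomorphism; equivalently, $e_\Omega$ annihilates the kernel of the CRT projection. That is the real content of your identity $e_\Omega h=e_\Omega\xi_\Omega(h)$, provided it is read in $\RG$ with an arbitrary lift of $\xi_\Omega(h)$. It follows that two elements of $e_\Omega Q_1$ with the same packet image coincide, so $h\cdot e_\Omega\widetilde{w}=e_\Omega\widetilde{\xi_\Omega(h)w}$. The deterministic lift $w\mapsto e_\Omega\widetilde w$ is $\F_2$-linear, and that gives the displayed expansion in the $Z_{(\Omega,\sigma,i,j,b')}$ as binary vectors.

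This hypothesis is genuinely needed. In the nonsemisimple ring $\F_2[x]/(x^2+1)$ with residue field $\F_2$, the lift of $1$ is $1$, but $x\cdot 1=x\neq 1$, so representative-level exactness fails there. Your argument never uses odd order at this point, so it cannot be complete as written.

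The real subtlety therefore sits in Step~2, not Step~4, which is immediate. The $X$-side rank claim in Step~4 follows from your contragredience observation, since $\rankop\bigl((U^Z_x)^{-\transpose}-I\bigr)=\rankop(U^Z_x-I)$.
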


\begin{proof}
Translation by $h$ is multiplication by $h$, which is $\RG$-linear, hence commutes with the boundary maps and with every idempotent $e_\Omega$, and acts on the coefficient of a packet representative as multiplication by $\xi_\Omega(h)$.  The equality holds at the level of the stored binary representatives, not only of homology classes: in the semisimple case $e_\Omega\RG\to K_\Omega$ is an isomorphism, so $h$ times the deterministic lift of $\beta_b u$ and the deterministic lift of $\xi_\Omega(h)\beta_b u$ have the same packet image and vanish in all other packets, hence coincide.  Since $U_h$ is a qubit permutation, all binary pairings, stabilizer memberships, and the Gram identity are preserved.  Finally, $M_\Omega(x)-I$ is multiplication by $\xi_\Omega(x)-1$, which is invertible on every nontrivial packet and zero on the trivial packet, giving Eq.~\eqref{eq:equivariance-rank}.
\end{proof}

Corollary~\ref{cor:equivariance} is the property that distinguishes the spectral basis from a generic conjugate basis: for a coordinate-order quotient basis, the same translation generically acts by a dense $k\times k$ matrix with no label structure, whereas here it acts through blocks of size at most $\max_\Omega d_\Omega$, trivially on all remaining labels.

\vspace{3pt}\noindent\textbf{Template count.}
For a cyclic lift, let
\begin{align}
m_\Omega=\dim_{K_\Omega}\cH_{1,\Omega},
\qquad
T=\sum_\Omega m_\Omega.
\end{align}
Here $m_\Omega$ counts the $K_\Omega$-linear homology labels before choosing an $\F_2$ coefficient basis.  After binary expansion, packet $\Omega$ contributes $d_\Omega m_\Omega$ logical coordinates, so
\begin{align}
k=\sum_\Omega d_\Omega m_\Omega .
\end{align}
Thus $k/T$ is the $m_\Omega$-weighted average of the active packet degrees.  In the cyclic case $K_\Omega=\F_2[\xi_\Omega(x)]$, so multiplication by the cyclic shift generates the coefficient field.  For each fixed packet, K\"unneth summand, kernel index, and quotient index, the $d_\Omega$ binary coefficient representatives lie in the $\F_2$-span of cyclic translates of one representative template.  Large packet degrees therefore give large reuse of stored representative patterns.  If $\ell$ is prime and $2$ is primitive modulo $\ell$, all nontrivial cyclic packets have degree $\ell-1$; after separating the $x+1$ packet, nontrivial logical coordinates have reuse factor $\ell-1$ at the level of stored representative templates.  Circuit depth and scheduling require additional protocol data beyond this coordinate-storage count.

The odd-order assumption enters only through semisimplicity.  Appendix~\ref{app:even-lifts} develops what survives without it, summarized as follows.

\begin{theorem}[Even Abelian lifts]
\label{thm:beyond-odd}
Let $G$ be a finite Abelian group of arbitrary order.
\begin{enumerate}
\item[(i)] $R_G=\F_2[G]$ is a finite product of local rings (the primary packets), and the product complex and its homology decompose packet by packet.
\item[(ii)] For $G=\Z_\ell$ with $\ell=2^sm$ and $m$ odd, the logical dimension of $\LP(A,B)$ is an exact function of the nilpotent Smith profiles of the packet matrices, and the formula reduces to Eq.~\eqref{eq:finite-abelian-K} at $s=0$.
\item[(iii)] For even $\ell$ and $u=x^{\ell/2}+1$, multiplication by $u$ gives a canonical short exact sequence relating the product complex to that of its half-length code, with the exact recursion $\dim_{\F_2}\cH_1(\ell)=2\dim_{\F_2}\cH_1(\ell/2)-\rank\delta_1-\rank\delta_2$ in the ranks of the connecting Bockstein maps $\delta_i$ in the standard homological-algebra sense~\cite{Weibel1994}; iterating one factor of two per variable reduces any even Abelian lift to its odd skeleton.
\end{enumerate}
\end{theorem}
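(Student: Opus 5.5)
We prove the three parts in order, each resting on a standard structural fact about $R_G$ and then reusing the packetwise bookkeeping from Theorem~\ref{thm:finite-abelian-basis}.

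\emph{Part (i).} Write $G=P\times H$, with $P$ the Sylow $2$-subgroup and $H$ of odd order. Then $R_G\simeq \F_2[H]\otimes_{\F_2}\F_2[P]$. By the odd-order CRT of Sec.~\ref{sec:pivots-to-packets}, this is $\prod_\Omega K_\Omega[P]$. Each $K_\Omega[P]$ is local, because its augmentation ideal is nilpotent: in characteristic two, $(g-1)^{2^t}=g^{2^t}-1=0$ for $g\in P$. The primitive idempotents $e_\Omega$ of $\F_2[H]\subset R_G$ are still central and orthogonal and sum to $1$, so they give the primary packets. Every free module, every boundary map and every homology group splits as $\prod_\Omega e_\Omega(\cdot)$, since $R_G$-linear maps commute with central idempotents and homology commutes with finite products. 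Lemma~\ref{lem:same-packet-tensor} applies verbatim to these idempotents.

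\emph{Part (ii).} For $G=\Z_\ell$ with $\ell=2^sm$, we have $x^\ell+1=(x^m+1)^{2^s}=\prod_g g^{2^s}$. The primary packet ring is therefore $S_g=\F_2[x]/(g^{2^s})$, a finite chain ring with uniformizer $\pi=g$ and residue field $K_g$. The plan has four steps.
\begin{enumerate}
\item Over a chain ring every matrix has a Smith normal form. Write $A_g\sim\mathrm{diag}(\pi^{a_1},\dots)$ and $B_g\sim\mathrm{diag}(\pi^{b_1},\dots)$; these exponent lists are the nilpotent Smith profiles.
\item Since $\partial_1,\partial_2$ are built from $A\otimes 1$ and $1\otimes B$, diagonalizing both seeds reduces the packet product complex to a direct sum of elementary complexes. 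Each is the tensor product over $S_g$ of two-term complexes $S_g\xrightarrow{\pi^a}S_g$ and $S_g\xrightarrow{\pi^b}S_g$, plus zero-map summands coming from $n_A-\rankop$ and $r_A-\rankop$ for unmatched rows and columns.
\item Compute $\mathcal H_1$ of each elementary piece. It equals $\operatorname{Tor}_0\oplus\operatorname{Tor}_1$ of the cyclic modules $S_g/\pi^a$ and $S_g/\pi^b$, and over a chain ring of length $L=2^s$ both have length $\min(a,b)$ (with $\pi^L=0$ read appropriately). Multiply by $d_g=[K_g:\F_2]$ and sum over pieces and packets to get the dimension formula.
\item Check the reduction: at $s=0$ one has $L=1$, all exponents lie in $\{0,1\}$, and the count recovers Eq.~\eqref{eq:finite-abelian-K}.
\end{enumerate}

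\emph{Part (iii).} Let $u=x^{\ell/2}+1$. Because $u^2=x^\ell+1=0$, the sequence $0\to uR_\ell\to R_\ell\to R_\ell/u\to 0$ is exact. Moreover $R_\ell/u\simeq R_{\ell/2}$, and multiplication by $u$ induces $R_{\ell/2}\simeq R_\ell/u\xrightarrow{\ \cdot u\ }uR_\ell$, which is an isomorphism by a dimension count. Since the product complex is built from free modules and $R_\ell$-linear maps, tensoring gives a short exact sequence of complexes
\[
0\to Q_\bullet(\ell/2)\to Q_\bullet(\ell)\to Q_\bullet(\ell/2)\to 0 .
\]
The composite $Q_\bullet(\ell)\to Q_\bullet(\ell)$ is multiplication by $u$ and squares to zero; this is the square-zero thickening. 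Apply the long exact sequence at degree one:
\[
\cH_2(\ell/2)\xrightarrow{\delta_2}\cH_1(\ell/2)\to\cH_1(\ell)\to\cH_1(\ell/2)\xrightarrow{\delta_1}\cH_0(\ell/2).
\]
Exactness gives $\dim\cH_1(\ell)=\bigl(\dim\cH_1(\ell/2)-\rank\delta_2\bigr)+\bigl(\dim\cH_1(\ell/2)-\rank\delta_1\bigr)$, which is the stated recursion. For general Abelian $G$, we peel one factor $\Z_{2^t}$ at a time, using $u=g^{2^{t-1}}+1$ for the corresponding generator, until only the odd skeleton remains.

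The main obstacle is step 3 of part (ii). The Tor computation over the truncated chain ring must be carried out carefully, including the convention for zero Smith entries ($a=L$) and the correct mixing of kernel and cokernel lengths across the two K\"unneth summands. The first thing to do is to write $\mathcal H_1$ of the $2\times 2$ elementary block explicitly and sanity-check it against small $s=1$ examples before summing.
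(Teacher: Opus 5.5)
Parts (i) and (iii) are sound, and in places cleaner than the paper. Your Sylow splitting $R_G\simeq\prod_\Omega K_\Omega[P]$ shows locality directly, where the paper just cites the structure theorem for finite commutative rings. Your dimension count, $\dim_{\F_2}uR_\ell=\ell/2$ because the ideal is spanned by the disjoint-support vectors $x^i+x^{i+\ell/2}$ for $0\le i<\ell/2$, gives $\operatorname{ann}(u)=(u)$ without the paper's packetwise valuation argument. The genuine gap is step 3 of part (ii), which carries the entire content of the dimension formula.

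Let $C=K_g[\pi]/(\pi^{e})$ with $e=2^s$ (your $S_g$ and $L$). The problem has three layers.

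\emph{The Tor identification is invalid.} The two-term complex $C\xrightarrow{\pi^a}C$ is not a resolution of $C/\pi^a$: its $\cH_1=\ker\pi^a=(\pi^{e-a})$ has length $a$. Also, $C$ is not hereditary. So no K\"unneth formula writes $\cH_1$ of the product as $\operatorname{Tor}^C_0\oplus\operatorname{Tor}^C_1$ of the cyclic modules.

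\emph{The asserted lengths are false over $C$.} From the periodic resolution $\cdots\xrightarrow{\pi^{e-a}}C\xrightarrow{\pi^a}C$, the length of $\operatorname{Tor}^C_1(C/\pi^a,C/\pi^b)$ is $\min(a,b)-\max(0,a+b-e)$, not $\min(a,b)$. For $e=4$ and $a=b=3$, your recipe gives $3+1=4$, but the actual $\cH_1$ has length $6$. For zero entries $a=b=e$ it gives $e$ instead of $2e$. Your planned $s=1$ sanity check would not expose this for nonzero entries, because then $a+b\le e=2$ automatically; the discrepancy first appears at $s\ge 2$.

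\emph{The fix.} Count lengths directly. Put $m_0=\min(a,b)$. Then $\Im\partial_1=(\pi^a,\pi^b)=(\pi^{m_0})$ has length $e-m_0$, so $\ker\partial_1$ has length $e+m_0$. Likewise $\ker\partial_2=\operatorname{ann}(\pi^{m_0})$ has length $m_0$, so $\Im\partial_2$ has length $e-m_0$. Hence $\cH_1$ has length $2m_0$. Alternatively, lift to the DVR $V=K_g[[\pi]]$, where $V\xrightarrow{\pi^a}V$ is a genuine resolution, and use universal coefficients for $V\to V/\pi^e$. This gives $\cH_1\cong\operatorname{Tor}^V_1(V/\pi^a,V/\pi^b)\otimes_V C\oplus\operatorname{Tor}^V_1(V/\pi^{m_0},C)$, two terms of length $m_0$ each. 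That is where your count of $\min(a,b)$ for both terms is actually correct.

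You also never compute the pieces involving unmatched coordinates. The values are:
\begin{itemize}
\item $D_a$ against $S$ or $T$ gives length $a$.
\item $S$ or $T$ against $D_b$ gives length $b$.
\item $S$ against $T$, in either order, gives length $e$.
\item $S\otimes S$ and $T\otimes T$ give $0$.
\end{itemize}
These are not a side issue, since rectangular seeds always produce them. At $s=0$, every unit Smith entry is an isomorphism $D_0$ contributing nothing, and every zero entry is $S\oplus T$. So all of Eq.~\eqref{eq:finite-abelian-K} comes from $h(S,T)=h(T,S)=1$, and your step 4 cannot be carried out until these mixed pieces are computed.
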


\begin{proof}
Appendix~\ref{app:even-lifts}: (i) is Proposition~\ref{prop:primary-decomposition}, (ii) is Theorem~\ref{thm:even-dimension}, and (iii) is Proposition~\ref{prop:halving-ses}.
\end{proof}

\noindent In particular, parts (i) and (iii) apply directly to bivariate-bicycle codes~\cite{BravyiCross2024}.

\section{Representatives, certificates, and distance witnesses}
\label{sec:representatives-certificates}

The theorem gives a packetwise logical basis abstractly.  This section sets out the representative choices and weight certificates used in the single-variable quasi-cyclic examples.  Nothing in this section changes the finite Abelian theorem; it fixes a concrete implementation layer for $R_\ell=\F_2[x]/(x^\ell+1)$ with $\ell$ odd, and for the self-adjoint family $\LP(A,A^\dagger)$.  Detailed worked examples and seed data are deferred to the appendices.

\begin{table}[t]
\caption{Main distance-witness and basis-width quantities used in the text and in Table~\ref{tab:main}.  ``Reported'' means the concrete representatives and values constructed in this manuscript.}
\label{tab:certificate-notation}
\small
\renewcommand{\arraystretch}{1.12}
\begin{tabular}{@{}L{0.25\columnwidth}L{0.68\columnwidth}@{}}
\toprule
Quantity & Meaning \\
\midrule
$\Dbest$ & tightest reported verified distance upper bound used in the parameter column. \\
$\Ddiag$ & verified low-cost distance diagnostic; Appendix~\ref{sec:seed-prefilters} defines the duplicate-column, minor, quotient-lift, and bounded-syzygy subclasses used to compute it. \\
$\Dwit$ & standalone distance witness from the literature, QDist, fast distance sampling~\cite{ZhouMaskara2026}, packet-sum search, or primary-packet search. \\
$\Dbasis$ & lightest representative in the reported addressable basis; for the odd rows this is the spectral basis of Theorem~\ref{thm:finite-abelian-basis}, while the even row uses the primary-packet-projected binary basis described in Appendix~\ref{app:even-lifts}. \\
$\Wcert$ & full reported conjugate-basis width certificate, so $B(C)\le\Wcert$. \\
RREF columns & coordinate-order comparison values: lightest RREF quotient representative, paired RREF width, and paired RREF width after deterministic greedy stabilizer-coset reduction. \\
\bottomrule
\end{tabular}
\end{table}

\subsection{Full-row-rank packets in the quasi-cyclic case}

Let $e_g\in R_\ell$ be the primitive idempotent attached to packet $g$, and let $g^\vee$ be the reciprocal factor.  For a seed $A:R_\ell^n\to R_\ell^r$, write $A_g$ for the packet evaluation of $A$ in $K_g$.  The formulas below give the explicit common-pivot specialization used for the single-variable examples.

Assume first that $A_g$ has full row rank $r$.  Choose pivot columns
\begin{align}
P=\{p_1,\ldots,p_r\},\quad F=\{1,\ldots,n\}\setminus P,
\end{align}
with $A_{g,P}$ invertible over $K_g$.  For each free column $j\in F$, solve
\begin{align}
A_{g,P}z_{g,j}=A_{g,j}
\end{align}
over $K_g$.  Then
\begin{align}
u_{g,j}=e_j+\sum_{a=1}^r (z_{g,j})_a e_{p_a}\in K_g^n
\label{eq:kernel-vector-full-rank}
\end{align}
is a kernel vector of $A_g$.  The free coordinate vector
\begin{align}
q_{g,j}=e_j
\end{align}
represents the corresponding quotient coordinate in $K_g^n/\operatorname{row}(A_g)$, and the choices satisfy
\begin{align}
q_{g,i}^{\transpose}u_{g,j}=\delta_{ij}.
\end{align}
Here $e_j$ denotes the standard coordinate vector; the primitive idempotent is always written with the packet subscript $e_g$.

Let
\begin{align}
\tau_g:K_{g^\vee}\to K_g
\end{align}
be the reciprocal field isomorphism induced by $x\mapsto x^{-1}$.  If one pivot set is valid for both $g$ and $g^\vee$, the left K\"unneth representatives can be written in a particularly transparent form.  For $h\in\{g,g^\vee\}$, for $j,j'\in F$, and for an $\F_2$-basis element $\alpha\in K_h$, set
\begin{align}
Z^{(L)}_{h;j,j';\alpha}
=
e_h\bigl((\alpha u_{h,j})\otimes \tau_h(q_{h^\vee,j'})\bigr).
\label{eq:left-Z-rep}
\end{align}
The conjugate representative is chosen in the reciprocal packet:
\begin{align}
X^{(L)}_{h^\vee;j,j';\alpha^\#}
=
e_{h^\vee}\bigl((\alpha^\# q_{h^\vee,j})\otimes \tau_{h^\vee}(u_{h,j'})\bigr),
\label{eq:left-X-rep}
\end{align}
where $\alpha^\#$ belongs to the trace-dual basis in the reciprocal packet in the sense of Lemma~\ref{lem:reciprocal-trace-pairing}.  The right K\"unneth summand is obtained by exchanging the $n$- and $r$-summand roles.  In a full-row-rank self-adjoint packet, this right summand vanishes.

The common-pivot assumption is only a convenience.  With different pivots, or with arbitrary finite-field bases, one forms the packet representatives from Theorem~\ref{thm:finite-abelian-basis} and then inverts the finite packet Gram block.  The final binary verification is the same.

\begin{lemma}[Factor-local conjugacy]
\label{lem:factor-local-conjugacy}
With a common pivot and trace-dual coefficient bases for reciprocal packets, the representatives in Eqs.~\eqref{eq:left-Z-rep} and \eqref{eq:left-X-rep} satisfy
\begin{align}
\langle Z^{(L)}_{g;j,j';\beta_a},X^{(L)}_{g^\vee;k,k';\beta_b^\#}\rangle_{\mathrm{bin}}
=
\delta_{jk}\delta_{j'k'}\delta_{ab}.
\end{align}
Pairings between nonreciprocal factor orbits vanish.
\end{lemma}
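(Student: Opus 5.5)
The plan is to compute the binary pairing packet by packet and reduce it to the finite-field dualities already established. First, Lemma~\ref{lem:reciprocal-binary-pairing} disposes of the nonreciprocal part. The representative $Z^{(L)}_{g;\dots}$ lies in $e_g Q_1$, and $X^{(L)}_{h^\vee;\dots}$ lies in $e_{h^\vee}Q_1$. Their binary pairing therefore vanishes unless $h^\vee=g^\vee$, that is, unless $h=g$. This gives the final sentence of the statement, and it leaves only the pair $(g,g^\vee)$ to analyze.

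For that pair, I would first express the binary pairing on $e_gQ_1\times e_{g^\vee}Q_1$ as a reciprocal trace pairing. The binary dot product equals $[x^0]\sum_i a_i(x)b_i(x^{-1})$. On the packet component this is $\operatorname{Tr}_{K_g/\F_2}$ of the $K_g$-bilinear form $\sum_i a_i\,\tau_g(b_i)$, which is the identification used in the proof of Lemma~\ref{lem:packet-homology-cohomology-pairing}. To justify it, I would compute the constant-term functional on $e_gR_\ell\simeq K_g$. It is $\ell^{-1}\operatorname{Tr}$ evaluated at the representative character, and $\ell^{-1}=1$ in $\F_2$ because $\ell$ is odd. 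This is the step I would check most carefully, together with the fact that the embedding $K_g\hookrightarrow R_\ell$ via $e_g$ is the one used in Eqs.~\eqref{eq:left-Z-rep}--\eqref{eq:left-X-rep}, so that no stray unit factor appears.

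Next, the tensor structure lets the pairing factor. The first tensor leg lives in $K_g^n$ and the second in $K_g^r$. The coordinatewise $K_g$-bilinear form on a tensor product $u\otimes q$ against $\tau_g(v\otimes w)$ equals $\bigl(u\cdot\tau_g(v)\bigr)\bigl(q\cdot\tau_g(w)\bigr)$. Substituting the representatives, and using $\tau_g\tau_{g^\vee}=\mathrm{id}$ together with $\alpha\,\tau_g(\alpha^\#)$ as the coefficient, the pairing becomes
\begin{align}
\operatorname{Tr}_{K_g/\F_2}\Bigl(\beta_a\tau_g(\beta_b^\#)\,\bigl(u_{g,j}\cdot q_{g,k}\bigr)\bigl(\tau_g(q_{g^\vee,j'})\cdot u_{g,k'}\bigr)\Bigr).
\end{align}
Under the common-pivot hypothesis, the $\tau$-twisted quotient vector $\tau_g(q_{g^\vee,j'})$ equals the standard vector $e_{j'}$. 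Here $\tau$ acts on entries and fixes $0/1$ vectors, so it only maps the field. Hence both scalar factors are of the form $q^\transpose u$, and they equal $\delta_{jk}$ and $\delta_{j'k'}$ by $q_{g,i}^\transpose u_{g,j}=\delta_{ij}$. The pairing thus reduces to $\operatorname{Tr}(\beta_a\tau_g(\beta_b^\#))\delta_{jk}\delta_{j'k'}$, which is $\delta_{ab}\delta_{jk}\delta_{j'k'}$ by Lemma~\ref{lem:reciprocal-trace-pairing}.

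Finally, I would check that the idempotent prefactors $e_g$ and $e_{g^\vee}$ are harmless. Each is idempotent and already acts as the identity on its packet, and the involution sends $e_g$ to $e_{g^\vee}$. They therefore just fix the packet embedding without altering the computed values.
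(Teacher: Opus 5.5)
Your proposal is correct and follows the same route as the paper's proof. Lemma~\ref{lem:reciprocal-binary-pairing} selects the reciprocal packet, the pairing factors over the two tensor legs so that the free-coordinate duality $q^\transpose u=\delta$ gives $\delta_{jk}\delta_{j'k'}$, and the trace-dual basis of Lemma~\ref{lem:reciprocal-trace-pairing} gives $\delta_{ab}$. You also make explicit two points the paper's short proof leaves implicit: the identification of the binary pairing with $\operatorname{Tr}_{K_g/\F_2}(a\,\tau_g(b))$ (constant term equals trace because $\ell^{-1}=1$ in $\F_2$), and the fact that it is the $K_g$-bilinear form that factors before the trace is taken, which suffices because the kernel/quotient factors lie in $\{0,1\}$.
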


\begin{proof}
The binary pairing factors over the two tensor coordinates.  Lemma~\ref{lem:reciprocal-binary-pairing} forces the $Z$-packet to pair only with the reciprocal $X$-packet.  The kernel and quotient coordinates give the Kronecker factors in $j$ and $j'$, while Lemma~\ref{lem:reciprocal-trace-pairing} gives the Kronecker factor in $a$.
\end{proof}

\subsection{\texorpdfstring{The trivial $(x+1)$ packet}{The trivial (x+1) packet}}
\label{subsec:trivial-packet}

For dense monomial seeds, the trivial packet is explicit.  Let
\begin{align}
\chi_\ell=1+x+\cdots+x^{\ell-1}=\frac{x^\ell+1}{x+1}.
\end{align}
For odd $\ell$, this is the primitive idempotent for the factor $x+1$.  Since every monomial evaluates to $1$ at $x=1$, a dense monomial $r\times n$ seed reduces in this packet to the all-ones matrix $J_{r\times n}$, of rank one.

Let $e_i^{(n)}$ denote the standard basis of $\F_2^n$.  A convenient basis of $\ker J_{r\times n}$ is
\begin{align}
e_i^{(n)}+e_n^{(n)},
\quad
i=1,\ldots,n-1.
\end{align}
A quotient complement may be represented by $e_j^{(n)}$, $j=1,\ldots,n-1$.  When this packet has the rank-one defect above, its left contribution contains $(n-1)^2$ conjugate pairs
\begin{align}
Z^{(L)}_{ij}
=
\chi_\ell\bigl((e_i^{(n)}+e_n^{(n)})\otimes e_j^{(n)}\bigr),
\label{eq:trivial-packet-Z}
\\
X^{(L)}_{ij}
=
\chi_\ell\bigl(e_i^{(n)}\otimes(e_j^{(n)}+e_n^{(n)})\bigr),
\label{eq:trivial-packet-X}
\end{align}
each of weight $2\ell$.  The right contribution from the same packet similarly contains $(r-1)^2$ pairs.  If all nontrivial factors have full row rank, then the self-adjoint logical dimension is
\begin{align}
k
=
(n-r)^2(\ell-1)+(n-1)^2+(r-1)^2.
\label{eq:monomial-dimension}
\end{align}

This trivial packet is also diagnostically useful: it shows exactly how the low-frequency rank defect contributes to $k$, and, when it is non-full-rank, it often determines the full conjugate-basis width certificate.

\subsection{\texorpdfstring{Tuning the $x+1$ parity skeleton}{Tuning the x+1 parity skeleton}}

Sparse polynomial entries can tune the trivial packet while keeping bounded group-basis support.  If
\begin{align}
p_{ij}(x)=\sum_a c_{ij,a}x^a
\end{align}
is a seed entry, then
\begin{align}
p_{ij}(1)=\sum_a c_{ij,a}
\end{align}
is its coefficient parity.  Monomial entries give $1$ in the $x+1$ packet, while binomial entries give $0$.  By choosing odd- and even-support entries, one prescribes a binary parity skeleton
\begin{align}
M=A_{x+1}\in\F_2^{r\times n}.
\end{align}

If $M$ has rank $s$, and if $A_g$ has full row rank $r$ for every $g\neq x+1$, then Eq.~\eqref{eq:self-adjoint-K} gives
\begin{align}
k
=
(n-r)^2(\ell-1)+(n-s)^2+(r-s)^2.
\label{eq:parity-skeleton-dimension}
\end{align}
In particular, if $s=r$, the trivial packet has full row rank; its left contribution has the same $(n-r)^2$ labels as an ordinary full-row-rank packet.  This is the simplest sparse packet-rank engineering mechanism used in the examples.

\subsection{Basis width versus distance}

The construction outputs one complete conjugate basis.  Its support width is a property of that chosen basis, while the code distance is optimized over all nontrivial logical operators.  We use the following notation to keep these quantities separate.

The code distance $d(C)$ is the minimum weight of any nontrivial logical Pauli operator. The optimal conjugate logical-basis width is
\begin{align}
B(C)=
\min_{\{X_i,Z_i\}_{i=1}^k}
\max_i
\max\{\operatorname{wt}(X_i),\operatorname{wt}(Z_i)\},
\label{eq:basis-width}
\end{align}
where the minimum is over complete conjugate bases satisfying
\begin{align}
Z_iX_j^{\transpose}=\delta_{ij}.
\end{align}
The reported spectral basis gives one feasible conjugate basis, hence the certificate
\begin{align}
B(C)\leq W_{\mathrm{full}}^{\mathrm{cert}}.
\end{align}
These quantities obey
\begin{align}
d(C)\leq B(C)\leq W_{\mathrm{full}}^{\mathrm{cert}}.
\label{eq:distance-basis-certificate}
\end{align}
The first inequality follows because every representative in any conjugate basis is a nontrivial logical operator.  The second inequality is a construction-level certificate for the reported spectral basis.  Pivot choices, coefficient bases, reciprocal-stable packet grouping, and stabilizer-coset reduction can change this support cost while preserving the logical labels.  A packet union is \emph{reciprocal-stable} if $\Omega\in\mathcal P$ implies $\Omega^\vee\in\mathcal P$; this keeps a coarser packet label closed under the $X/Z$ reciprocal pairing.

The lightest representative inside this basis gives a separate distance witness:
\begin{align}
d(C) &\leq D_{\mathrm{basis}}^{\mathrm{ub}},
\\
D_{\mathrm{basis}}^{\mathrm{ub}}
&=
\min_\lambda
\min\{\operatorname{wt}(X_\lambda),\operatorname{wt}(Z_\lambda)\}.
\end{align}
$\Dbasis$ exhibits one nontrivial logical operator, while $\Wcert$ bounds the maximum support in a full addressable conjugate basis.

For a full-row-rank reciprocal factor orbit $O$ and a common pivot choice $P$, define the preliminary lifted-kernel support
\begin{align}
W_O^{\mathrm{pair}}(P)
=
\max_{h\in O}
\max_{j\in F(P)}
\operatorname{wt}(e_hu_{h,j}(P)).
\end{align}
This preliminary quantity captures the lifted kernel-vector support before coefficient-basis and conjugate-basis effects are included.  Let $B_h$ be the chosen coefficient basis of $K_h$, and let $B_h^\#$ be the trace-dual basis in the reciprocal packet.  The reciprocal-packet conjugate certificate is
\begin{align}
W_O^{\mathrm{conj}}(P)
=
\max_{h\in O}
\max_{j\in F(P)}
\max_{\beta\in B_h\cup B_h^\#}
\operatorname{wt}(e_h\beta u_{h,j}(P)).
\end{align}
The generic-packet certificate takes the best available pivot within each reciprocal orbit and the worst orbit overall:
\begin{align}
W_{\mathrm{gen}}^{\mathrm{conj}}
=
\max_O
\min_P
W_O^{\mathrm{conj}}(P).
\end{align}
The full certificate is then
\begin{align}
W_{\mathrm{full}}^{\mathrm{cert}}
=
\max\{W_{\mathrm{gen}}^{\mathrm{conj}},W_{\mathrm{nonfull}}\},
\end{align}
where $W_{\mathrm{nonfull}}$ includes all non-full-rank packet contributions, such as the $x+1$ packet above.  The complete conjugate basis is certified by $W_{\mathrm{full}}^{\mathrm{cert}}$; the preliminary quantity $W_O^{\mathrm{pair}}$ omits the coefficient-basis and non-full-rank packet effects and does not certify it.

\subsection{Seed-level distance witnesses}
\label{subsec:seed-distance-witnesses}

A seed can have a compact verified conjugate basis and still contain a much lighter logical operator from a sparse relation in the seed matrix or in a proper-divisor quotient of it.  The table records such low-cost verified mechanisms in the diagnostic column $\Ddiag$; the appendix separates duplicate-column, Cramer/minor, quotient-lift, and bounded-syzygy subclasses.  Each candidate is counted only after expanded-binary verification as a nontrivial logical operator.  The design goals are separate: distance should be large for protection, and, at a useful distance scale, the chosen logical basis should stay reasonably close to that scale.

In the single-variable self-adjoint quasi-cyclic setting
\begin{align}
R_\ell=\F_2[x]/(x^\ell+1),
\quad
A:R_\ell^n\to R_\ell^r,
\quad
\LP(A,A^\dagger),
\end{align}
we write
\begin{align}
\operatorname{wt}_{R}(u)=\sum_{j=1}^n|\operatorname{supp}(u_j)|
\label{eq:group-basis-weight}
\end{align}
for the group-basis weight, i.e., the binary Hamming weight after expanding every polynomial entry as a length-$\ell$ coefficient vector.  If $u\in\ker A$ and $q\in R_\ell^n$ represents a nonzero adjoint-quotient class, then the left degree-one summand contains the $Z$-type cycle
\begin{align}
Z=(u\otimes q,0)
\end{align}
because $\partial_1Z=Au\otimes q=0$.  Here $[u]$ and $[q]$ denote the induced kernel and quotient classes, respectively.  This cycle is a valid distance witness only when the K\"unneth product class $[u]\otimes[q]$ is nonzero.  Whenever a diagnostic distance-witness value enters $\Dbest$, we verify the expanded binary representative directly: for a $Z$-type distance witness, $H_XL^\transpose=0$ and $L\notin\row(H_Z)$; for an $X$-type distance witness, $H_ZL^\transpose=0$ and $L\notin\row(H_X)$.  In both cases we also check the reported support weight.  When $q$ is a monomial coordinate vector, every verified representative of this form satisfies
\begin{align}
\operatorname{wt}(Z)=\operatorname{wt}_{R}(u).
\end{align}

The table convention follows this hierarchy.  Parameter columns use the sharpest reported verified upper bound $\Dbest$, while $\Dbasis$ and $\Wcert$ remain properties of the constructed addressable basis.  The toy seed \seedlabel{T}{15}{2}{4} has $\Dbest=2$ even though its constructed-basis bound is larger, so it is used only as a warning example and not as an example row.  Appendix~\ref{app:seed-distance-witness-examples} works out this example, Appendix~\ref{sec:seed-prefilters} defines the pre-QDist diagnostic subclasses, and Appendix~\ref{app:packet-local-distance} gives a small example where a low-weight distance witness combines several packets.

\subsection{Stabilizer-coset post-processing}

The spectral representatives are fixed representatives of logical cosets.  Adding rows of $H_Z$ to a $Z$ representative or rows of $H_X$ to an $X$ representative preserves both the represented logical cosets and the conjugate Gram matrix, since stabilizers commute with all logical Pauli operators and pair trivially with logicals of the opposite type.  Exact minimization inside a stabilizer coset is a decoding problem; greedy local search, information-set decoding, BP-OSD-style reduction, or integer programming on small instances can nevertheless improve a certificate when the output is reverified.  If post-processing produces a complete conjugate basis with maximum support $W_{\mathrm{red}}^{\mathrm{cert}}$, then
\begin{align}
B(C)\leq W_{\mathrm{red}}^{\mathrm{cert}}\leq W_{\mathrm{full}}^{\mathrm{cert}}.
\end{align}
A packet-preserving variant restricts the added stabilizers to their packet projections, so that the representative stays inside its packet-labeled subspace at the cost of a more constrained coset problem; Appendix~\ref{app:packet-local-distance} details the projected formulation and the resulting packet-labeled certificate refinement.  The examples below report raw spectral certificates unless explicitly labeled as reduced; this convention keeps the certificate reproducible from packet data alone.

\section{Applications}
\label{sec:applications}

This section applies logical spectroscopy to concrete codes: quasi-cyclic construction examples, including high-rate seeds, and a noncyclic two-variable example.  The computations expand the packet bases into explicit binary representatives, confirm stabilizer commutation and the conjugate Gram matrix, and measure the support weights used in the reported certificates.  Those weights depend on the deterministic representative choices of Appendix~\ref{app:deterministic-implementation}; they are implementation data attached to the theorem, not consequences of the dimension formula alone.  The same packet data also produce basis-width certificates, verified distance witnesses, and structured-erasure diagnostics, which the tables keep as separate quantities.

\subsection{Quasi-cyclic construction examples}

The examples in Table~\ref{tab:main} exercise the construction across seed shapes, lift lengths, and rates, and illustrate the certificate hierarchy.
The seed labels are internal bookkeeping names.  The superscript gives the lift length $\ell$ and the subscript the matrix shape; the prefix indicates the source or role of the seed: $\mathsf A$ denotes a reference seed whose distance is cross-checked against available literature, $\mathsf R$ denotes a normalized random monomial probe, $\mathsf P$ denotes a sparse parity-systematic probe, $\mathsf M$ denotes an auxiliary normalized monomial probe of the corresponding rectangular seed shape, $\mathsf H$ denotes a high-rate sparse search finalist, $\mathsf E$ denotes an even-order exponent-lift example, and $\mathsf T$ denotes an appendix-only toy example used for illustration.  Here normalized means gauge-fixed up to row and column monomial scalings, for example by shifting exponents so the first row and first column are zero; it is a convention for avoiding duplicate seed descriptions, not an additional distance or rank assumption.
This identifies ${\mathsf R}^{75}_{3,7}$ as a random monomial probe of shape $3\times7$ at $\ell=75$.

Except for the small \seedlabel{P}{15}{2}{4} row, whose distance is verified exactly below, the distance entries in Table~\ref{tab:main} are upper bounds, not exact distances or lower bounds.  The table separates three sources.  The column $\Dbasis$ is the lightest representative in the reported addressable basis.  The column $\Dwit$ is a standalone distance witness: a fast distance-sampling bound from Ref.~\cite{ZhouMaskara2026}, a completed QDist search~\cite{QDistRnd2023}, a literature value, a verified packet-sum witness, or a verified primary-packet witness.  The column $\Ddiag$ lists low-cost distance diagnostics.  In the odd-order rows these are seed-level duplicate-column, Cramer/minor, proper-divisor quotient-lift, or bounded-syzygy distance witnesses; here a syzygy is a sparse polynomial relation among seed columns, with row relations obtained by applying the same test to $A^\dagger$.  For the even row, $\Ddiag$ is the verified primary-packet witness described below.  The appendix gives the subclass definitions and the Cramer/minor and quotient-lift calculations used for the $\mathsf M$, $\mathsf R$, and $\mathsf P$ example rows.

The code-parameter column uses an exact distance when available and otherwise uses $\Dbest$.  A value enters $\Dbest$ only when it has a verified nontrivial logical representative, a cited literature distance witness, a completed QDist distance witness, a fast-sampling distance witness of Ref.~\cite{ZhouMaskara2026}, a verified packet-sum distance witness, a verified quotient-lift distance witness, or a verified primary-packet distance witness.  Unverified diagnostics and unavailable entries are not used in the parameter column.  The $\mathsf A$ rows are from Ref.~\cite{Cain2026}.  Their $\Ddiag$ entries are low-cost distance diagnostics ($18,24,24$).  Appendix~\ref{app:score} provides the pre-QDist diagnostic summary, and the companion data record the row-by-row distance-witness source and verification status for the table entries.

The high-rate rows were selected by a small sparse parity-systematic search over binomial entries with $500\leq n_{\rm phys}\leq3000$ and rate above $0.3$.  The Cramer lower-tail scores used for ranking, reported as $\min/q10/{\rm median}$, are $80/82/86$ for \seedlabel{H}{41}{3}{8} and $62/66/72$ for \seedlabel{H}{33}{3}{9}.  Here $q10$ is the tenth percentile of the sampled Cramer/minor weights.  Their minima are the $\Ddiag$ entries for these rows.  In both high-rate rows, the parameter column is determined by the tighter $\Dwit$ entry.

The small-scale rows were selected from a bounded scan with $300\leq n_{\rm phys}\leq900$ to illustrate finite-size diagnostics at modest blocklength, without implying high-rate behavior.  Their verified Cramer/minor diagnostics are $\Ddiag=17,58,75,10$ for \seedlabel{P}{15}{2}{4}, \seedlabel{R}{21}{4}{5}, \seedlabel{P}{35}{3}{4}, and \seedlabel{P}{31}{2}{5}, respectively.  The independent distance witnesses are $\Dwit=9,26,35,8$, which determine the parameter-column bounds except where they tie the constructed-basis bound.  For \seedlabel{P}{15}{2}{4}, the weight-nine witness is complemented by an exact information-set enumeration: on both CSS sides, the kernel has dimension $180$, and every information-set pattern of weight at most $8$ was re-encoded.  Every resulting word of physical weight at most $8$ was then tested against the opposite stabilizer row space.  No nontrivial logical operator of weight at most $8$ was found, so $d_X,d_Z\geq9$; together with the verified weight-nine witness, this gives the exact distance $d=9$ for this row.

The row \seedlabel{E}{14}{3}{4} is different: it has even lift length $\ell=14=2\cdot7$, so the finite-field packet basis theorem is replaced by the primary-packet and Bockstein tools summarized in Theorem~\ref{thm:beyond-odd} and Appendix~\ref{app:even-lifts}.  Here
\begin{align}
x^{14}+1
=
(x+1)^2(x^3+x+1)^2(x^3+x^2+1)^2,
\end{align}
and the primary-packet homology dimensions are $14,6,6$, giving $k=26$.  The displayed $\Ddiag=14$ is a verified primary-packet witness in the $(x+1)^2$ packet, and the same weight is reproduced as $\Dwit=14$ by a $50{,}000$-iteration randomized QDist run.  The reported $\Dbasis=24$ and $\Wcert=168$ are obtained by projecting the expanded binary kernel and stabilizer spaces with the three primary idempotents, choosing packetwise quotient representatives, and pairing the resulting $26$ $Z$- and $26$ $X$-representatives by a finite binary Gram inversion.  This is an even-order primary-packet implementation certificate, not the odd-order trace-dual field-basis construction.  The RREF comparison values $\Drref=35$, $\Wrref=101$, and $\Wrrefred=75$ are obtained by ordinary binary expansion followed by the same coordinate-order RREF, Gram-pairing, and greedy stabilizer-reduction baseline used for the other rows.  The corresponding seed matrices are displayed in Appendix~\ref{app:seed-data}.

For the odd-order rows in Table~\ref{tab:main}, the companion computation constructs the binary representatives, reports $\Dbasis$ and $\Wcert$, and confirms $ZX^\transpose=I_k$ over $\F_2$; this establishes the conjugate pairing of the displayed representatives.  For the even row, the same binary checks verify the primary-packet-projected representatives after Gram pairing, while the trace-dual finite-field basis theorem is not invoked.

Table~\ref{tab:main} illustrates why the certificate hierarchy matters. The \seedlabel{R}{81}{3}{7} monomial probe has generic trace-dual certificate $108$ before the rank-defective packet is included; its rank-one $x+1$ packet has representatives of weight $2\ell=162$, giving $\Wcert=162$.  The distance witness $\Dwit=16$ is much tighter than the constructed-basis value $\Dbasis=60$.  The \seedlabel{P}{75}{3}{7} sparse probe makes the $x+1$ packet full rank, while its generic trace-dual certificate is $180$ rather than the preliminary lifted-kernel value $160$.

Another pattern in Table~\ref{tab:main} is that, in several rows, verified low-cost distance diagnostics are close to the best independent randomized QDist or literature distance witnesses.  Once a randomized QDist search returns a logical support, the corresponding upper bound is an explicit distance witness; the randomized aspect concerns how close the found distance witness is to the true minimum.  When a verified seed-level or quotient-level relation has weight comparable to an independent QDist or literature distance witness, the low-weight representative can be traced to that relation before full-code expansion.  In these cases, the relation is explicit, such as a duplicate-column relation, Cramer/minor relation, proper-divisor quotient-lift relation, or bounded syzygy, and its expanded product representative is verified as a nontrivial logical operator.

The Cramer/minor diagnostics detect only one low-weight mechanism.  The seeds \seedlabel{M}{45}{5}{9}, \seedlabel{P}{75}{3}{7}, and \seedlabel{P}{91}{3}{7} have large Cramer/minor diagnostics, while proper-divisor quotient-lift witnesses give smaller verified representatives.  Their best reported verified distance witnesses have weights
\begin{align}
\Dbest
&=
50,\quad 25,\quad 70,
\end{align}
respectively.  For the two larger remaining values, long-running randomized QDist attempts recorded in the companion data did not return tighter distance witnesses.
A randomized search that terminates without a lighter distance witness gives no distance lower bound.  The diagnostic conclusion is narrower: the listed duplicate-column and small-determinantal mechanisms are absent or pushed to substantially larger weight in these rows, while quotient-lift relations can still produce smaller verified representatives.  Appendix~\ref{app:cramer-spectra} explains this separation through seed-level Cramer/minor spectra, including a lift-length-independent minor ceiling for monomial seeds (Proposition~\ref{prop:cramer-ceiling}).

\vspace{5pt} \noindent {\bf Automorphism.} Beyond the certificates, the binary representatives permit a direct expanded-binary demonstration of the natural quasi-cyclic translation symmetry on the three highlighted rows, namely \seedlabel{M}{45}{5}{9}, \seedlabel{P}{75}{3}{7}, and \seedlabel{P}{91}{3}{7}.  Let $U_x$ shift every length-$\ell$ cyclic orbit by one coordinate, and let $U^Z_x$ and $U^X_x$ denote the induced source-indexed matrices on the logical $Z$ and $X$ quotient coordinates extracted from the conjugate pairings.  The computation confirms that the shifted representatives remain in $\ker H_X$ and $\ker H_Z$ and that the shifted basis preserves the identity $ZX^\transpose$ pairing.  The shifted action is a CSS-preserving qubit-permutation Clifford, in the usual wire-permutation sense.  For these three rows, respectively, the smallest positive power of $U_x$ acting trivially on both logical quotients is $45,75,91$, and
\begin{align}
\rank(U^Z_x-I)=\rank(U^X_x-I)
=704,\,\, 1184,\,\, 1440.
\end{align}
This realizes a cyclic subgroup of permutation Clifford gates generated by the LP shift symmetry.  Both measurements are exact predictions of Corollary~\ref{cor:equivariance}: the trivial-packet contributions to Eq.~\eqref{eq:self-adjoint-K} are $k_{x+1}=80,16,16$ for these rows, so Eq.~\eqref{eq:equivariance-rank} gives $\rank(U^Z_x-I)=k-k_{x+1}=704,1184,1440$, and the least common multiple of the packet orders of $\xi_\Omega(x)$ is $\ell=45,75,91$.  Transversal Hadamard or phase gates would require separate self-duality and phase-compatibility conditions.

\vspace{5pt} \noindent {\bf Addressing arithmetic.}
Corollary~\ref{cor:equivariance} also quantifies what the packet labels save when logical qubits must be addressed individually.  In the single-variable case, $K_\Omega=\F_2[\xi_\Omega(x)]$ by construction, so the cyclic translates of a single stored representative span its entire $d_\Omega$-dimensional coefficient block over $\F_2$: every representative $Z_{(\Omega,\sigma,i,j,b)}$ is an exact $\F_2$-linear combination of translates of one fixed $Z_{(\Omega,\sigma,i,j,b_0)}$, with coefficients supplied in closed form by the matrices $M_\Omega(h)$, and this equality holds at the level of the stored binary vectors.  A complete conjugate basis is therefore determined by
\begin{align}
T=\sum_\Omega \dim_{K_\Omega}\cH_{1,\Omega}
\end{align}
stored representative templates before the $\F_2$ coefficient expansion.  An ancilla or measurement gadget laid out for one template transports to every translate by the same global shift whenever the hardware layout implements that group translation uniformly on the relevant block orbits; alternatively, $d_\Omega$ linearly independent translates of one template themselves form a basis of the coefficient block, at the cost of a non-identity but closed-form Gram block.  For the highlighted rows, the packet counts give $T=80+7\times16=192$ templates for the $k=784$ logical qubits of \seedlabel{M}{45}{5}{9}, $T=16+7\times16=128$ for the $k=1200$ of \seedlabel{P}{75}{3}{7}, and $T=16+9\times16=160$ for the $k=1456$ of \seedlabel{P}{91}{3}{7}---reuse factors of roughly $4$, $9$, and $9$.  The two-variable example of the next subsection needs $T=20$ templates for $k=36$, with translates taken under the full group $\Z_3\times\Z_3$ since $K_\Omega$ is generated by the character coordinates jointly.  The companion rank-profile data verify the packet profiles used in these counts: for the three highlighted rows, respectively, the nontrivial packets all have full row rank and there are $7$, $7$, and $9$ nontrivial Frobenius packets; the $x+1$ packet contributions are the values $k_{x+1}=80$, $16$, and $16$ quoted above.  A coordinate-order quotient basis does not expose this packet-level template reuse.  Circuit-level scheduling and surgery overhead require additional protocol data.

\begin{figure}[t]
\includegraphics[width=0.99\columnwidth]{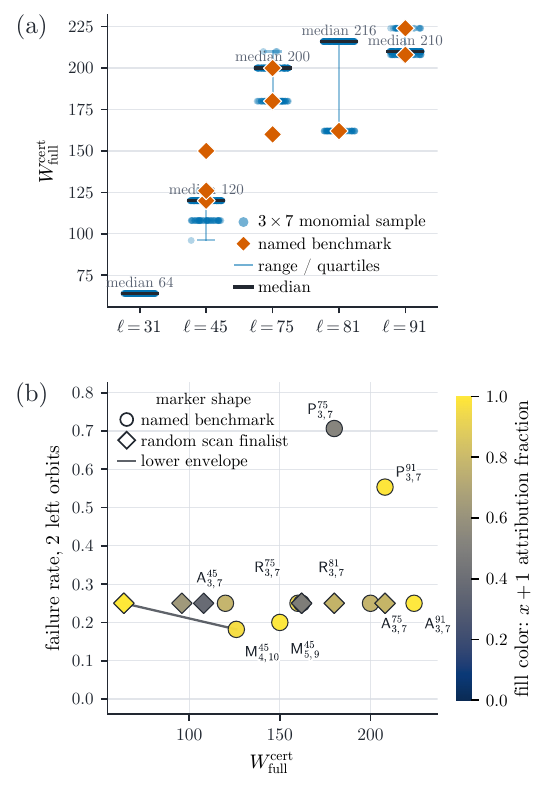}
\caption{ {\bf Certificate and erasure diagnostics.} (a) Distribution of the full conjugate-basis certificate $\Wcert$ over reproducible random monomial scans of $3\times7$ seeds for $\ell=31,45,75,81,91$. Blue circles are individual scan samples; vertical whiskers show the observed range; short horizontal blue marks show the interquartile range; orange diamonds mark named construction examples where shown; horizontal annotations give the median. (b) Certificate-erasure tradeoff for named examples and selected random monomial finalists passing the seed filter. Marker shape encodes source: circles are named examples, and diamonds are selected random monomial finalists. Fill color is independent of marker shape and gives the fraction of total erased-logical dimension attributed to the $x+1$ packet. The horizontal axis is $\Wcert$, the vertical axis is the exact failure rate after erasing two left-block-orbits, and the gray polyline is the lower envelope among the plotted candidates. Named-point labels identify the example seeds; the corresponding $\Dbest$ values are listed in Table~\ref{tab:main}. The original one- and two-orbit erasure values are reported in Table~\ref{tab:erasure}.}
\label{fig:codesignlandscape}
\end{figure}

\subsection{Distance witnesses versus basis certificates}

Table~\ref{tab:main} should be read with the seed-level distance-witness mechanism of Sec.~\ref{subsec:seed-distance-witnesses} in mind: $\Dbasis$ and $\Wcert$ describe the constructed addressable basis, while $\Dbest$ uses the lightest verified distance witness available.  The toy seed \seedlabel{T}{15}{2}{4} worked out in Appendix~\ref{app:worked-example} illustrates how large the gap can be ($\Dbest=2$ against $\Dbasis=12$), and the first small example row is the sparse \seedlabel{P}{15}{2}{4} seed precisely because it carries no such duplicate-column distance witness.

For any finite CSS code, binary Gaussian elimination can compute logical spaces as quotients $\ker H_X/\row(H_Z)$ and $\ker H_Z/\row(H_X)$, and symplectic Gram--Schmidt can turn such representatives into a conjugate binary basis.  These methods also provide explicit supports.  Table~\ref{tab:logical-coordinate-comparison} summarizes the coordinate data exposed by each construction, while the RREF columns in Table~\ref{tab:main} give a coordinate-order baseline.  Here $\Drref$ is the lightest nontrivial logical representative among the coordinate-order RREF quotient complements.  The paired width $\Wrref$ is obtained by forming $Z$- and $X$-side RREF quotient complements, computing their binary Gram matrix, replacing one side by the corresponding inverse-Gram linear combinations when the block is nonsingular, and verifying the resulting conjugate basis.  The reduced paired width $\Wrrefred$ then applies the deterministic greedy stabilizer-coset reduction of Sec.~\ref{sec:representatives-certificates} to each paired representative and verifies the conjugate CSS basis again.  The coordinate systems expose different data: a coordinate-order quotient complement is basis-dependent and generally does not identify the packet, K\"unneth summand, or reciprocal-packet trace pairing behind a representative.  The spectral construction supplies these labels together with the maximum support weight of the resulting conjugate basis.

Two consequences of the packet coordinate system are directly visible in the data.  First, the packet-labeled conjugate bases remain substantially lighter than coordinate-order RREF bases even after applying the deterministic greedy stabilizer-coset reduction of Sec.~\ref{sec:representatives-certificates} to the paired RREF representatives.  For example, \seedlabel{M}{45}{4}{10} has $\Wcert=126$, while the paired RREF width drops from $\Wrref=1672$ to $\Wrrefred=1422$ after this reduction.  The reduced RREF column is still not an optimized coset-leader computation, but it is a fairer baseline than the unreduced paired RREF width.  In this fixed-algorithm comparison the spectral entries are raw, unreduced representatives, while the RREF representatives receive the greedy coset reduction; in that limited sense the reported support gap is conservative.  A stronger coset reducer, for example BP-OSD-based reduction, could narrow both columns.  The comparison therefore concerns coordinate systems and their typical representative supports; it does not bound the optimal basis width $B(C)$ achievable from either starting point.  The weight comparison is illustrative; the structural distinction is the packet labels and the block-diagonal translation action of Corollary~\ref{cor:equivariance}, which no amount of coset reduction confers on a coordinate-order basis.  Second, the measured automorphism data above match the exact predictions of the translation equivariance of Corollary~\ref{cor:equivariance}: a single representative determines a whole orbit of translated representatives.

Figure~\ref{fig:codesignlandscape} summarizes the expanded certificate scan and exact erasure diagnostics used below.  It displays $\Wcert$, packet attribution, structured-erasure exposure, and verified distance witnesses as separate quantities that can be evaluated together during seed search.  In that figure, left-block-orbits and the erasure fractions $\rho_{s\rm orb}$ are as defined in \secref{sec:erasure}.

\subsection{A genuine two-variable finite Abelian example}

Consider
\begin{align}
G=\Z_3\times\Z_3, \quad R_G=\F_2[x,y]/(x^3+1,y^3+1).
\end{align}
This group is not cyclic: every nonidentity group element has order three. Take the sparse-polynomial seed
\begin{align}
A=\begin{pmatrix}
x^2&y&x+xy^2&x\\
y^2+xy^2&y^2+xy^2&x&y^2
\end{pmatrix}
\end{align}
and form the self-adjoint product $\LP(A,A^\dagger)$. Since $|G|=9$, $r=2$, and $n_{\rm seed}=4$,
\begin{align}
n=(n_{\rm seed}^2+r^2)|G|=180.
\end{align}
Full binary expansion gives
\begin{align}
\rank H_Z=72,
\qquad
\rank H_X=72,
\qquad
k=36,
\end{align}
and the CSS commutation check passes.
Table~\ref{tab:twovarpackets} lists the packet ranks and dimensions used in this example.

\begin{table}[t]
\begin{tabular}{cccccc}
\toprule
Packet & Rep. & $[K_\Omega: \F_2]$ & $\rank A_\Omega$ & $k_\Omega$ & binary dimension\\
\midrule
$\Omega_{00}$ & $(0,0)$ & 1 & 2 & 4 & 4\\
$\Omega_{10}$ & $(1,0)$ & 2 & 2 & 4 & 8\\
$\Omega_{01}$ & $(0,1)$ & 2 & 2 & 4 & 8\\
$\Omega_{11}$ & $(1,1)$ & 2 & 2 & 4 & 8\\
$\Omega_{12}$ & $(1,2)$ & 2 & 2 & 4 & 8\\
\midrule
Total & & & & & 36\\
\bottomrule
\end{tabular}
\caption{ {\bf Packet data for the noncyclic two-variable example over $\F_2[\Z_3\times\Z_3]$}. The representative $(a,b)$ denotes the character exponent pair. The degree $d = [K_\Omega : \F_2]$. $k_\Omega=\dim_{K_\Omega} \cH_1(\Omega)$ is the logical dimension inside that packet. The binary dimension is $[K_\Omega: \F_2] \cdot k_\Omega$, whose sum agrees with the full binary rank computation, $k=36$.}
\label{tab:twovarpackets}
\end{table}

The companion computation constructs $36$ $X$-type and $36$ $Z$-type representatives and checks
\begin{align}
X_iZ_j^\transpose=\delta_{ij},
\qquad
X_iH_Z^\transpose=0,
\qquad
Z_iH_X^\transpose=0.
\end{align}
The constructed representatives have $12\le\wt(X_i)\le57$ and $6\le\wt(Z_i)\le12$. These weights are not optimized; the point is that the finite Abelian packet construction survives binary expansion and produces a complete conjugate basis in a genuinely noncyclic two-variable group algebra.

The example also demonstrates exact whole-orbit erasure attribution. There are $20$ block-orbits in $Q_1$ under the group action. Table~\ref{tab:twovarerasure} enumerates one-, two-, and three-block-orbit erasures, compares the packet formula of \secref{sec:erasure} with full binary rank tests, and attributes total erased-logical dimension to packets. The mismatch count is zero in all three enumerations.

\begin{table*}[t]
\caption{Exact whole-orbit erasure attribution for the two-variable example. A failure is an erased block-orbit pattern with nonzero erased-logical dimension. The packet-total vector is ordered as $(\Omega_{00},\Omega_{10},\Omega_{01},\Omega_{11},\Omega_{12})$ and sums $d_Z+d_X$ over failing erasure patterns.}
\label{tab:twovarerasure}
\begin{tabular}{ccccccc}
\toprule
erased block-orbits & patterns & failures & failure rate & $\sum d_Z$ & $\sum d_X$ & packet-total vector\\
\midrule
1 & 20 & 0 & 0 & 0 & 0 & $(0,0,0,0,0)$\\
2 & 190 & 38 & 0.200 & 54 & 54 & $(8,16,32,16,36)$\\
3 & 1140 & 552 & 0.484 & 1012 & 1012 & $(164,320,584,320,636)$\\
\bottomrule
\end{tabular}
\end{table*}

\subsection{Orbit-erasure spectroscopy}
\label{sec:erasure}

Sections~\ref{sec:finite-abelian-basis} and \ref{sec:representatives-certificates} construct logical representatives and their certificates.  One use of the same packet data is to ask which located-erasure patterns can support logical operators.  For known erasure locations this is an exact CSS rank question, not a decoder simulation.  Let $E$ be a set of erased qubits and $\bar E$ its complement.  Since $H_X$ detects $Z$ errors and $H_Z$ detects $X$ errors, the erased-logical dimension for $Z$-type logical operators supported entirely inside $E$ is
\begin{align}
d_Z(E)=|E|-\rank H_X|_E-\rank H_Z+\rank H_Z|_{\bar E},
\end{align}
and similarly the erased-logical dimension for $X$-type logical operators is
\begin{align}
d_X(E)=|E|-\rank H_Z|_E-\rank H_X+\rank H_X|_{\bar E}.
\end{align}
An erasure is information-theoretically unrecoverable exactly when $d_Z(E)+d_X(E)>0$.

Whole-orbit erasure does not model independent qubit loss: under independent loss, the probability of erasing a complete length-$|G|$ orbit is negligible, and the failure rates below are not loss thresholds.  It is used here as a structured pattern whose erased-logical dimensions are exactly computable from the same finite-field data as the basis.  The physical interpretation is conditional: if each lift orbit is assigned to one addressable row, zone, tweezer tone, or analogous hardware channel with cyclic order, then loss of that channel is a whole-orbit erasure and the group generator is a simultaneous one-step shift on all such channels.  In that layout model, the same relative-support pattern for a representative or measurement gadget transports under the global shift; constructing a concrete circuit schedule is a separate task.

The packet decomposition becomes exact for erasures that respect the lift-group orbit structure.  If a block coordinate is erased together with its full $G$-orbit, the erased subspace is an $R_G$-submodule.  The rank test then decomposes packet by packet.

\begin{theorem}[Packet decomposition of whole-orbit erasures]
\label{thm:orbit-erasure-packets}
Let $E_S$ erase complete $G$-orbits of physical coordinates indexed by a block set $S\subset Q_1$, and let $\bar S$ be the complementary block set.  If $H_{X,\Omega}$ and $H_{Z,\Omega}$ are the packet components of the CSS check matrices over $K_\Omega$, then
\begin{align}
d_Z(E_S)&=\sum_\Omega [K_\Omega:\F_2]d_{Z,\Omega}(S),\\
d_X(E_S)&=\sum_\Omega [K_\Omega:\F_2]d_{X,\Omega}(S),
\end{align}
where
\begin{align}
d_{Z,\Omega}(S)
&=|S|-\rank H_{X,\Omega}|_S-\rank H_{Z,\Omega}
\nonumber\\
&\quad+\rank H_{Z,\Omega}|_{\bar S},\\
d_{X,\Omega}(S)
&=|S|-\rank H_{Z,\Omega}|_S-\rank H_{X,\Omega}
\nonumber\\
&\quad+\rank H_{X,\Omega}|_{\bar S}.
\end{align}
All ranks in $d_{Z,\Omega}$ and $d_{X,\Omega}$ are taken over $K_\Omega$.  Here $|S|$ counts erased group-algebra block coordinates, not binary qubits.
\end{theorem}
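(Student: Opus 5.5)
The plan is to show that each of the four rank terms in $d_Z(E_S)$ (and symmetrically $d_X(E_S)$) splits as a sum over packets of $[K_\Omega:\F_2]$ times the corresponding $K_\Omega$-rank. The formula is linear in those ranks, and $|E_S|=|G|\,|S|=\sum_\Omega d_\Omega|S|$ because $\sum_\Omega d_\Omega=\dim_{\F_2}\RG=|G|$. Once each rank splits this way, the stated identity follows by collecting terms packet by packet. The argument only uses the semisimple CRT decomposition $\RG\simeq\prod_\Omega K_\Omega$ with idempotents $e_\Omega$, exactly as in the proof of Theorem~\ref{thm:finite-abelian-basis}.

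\textbf{Step 1: the restrictions are $\RG$-module maps.} Because $E_S$ is a union of complete $G$-orbits, the restriction of a binary check matrix to the columns in $E_S$ is the binary expansion of the $\RG$-matrix obtained by keeping only the block columns in $S$. Concretely, $H_X|_{E_S}=\rho(\partial_1|_S)$, where $\partial_1|_S:\RG^{S}\to Q_0$. Similarly $H_Z|_{E_S}=\rho(\partial_2^\dagger|_S)$, and the same holds for $\bar S$. This uses the convention that $H_X=\rho(\partial_1)$ and $H_Z=\rho(\partial_2^\dagger)$. The binary rank of a check matrix equals the $\F_2$-dimension of its image, or equivalently of its row space, which is also $\RG$-stable. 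I will work with the column/image form of $\partial_1|_S$ and of $\partial_2^\dagger$, since row rank equals column rank.

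\textbf{Step 2: binary rank of an $\RG$-linear map splits over packets.} Let $M:\RG^a\to\RG^b$ be $\RG$-linear. Then $\Im M=\bigoplus_\Omega e_\Omega\Im M$, and $M$ commutes with each $e_\Omega$. Under $e_\Omega\RG\simeq K_\Omega$, the component $e_\Omega\Im M$ is the image of $M_\Omega:K_\Omega^a\to K_\Omega^b$, which is a $K_\Omega$-subspace. So
\begin{align}
\rank_{\F_2}\rho(M)=\sum_\Omega d_\Omega\,\rank_{K_\Omega}M_\Omega .
\end{align}
Restricting columns to $S$ commutes with packet evaluation, so $(M|_S)_\Omega=M_\Omega|_S$. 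This identifies $H_{X,\Omega}|_S$ with the restriction of the packet matrix. I apply this to the six maps $\partial_1$, $\partial_2^\dagger$, and their restrictions to $S$ and to $\bar S$.

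\textbf{Step 3: conventions for $H_{Z,\Omega}$.} This is the step that needs care. $H_Z$ is defined from $\partial_2^\dagger$, and the adjoint involves group inversion. One might worry that its packet components land in $\Omega^\vee$ rather than $\Omega$. This does not matter here. $\partial_2^\dagger$ is itself an $\RG$-linear matrix (bar applied entrywise), so Step 2 applies to it directly with its own packet evaluation in $K_\Omega$. The theorem's $H_{Z,\Omega}$ should be read as exactly this evaluation. Dimension counts are in any case invariant under the field isomorphism $\tau_\Omega$, so either reading gives the same ranks.

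\textbf{Step 4: collect terms.} Substituting the splittings into the binary formulas for $d_Z(E_S)$ and $d_X(E_S)$ gives the stated packet sums. As a consistency check, each packet quantity $d_{Z,\Omega}(S)$ is the ordinary erased-logical dimension of the packet CSS code over $K_\Omega$. This agrees with the intuition that logical operators supported in an $\RG$-submodule decompose by idempotent projection.
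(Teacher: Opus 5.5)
Your argument proves the displayed identity, and it is essentially the paper's proof written out. The paper likewise uses that $E_S$ is an $\RG$-submodule, that column restriction to $S$ and $\bar S$ commutes with $\RG\simeq\prod_\Omega K_\Omega$, and that packet quantities are weighted by $d_\Omega$ and summed. Splitting each of the four binary rank terms separately, together with $|E_S|=\sum_\Omega d_\Omega|S|$, is a clean way to carry this out.

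\textbf{A claim to correct.} One claim is wrong, and it concerns the packet attribution the paper uses this theorem for.

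\begin{itemize}
\item \emph{Step~3.} You say the reading of $H_{Z,\Omega}$ does not matter because ranks are invariant under $\tau_\Omega$. In fact $(\partial_2^\dagger)_\Omega$ is, up to $\tau_\Omega$, the transpose of $(\partial_2)_{\Omega^\vee}$. Hence $\rankop_{K_\Omega}(\partial_2^\dagger)_\Omega=\rankop_{K_{\Omega^\vee}}(\partial_2)_{\Omega^\vee}$, which in general differs from $\rankop_{K_\Omega}(\partial_2)_\Omega$. The map $\tau_\Omega$ relates ranks at $\Omega$ to ranks at $\Omega^\vee$; it does not make the two readings agree packet by packet. Only the weighted totals agree, because $\Omega\mapsto\Omega^\vee$ is a bijection with $d_{\Omega^\vee}=d_\Omega$. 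That is all Step~4 needs, so the identity is unaffected.
\item \emph{Step~4 consistency check.} Under your reading this check is false. The pair $\bigl((\partial_1)_\Omega,(\partial_2^\dagger)_\Omega\bigr)$ is not a CSS pair over $K_\Omega$, since $(\partial_1)_\Omega(\overline{\partial_2})_\Omega\neq0$ in general. The packet summand then need not be a dimension at all.
\end{itemize}

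\textbf{Counterexample.} Take $\ell=7$, scalar seeds $a=x^3+x^2+1$ and $b=1$, and let $S$ be the first block. Your $d_{Z,\Omega}(S)$ equals $-1$ on the packet of $x^3+x+1$ and $+1$ on its reciprocal, yet the erased-logical space is zero.

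\textbf{Correct convention.} The $\Omega$-component of the erased $Z$-logical space is $\ker(\partial_1)_\Omega|_S/\bigl(\Im(\partial_2)_\Omega\cap K_\Omega^S\bigr)$. Its dimension is given by the stated formula with $H_{Z,\Omega}:=(\partial_2)_\Omega^{\transpose}$, whose row space is the $\Omega$-component of $\Im\partial_2$. With this convention, $d_{X,\Omega}(S)$ counts the $\Omega^\vee$-component of the erased $X$-logicals, consistent with Lemma~\ref{lem:reciprocal-binary-pairing}. The paper's brief proof applies the erasure-rank formula to each packet component of the code, so it implicitly relies on this CSS-compatible convention.
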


\begin{proof}
A whole-orbit erasure is an $R_G$-submodule of the physical block module.  Column restriction to $S$ and $\bar S$ commutes with the semisimple decomposition $R_G\simeq\prod_\Omega K_\Omega$.  Applying the erasure-rank formulas componentwise over $K_\Omega$ gives the displayed $K_\Omega$-dimensions.  Multiplication by $[K_\Omega:\F_2]$ converts them to binary dimensions, and summing over packets gives the result.
\end{proof}

The odd-order assumption can in fact be dropped from this theorem: the same decomposition holds over the primary local packets of an arbitrary finite Abelian lift group, with the packet quantities computed as binary dimensions of the local restrictions (Proposition~\ref{prop:primary-decomposition} in Appendix~\ref{app:even-lifts}).

Figure~\ref{fig:codesignlandscape} and Table~\ref{tab:erasure} illustrate why basis width, verified distance witnesses, and structured-erasure exposure are distinct objectives.  In this comparison, the monomial rows all sit on the exact $2/(n_{\rm seed}+1)$ baseline derived below, so their $\rho_{2\rm orb}$ differences reflect grid size rather than seed quality.  The sparse parity-systematic probes \seedlabel{P}{75}{3}{7} and \seedlabel{P}{91}{3}{7} repair parts of the trivial-packet rank profile, but they are more exposed to this particular one- and two-left-block-orbit erasure model.  Conversely, an attractive erasure score does not compensate for a verified small distance witness; such seeds should be treated as warning examples rather than construction examples.

For the self-adjoint quasi-cyclic comparisons, ``left'' block-orbits are the $n_{\rm seed}^2$ complete cyclic orbits in the left tensor summand $R_\ell^{n_{\rm seed}}\otimes_{R_\ell}R_\ell^{n_{\rm seed}}\subset Q_1$, and $\rho_{s\rm orb}$ denotes the fraction of size-$s$ left-block-orbit erasure patterns with nonzero erased-logical dimension.  For monomial seeds, $\rho_{1\rm orb}=0$: a logical supported in a single grid cell would require a nonzero polynomial annihilating a monomial column of $A$ or $A^\dagger$, and monomials are units in every packet.  For two erased left-block-orbits, the same unit argument excludes every pair of cells that shares neither a row nor a column, while any sharing pair supports a trivial-packet representative of the form Eqs.~\eqref{eq:trivial-packet-Z}--\eqref{eq:trivial-packet-X}; the monomial failures are exactly the sharing pairs, $2n_{\rm seed}\binom{n_{\rm seed}}{2}$ of the $\binom{n_{\rm seed}^2}{2}$ patterns, giving $\rho_{2\rm orb}=2/(n_{\rm seed}+1)$.  Table~\ref{tab:erasure} reports the original $\rho_{s\rm orb}$ values.  Relative to these baselines, \seedlabel{P}{75}{3}{7} has excesses $0.449$ and $0.457$ for one and two erased left-block-orbits, while \seedlabel{P}{91}{3}{7} has excesses $0.327$ and $0.304$.

These exact whole-orbit diagnostics slot into a staged seed-search procedure---algebraic prefilters first, full-code distance and decoder tests last---whose concrete layers and diagnostic table are laid out in Appendix~\ref{sec:seed-prefilters}.

\section{Discussion and Outlook}
\label{sec:discussion}

Abelian lifted-product logical operators can be constructed packetwise rather than by forcing a global module systematic form. The $x+1$ obstruction explains why dense monomial seeds defeat the naive HGP pivot argument. Once the group algebra is decomposed into Frobenius packets, however, the problem becomes ordinary finite-field homology in each packet. Primitive idempotents lift representatives back to the physical code; the same-packet tensor rule excludes spurious cross-packet products; and reciprocal trace duality gives a local mechanism for pairing $X$ and $Z$ bases.

The same calculation produces an addressable conjugate logical basis for odd-order finite Abelian $\mathrm{LP}(A,B)$ codes, a verified basis-width certificate, and an exact spectral decomposition of whole-orbit erasure failures.  These outputs are exercised at scale on the quasi-cyclic examples of Table~\ref{tab:main}, including high-rate codes, and on a noncyclic $\F_2[\Z_3\times\Z_3]$ example where no single-variable factorization is available. In what follows, we note some notable consequences of the packet viewpoint and the associated diagnostics, as well as the outlook.

\vspace{3pt}\noindent\textbf{Rank-profile interpolation.}
The spectral viewpoint also separates packetwise rank profiles that are algebraically possible from those realizable with sparse or hardware-local seed entries.  Relative to the fixed CRT identification, the entrywise bijection $R_G^{r\times n}\simeq\prod_\Omega K_\Omega^{r\times n}$ realizes any family of matrices $M_\Omega\in K_\Omega^{r\times n}$ by a unique seed $A$ with packet evaluation $A_\Omega=M_\Omega$; in particular, any rank profile $0\le s_\Omega\le\min(r,n)$ is achievable, and the profiles of $A$ and $B$ can be interpolated independently for a general product $\mathrm{LP}(A,B)$, while for $\mathrm{LP}(A,A^\dagger)$ the $B$ profile at $\Omega$ is determined by the $A$ profile at the reciprocal packet $\Omega^\vee$.  The interpolating entries may be dense in the group basis, so sparse or hardware-local rank-profile engineering becomes a constrained optimization problem, addressed by sparse mechanisms such as parity-skeleton tuning and divisor rank defects.

\vspace{3pt}\noindent\textbf{Search diagnostics.}
The certificate $\Wcert$ is also useful as a diagnostic during seed search, before any basis is adopted.  Figure~\ref{fig:codesignlandscape} illustrates this use with a modest reproducible scan of random monomial self-adjoint $3\times7$ seeds at $\ell=31,45,75,81,91$: in Fig.~\ref{fig:codesignlandscape}(b), no single seed traces the lower envelope in this scan, suggesting that certificate size and structured-erasure exposure trade off rather than co-optimize.  The reported certificates fix deterministic pivot choices and use optimized common pivots where available; pivot selection is a construction-level choice, not a code invariant.  These examples motivate using spectral ranks, $\Wcert$, pivot selection, and orbit-erasure spectra as inexpensive filters before a larger code-search study with decoder simulations; whether smaller $\Wcert$ alone predicts better decoding is a question for those decoder studies.  Because these diagnostics are computed from packet-level finite-field data, the framework supplies an algebraic prefilter for finite-length LP seed search: it can reject obvious low-distance or high-overhead seeds and identify candidates for full-code distance or decoder simulations (Appendix~\ref{sec:seed-prefilters}).  A broader numerical study of this search procedure is deferred to Ref.~\cite{LeeCodeSearchInPrep}.

\vspace{3pt}\noindent\textbf{Hardware motivation.}
For multicyclic lifts, a monomial seed entry $x^ay^b$ reads as a displacement by $(a,b)$ on a periodic two-dimensional array, and a sparse polynomial entry specifies a small displacement palette, so bounded group-basis support doubles as a constraint on the number and range of hardware moves in reconfigurable neutral-atom and Rydberg-array architectures~\cite{Sahay2023,Ma2023,PecorariPupillo2025,Cain2026,zhao2026ultrahighrate,WangMueller2026}.  Under an orbit-respecting layout of the kind described in Sec.~\ref{sec:erasure}, the equivariance of Corollary~\ref{cor:equivariance} has a direct layout reading: an ancilla or measurement gadget laid out for one representative transports to the entire packet orbit of representatives by the same global array translation, so addressing cost is paid once per orbit instead of once per logical coordinate.  The erasure diagnostics are natural in this conditional setting as well: atom loss, leakage, and certain decay events can often be detected or converted into located erasures, and for whole-orbit patterns the exact erased logical space decomposes into the same packets used for the logical-basis construction, identifying which packets become exposed.  On multidimensional frequency tori the primitive idempotents of individual packets can be spatially dense; partial packet localization over reciprocal-stable packet unions (Appendix~\ref{app:packet-local-distance}) trades packet resolution for physical support.  Circuit-level schedules, threshold estimates, and resource estimates for specific atom-array architectures are the next layer built on these examples.

\vspace{3pt}
The construction is thus an algebraic interface for later distance, decoder, and hardware studies.  For Abelian lifted products, logical qubits have natural packet addresses; once those addresses are explicit, basis construction, support accounting, and structured-erasure attribution become outputs of the same finite-field computation, while distance estimation, BP-OSD simulations, erasure-decoder studies, and architecture-specific schedules remain separate optimization layers.  Combined with lifting techniques, the same coordinate data may also help search for more efficient surgery gadgets~\cite{Hirasaki2026}.  Non-Abelian lifts and full decoder simulations require additional work.

Finally, we remark that the odd-order assumption is mainly for simplification. Theorem~\ref{thm:beyond-odd} and Appendix~\ref{app:even-lifts} spell out what survives without it, from the general primary packet decomposition to the exact even-lift dimension formulas, with reciprocal pairing implemented by the Frobenius-ring duality of the local packets rather than by field trace duality (Lemma~\ref{lem:primary-pairing}).  In particular, for bivariate-bicycle codes~\cite{BravyiCross2024}, the primary packet decomposition, whole-orbit erasure attribution, and the Bockstein dimension recursion apply directly.  The associated sparse code-search moves, exponent lifts and nilpotent repairs, and the multivariable even theory will be explored in Ref.~\cite{LeeCodeSearchInPrep}.

\begin{acknowledgments}

J.Y.L. acknowledges Hengyun Zhou and Nishad Maskara for rapid communication on fast distance sampling results on the example seeds~\cite{ZhouMaskara2026}.

J.Y.L. thanks Yuta Hirasaki for ongoing collaboration~\cite{Hirasaki2026} on lifted product codes as well as Ehud Altman for hosting his stay at the University of California, Berkeley. The stay enabled his participation in the Simons Institute for the Theory of Computing program ``Error-Correcting Codes: Theory and Practice,'' which provided important background and motivation for this research direction.
J.Y.L. acknowledges support from the faculty startup grant at the University of Illinois, Urbana-Champaign, and from the Korea Institute for Advanced Study through the Quantum Universe Center scholar program.
The randomized distance searches used QDist~\cite{QDistRnd2023}, and BP-OSD decoder checks used the LDPC software package~\cite{RoffeLDPC2022} and its BP-OSD implementation for quantum codes~\cite{RoffeWhite2020}.
\end{acknowledgments}

\section*{Data and code availability}

Machine-readable seeds, random-scan records, certificate scripts, figure inputs, and decoder command lines used for the numerical tables and figures will be collected in a companion repository. Until that repository is public, the code and data are available from the corresponding author upon reasonable request.

\appendix

\section{Reader's Guide} \label{app:reader-guide}

This appendix is a self-contained guide to the main construction. It gives a short tutorial on the group-algebra language used for lifted-product codes. The main text gives the formal statements; here we emphasize meaning and common pitfalls.

\subsection{Lifted Product Codes} \label{app:LP_code}

The lifted-product (LP) construction starts from two seed maps over a group algebra.
Let $R = \F_2[G]$, where $G$ is a finite Abelian group, and let
\begin{align}
    A:R^{n_A}\to R^{r_A},\quad B:R^{n_B}\to R^{r_B}
\end{align}
be two one-step chain complexes.  Each entry of a seed matrix is an element of $R$, expanded as $a = \sum_g a_g g$ with $a_g \in \F_2$ and $g \in G$.
After choosing the regular representation of $G$, which maps
\begin{align}
    g \mapsto S_g \in \F_2^{|G| \times |G|},
\end{align}
every entry expands to a $|G| \times |G|$ binary matrix, and the seed maps become ordinary binary maps
\begin{align}
    A_{\rm b}:\mathbb F_2^{|G|n_A}\to \mathbb F_2^{|G|r_A}, \quad B_{\rm b}:\mathbb F_2^{|G|n_B}\to \mathbb F_2^{|G|r_B},
\end{align}
which can be used to construct respective classical codes. This is called \emph{binary expansion}. For example, in the single-variable quasi-cyclic setting $R = \F_2[\mathbb{Z}_\ell] = \F_2[x]/(x^\ell + 1)$, the group element $x$ expands to the cyclic shift matrix. Therefore, if the seed shape and the group-basis support of each seed entry remain bounded, the expanded CSS checks have bounded weight and the construction gives a qLDPC family.

The LP code $\LP(A,B)$ is obtained from the product complex
\begin{align}
    Q_2 \xrightarrow{\partial_2} Q_1 \xrightarrow{\partial_1} Q_0,
\end{align}
where each $Q_i$ is an $R$-module
\begin{align}
Q_2&=R^{n_A}\otimes_R R^{n_B}, \nonumber \\
Q_1&=(R^{n_A}\otimes_R R^{r_B})\oplus(R^{r_A}\otimes_R R^{n_B}), \nonumber \\
Q_0&=R^{r_A}\otimes_R R^{r_B},
\end{align}
with boundary maps
\begin{align}
\partial_2(w\otimes z)&=(w\otimes Bz,Aw\otimes z), \nonumber \\
\partial_1(u\otimes v,a\otimes b)&=Au\otimes v+a\otimes Bb,
\end{align}
satisfying $\rd_1 \rd_2 = 0$.  In matrix form one may write
\begin{align} \label{eq:boundary_map}
    \rd_1 &=\begin{bmatrix} A\otimes I_{r_B} & I_{r_A}\otimes B
\end{bmatrix}, \nonumber \\
\rd_2 &=
\begin{bmatrix}
I_{n_A}\otimes B^\transpose & A^\transpose\otimes I_{n_B}
\end{bmatrix}^\transpose.
\end{align}
In \eqnref{eq:boundary_map}, the superscript $\transpose$ is the ordinary transpose of the displayed module indices coming from the chosen tensor-product vectorization.  The pre-expansion operator that expands to the physical binary transpose is the group-algebra adjoint $\dagger$ defined in Sec.~\ref{sec:finite-abelian-basis}.
The associated CSS code takes $H_X$ from the binary expansion of $\rd_1$ and $H_Z$ from the binary transpose of the expansion of $\rd_2$, equivalently from the expansion of $\rd_2^\dagger$. Thus, the logical operators are represented by the degree-one homology
\begin{align} \label{APP:logical}
\cL_Z =\ker\partial_1/\Im\partial_2,
\quad
\cL_X =\ker\partial_2^\dagger/\Im\partial_1^\dagger.
\end{align}
For non-Abelian $G$, the same binary expansion via the regular representation still produces group-algebra LDPC matrices~\cite{PanteleevKalachev2022}, but the tensor-product description requires compatible left and right $R$-module conventions; we restrict here to Abelian $G$, where $R$ is commutative.

\subsection{Hypergraph product logical bases}

The above construction contains the usual hypergraph product code $\HGP(A,B)$ as a special case, where $R = \F_2$. Using the K\"unneth formula, the logical space can be obtained as
\begin{align} \label{eq:reader-kunneth-hgp}
    \cH_1(A \otimes B) = \cH_1(A) \otimes \cH_0(B) \,\oplus\, \cH_0(A) \otimes \cH_1(B).
\end{align}
Note that $\cH_0(B) = \F_2^{r_B}/\Im B =: \coker B$; for full row-rank $B$, $\cH_0(B) = 0$.  A nonzero K\"unneth summand requires both tensor factors in that summand to be nonzero: a kernel factor from one input and a cokernel factor from the other. With our convention, the notation $\HGP(A,B)$ means the $\F_2$ specialization of $\LP(A,B)$.  The common convention in the HGP literature instead uses the second classical check matrix as a dual map; in the present notation, that is $\HGP(A,B^\transpose)$~\cite{TillichZemor2014,BravyiHastings2013}.

We now spell out the elementary linear algebra behind the logical construction. Let $M:\F_2^n\to\F_2^r$ be a binary matrix of rank $s$.  After row operations and column permutations, write
\begin{align}
    M \sim \mqty(I_s & P \\ \bm{0} & \bm{0}).
\end{align}
Let $q = n-s$ and $t = r-s$.
The $q$ free domain coordinates give a basis of $\ker M$.
If $\mathbf e_j$ is the $j$-th standard basis vector of $\F_2^q$, then
\begin{equation} \label{eq:APP_kernel}
    b_j=\begin{pmatrix}P \mathbf e_j\cr \mathbf e_j\end{pmatrix},
    \quad j=1,\ldots,q,
\end{equation}
satisfies $M b_j = 0$, i.e., $b_j \in \ker M$.
On the other hand, the $t$ zero target rows give representatives of the cokernel $\coker M =\F_2^r/ \Im M$.
If $f_\alpha$ is the $\alpha$-th standard basis vector of $\F_2^t$, we choose
\begin{align}
    c_\alpha = \mqty( 0 \\ f_\alpha ), \quad \alpha=1,\ldots,t,
\end{align}
as a cokernel representative.  The space $\cH_1(M)$ is represented by the kernel vectors $b_j$, while $\cH_0(M)$ is represented by the cokernel vectors $c_\alpha$.

Applying this separately to $A$ and to $B$, we obtain
\begin{align}
b_i^{A}\in \ker A,\quad c_\alpha^{A}\in \coker A, \nonumber\\
b_j^{B}\in \ker B,\quad c_\beta^{B}\in
\coker B.
\end{align}
The two $Z$-logical families in Eq.~\eqref{eq:reader-kunneth-hgp} are represented by
\begin{align}
Z^{(L)}_{i\beta}
&=
\bigl(b_i^{A}\otimes c_\beta^{B},0\bigr),\quad  Z^{(R)}_{\alpha j}=\bigl(0,c_\alpha^{A}\otimes b_j^{B}\bigr).
\end{align}
The first family lives in the left physical summand $\F_2^{n_A}\otimes \F_2^{r_B}$, while the second lives in the right summand $\F_2^{r_A}\otimes \F_2^{n_B}$.

We now describe the dual data used to choose conjugate $X$-logical representatives. For a single input map $M$, the dual of $\ker M$ is represented by
\begin{align}
\coker M^\transpose
&=
\F_2^n/\Im M^\transpose
=
\F_2^n/\operatorname{row}(M),
\end{align}
with representatives
\begin{align}
    a_j&= \mqty(0 \\ \mathbf e_j),\quad j=1,\ldots,q.
\end{align}
These pair with the kernel vectors in \eqnref{eq:APP_kernel} as
\begin{align}
a_i^\transpose b_j
&=
\delta_{ij}.
\end{align}
Similarly, the dual of $\coker M$ is represented by $\ker M^\transpose$.  In the same row-reduced coordinates, choose
\begin{align}
d_\alpha = \mqty(0 \\ f_\alpha), \quad  \alpha=1,\ldots,t,
\end{align}
so that $M^\transpose d_\alpha = 0$ and $d_\alpha^\transpose c_\beta = \delta_{\alpha \beta}$.

Applying this dual construction separately to $A$ and $B$, the conjugate $X$-logical representatives are
\begin{align}
X^{(L)}_{i\beta}
&=
\bigl(
a_i^{A}\otimes d_\beta^{B},0
\bigr),
\nonumber\\
X^{(R)}_{\alpha j}
&=
\bigl(
0,d_\alpha^{A}\otimes a_j^{B}
\bigr).
\end{align}
With the above dual choices, the pairings are
\begin{align}
\langle Z^{(L)}_{i\beta},X^{(L)}_{i'\beta'}\rangle
&=
\delta_{ii'}\delta_{\beta\beta'},
\nonumber\\
\langle Z^{(R)}_{\alpha j},X^{(R)}_{\alpha'j'}\rangle
&=
\delta_{\alpha\alpha'}\delta_{jj'},
\end{align}
and the cross-summand pairings vanish.

In the $R=\F_2$ specialization, row reduction on each input map exposes kernel coordinates, cokernel coordinates, and the dual coordinates needed for conjugate pairing.  The product complex then tensors the $A$-data with the $B$-data to form a grid of logical labels.  The lifted-product case is harder because $A$ and $B$ are matrices over a group algebra rather than over a field; the spectral construction restores this same field-linear-algebra picture packet by packet.

\subsection{Failure in lifted-product codes}

The ring $R_\ell$ can contain nonzero elements that are not invertible, and such an element is not a valid pivot.  A $1\times1$ example already shows the problem.  In $R=\F_2[x]/(x^3+1)$ one has
\begin{equation}
    (x+1)(x^2+x+1)=x^3+1=0,
\end{equation}
with neither factor zero, so $x+1$ is a zero divisor and has no inverse.  The $1\times1$ matrix $M=(x+1)$ is nonzero, yet $M(x^2+x+1)=0$: over a field a nonzero $1\times1$ matrix has zero kernel, but over the ring, treating $x+1$ as a pivot and dividing by it would destroy a real kernel vector.  This is the basic reason that ordinary Gaussian elimination over the full LP ring can fail.  The right replacement is spectral: decompose the ring into field components using the Chinese remainder theorem, do Gaussian elimination inside those fields, and then lift the answers back.

\subsection{\texorpdfstring{Frobenius packets over $\F_2$}{Frobenius packets over F2}}

Why should factoring $x^\ell+1$ be the analogue of Fourier analysis?  Over the complex numbers, the cyclic shift matrix $S$ is diagonalized by the discrete Fourier transform.  Diagonalization is useful because each Fourier vector is an eigenvector:
\begin{equation}
    S v_\alpha=\alpha v_\alpha,
    \qquad \alpha^\ell=1.
\end{equation}
Then any polynomial in the shift acts on that mode by scalar evaluation:
\begin{equation}
    p(S)v_\alpha=p(\alpha)v_\alpha.
\end{equation}
So diagonalizing $S$ turns a matrix problem into separate scalar problems, one frequency at a time.

Over $\F_2$, the same idea survives, but individual roots are usually not available over $\F_2$ itself. The shift generally does not split into one-dimensional eigenspaces.  Instead, it decomposes into irreducible
invariant subspaces, one for each irreducible factor $g$ of
$x^\ell+1$.
If $\alpha$ is a root of $x^\ell+1$, then a binary polynomial $p(x)$ satisfies $p(\alpha^2)=p(\alpha)^2$, since $(u+v)^2=u^2+v^2$ under $F_2$.  If $p(\alpha)=0$, then
\begin{equation}
    p(\alpha^2)=p(\alpha^4)=p(\alpha^8)=\cdots=0.
\end{equation}
In other words, a binary polynomial cannot have $\alpha$ as a root without also having all its Frobenius conjugates as roots.  The set $\{\alpha,\alpha^2,\alpha^4,\alpha^8,\ldots\}$ until it repeats is called the Frobenius orbit of $\alpha$.  In the cyclic case, the root value $\alpha^k$ is equivalently the value of the character sending the generator of $\Z_\ell$ to $\alpha^k$; thus this root orbit is the character packet of Sec.~\ref{sec:pivots-to-packets} under the standard cyclic identification.  An irreducible factor $g(x)$ of $x^\ell+1$ is precisely the minimal polynomial whose roots are one such packet:
\begin{align}
\Omega_g &= \{\alpha \,|\, g(\alpha)=0\},
\end{align}
and this orbit is encoded by the irreducible minimal polynomial
\begin{align}
    g(x) &= \prod_{\beta\in\Omega_g}(x-\beta),
\end{align}
where $\beta$s are not expressible in $\F_2$.
The finite-field version of Fourier analysis proceeds by packets rather than individual roots.  Factoring $x^\ell+1$ over $\F_2$ finds these packets.

To illustrate, consider $x^7 + 1$ and take $\alpha$ to be a primitive seventh root of unity.  The seven frequency indices are $\{ 0,1,2,3,4,5,6\}$, where they correspond to the integer $k$ labeling the mode $\alpha^k$.
The Frobenius map $\alpha\mapsto\alpha^2$ doubles the index modulo seven. Therefore, the indices split into Frobenius orbits
\begin{equation}
    \{0\},\qquad \{1,2,4\},\qquad \{3,6,5\}.
\end{equation}
Correspondingly,
\begin{equation}
    x^7+1=(x+1)(x^3+x+1)(x^3+x^2+1)
\end{equation}
over $\F_2$.

The factor $x+1$ is the trivial or constant packet.  It has root $x=1$, and the cyclic shift eigenvector with eigenvalue $1$ is the all-ones vector
\begin{equation}
    v=\sum_{n=0}^6 x^n \sim (1,1,1,1,1,1,1).
\end{equation}
Shifting this vector does nothing.  This is the finite-field analogue of the zero-frequency component.

The two cubic factors are reciprocal.  If
\begin{equation}
    g(x)=x^3+x+1,
\end{equation}
then its reciprocal polynomial is
\begin{equation}
    g^\vee(x) := x^{\deg g}g(x^{-1})=x^3g(x^{-1})=x^3+x^2+1.
\end{equation}
On modes, reciprocity is inversion.  The inverse of the packet $\{\alpha,\alpha^2,\alpha^4\}$ is
\begin{equation}
    \{\alpha^{-1},\alpha^{-2},\alpha^{-4}\}
    =\{\alpha^6,\alpha^5,\alpha^3\},
\end{equation}
which is the other nontrivial Frobenius packet. Note that in this case, one may write a packet vector (primitive-idempotent representative) for $g(x)$ as
\begin{align}
v=1+x+x^2+x^4 \sim (1,1,1,0,1,0,0).
\end{align}
along with $xv$ and $x^2v$.
More generally, given an irreducible factor $g$ with $\deg g = d_g$, $\{ v, xv, \ldots \}$ obtained by multiplying $v$ with $\{1,x,...,x^{d_g-1} \}$ forms the basis for the subspace.

\subsection{Packet projectors and packet evaluation} \label{app:idempotent}

Suppose $x^\ell+1=\prod_g g(x)$ is the factorization into distinct irreducible polynomials and square-free. The Chinese remainder theorem (CRT) says
\begin{equation}
    R_\ell=\F_2[x]/(x^\ell+1)\cong \prod_g \Big[ \F_2[x]/(g) \Big],
\end{equation}
where $K_g = \F_2[x]/(g)$ is the packet field. For each irreducible factor $g$, there is a unique element $e_g\in R_\ell$, the primitive central idempotent attached to $g$~\cite{DummitFoote2004,MacWilliamsSloane1977}, such that
\begin{equation}
    e_g\equiv \delta_{g,h} \pmod h.
\end{equation}
It further satisfies
\begin{align}
e_g^2=e_g,
\quad e_ge_h=0\ (g\ne h),
\quad \sum_g e_g=1.
\end{align}
Multiplication by $e_g$ projects any $R_\ell$-module element onto its
$g$-packet component.  Conversely, if $u_g\in K_g^n$ is a packet-local
representative, then choosing any lift $\widetilde u\in R_\ell^n$ and
multiplying by $e_g$ gives
\begin{align}
u^{\mathrm{lift}}
&=
e_g\widetilde u
\in
R_\ell^n,
\end{align}
whose $g$ component is $u_g$ and whose other packet components vanish:
\begin{align}
u^{\rm lift}\bmod g=u_g,
\quad
u^{\rm lift}\bmod h=0
\qquad
(h\neq g).
\end{align}
This is why we call $e_g$ a packet projector. For $\ell=7$, the primitive idempotents are
\begin{align}
    e_{x+1}&=1+x+x^2+x^3+x^4+x^5+x^6,\nonumber\\
    e_{x^3+x+1}&=1+x+x^2+x^4,\nonumber\\
    e_{x^3+x^2+1}&=1+x^3+x^5+x^6.
\end{align}
After replacing $x$ by the cyclic shift matrix $S$, the idempotent becomes a concrete binary matrix projector $E_g=e_g(S)$ onto the shift-invariant packet subspace $e_gR_\ell^n\subset R_\ell^n\simeq\F_2^{n\ell}$---the finite-field analogue of filtering a signal to one Fourier band.

Packet evaluation is the corresponding coordinate description of a seed matrix.  Writing $\xi_g=x\bmod g\in K_g$, the shift variable becomes the finite-field scalar $\xi_g$ on the $g$ packet, and a seed $A=(p_{ij}(x))\in R_\ell^{r\times n}$ evaluates to
\begin{align}
A_g
&=
(p_{ij}(\xi_g))
=
(p_{ij}(x)\bmod g(x))
\in
K_g^{r\times n}.
\end{align}
This is an ordinary matrix over the field $K_g$, so Gaussian elimination is valid inside the packet: every nonzero packet scalar is a legitimate pivot.  For example, for $g(x)=x^3+x+1$ one has $\xi^3=\xi+1$ in $K_g$, and
\begin{align}
(\xi+1)(\xi^2+\xi)
=
\xi^3+\xi
=
1,
\end{align}
so $\xi+1$ is invertible in this packet even though $x+1$ is a bad global pivot over $R_\ell$, because it vanishes in the trivial packet.

At the module level, the binary block-circulant map $A_{\rm bin}=(p_{ij}(S))$ preserves every packet subspace, and the restriction of $A$ to $e_gR_\ell^n\to e_gR_\ell^r$, after the identification $e_gR_\ell\simeq K_g$, is exactly $A_g$.  Equivalently, if $d=\deg g$ and an $\F_2$-basis of $K_g$ is chosen, each field element becomes a $d\times d$ binary multiplication matrix, and a nonzero field pivot is an invertible $d\times d$ binary block: the finite-field notation is simply the clean way to perform this block Gaussian elimination.  The construction is therefore
\begin{align}
A \text{ over } R_\ell
&\longrightarrow
A_g \text{ over } K_g
\nonumber\\
&\longrightarrow
\text{row reduction over } K_g
\nonumber\\
&\longrightarrow
\text{packet vector } u_g
\nonumber\\
&\longrightarrow
e_g\widetilde u_g \text{ over } R_\ell
\nonumber\\
&\longrightarrow
\text{binary support}.
\end{align}
The pivoting happens over the packet field $K_g$; the final representative is binary only after the idempotent lift and coefficient expansion.

For a general finite Abelian group, the same picture holds with characters in place of roots: a character $\omega=(\zeta_1,\ldots,\zeta_m)$ with $\zeta_i^{\ell_i}=1$ evaluates a polynomial seed entry as $p(\omega)=\sum_a c_a\zeta_1^{a_1}\cdots\zeta_m^{a_m}$, giving $A_\Omega=(p_{ij}(\omega))$ over $K_\Omega=\F_2(\zeta_1,\ldots,\zeta_m)$.  There is a mild nonuniqueness: a packet is a Frobenius orbit of characters, not a single character, and since the seed coefficients are binary, $p(\omega^2)=p(\omega)^2$.  Choosing $\omega^2$ instead of $\omega$ therefore replaces $A_\Omega$ by its entrywise Frobenius square---an isomorphic packet computation with the same rank, kernel and cokernel dimensions, and the same binary packet subspace after lifting by the same $e_\Omega$.  It can, however, change the chosen finite-field basis and the weights of particular lifted representatives.  For a concrete basis-width certificate, one fixes the representative character, the field basis, the trace-dual basis, and the pivots, constructs the binary representatives, and verifies stabilizer commutation, the final Gram matrix, and support weights directly.

Rank drops create logical operators for the same reason they create larger kernels and cokernels. For $s_A^\Omega=\rankop A_\Omega$,
\begin{align}
    \dim\ker A_\Omega&=n_A-s_A^\Omega,\nonumber\\
    \dim\coker A_\Omega&=r_A-s_A^\Omega.
\end{align}
Lower rank means fewer independent constraints, hence more undetected degrees of freedom.  Those extra degrees of freedom become extra logical contributions through the K\"unneth products.

\subsection{Deterministic implementation conventions}
\label{app:deterministic-implementation}

Theorem~\ref{thm:finite-abelian-basis} is basis-independent.  For a reproducible certificate, however, the implementation must fix representatives for kernels, quotients, coefficient bases, trace-dual bases in reciprocal packets, idempotent lifts, and final binary row storage.  This subsection documents the convention used by the companion computations.
The constituent-field representation is the standard CRT viewpoint on quasi-cyclic codes~\cite{LingSole2001,GuneriLingOzkaya2020}; the finite-field arithmetic below uses standard polynomial-basis and modular-arithmetic algorithms~\cite{LidlNiederreiter1997,vonZurGathenGerhard2013}.

First consider a single packet field $K$ and an ordinary finite-field matrix
\begin{align}
M
&\in
K^{r\times n}.
\end{align}
The superscript $\transpose$ in this paragraph denotes ordinary transpose over $K$; the physical binary adjoint is the group-algebra adjoint $\dagger$ and is accounted for only after passing to the reciprocal packet.  We compute reduced row echelon forms over $K$ using the fixed field basis and lexicographic pivot order.  This gives deterministic column bases
\begin{align}
U_M
&\in
K^{n\times q},
\quad
D_M
\in
K^{r\times t},
\end{align}
for $\ker M$ and $\ker M^\transpose$, respectively, where $q=n-\rankop M$ and $t=r-\rankop M$.

The quotient representatives are fixed by dualizing these kernel bases.  Let $I_M\subset\{1,\ldots,n\}$ be the lexicographically first set of $q$ row indices such that the square submatrix $U_M[I_M]$ is invertible.  Define
\begin{align}
Q_M[I_M]
&=
\bigl(U_M[I_M]^\transpose\bigr)^{-1},
\quad
Q_M[I_M^c]=0.
\end{align}
Then
\begin{align}
U_M^\transpose Q_M
&=
I_q,
\end{align}
and the columns of $Q_M$ represent the quotient $K^n/\operatorname{row}(M)$ dual to $\ker M$.  Similarly, if $J_M$ is the lexicographically first set of $t$ row indices for which $D_M[J_M]$ is invertible, set
\begin{align}
C_M[J_M]
&=
\bigl(D_M[J_M]^\transpose\bigr)^{-1},
\quad
C_M[J_M^c]=0.
\end{align}
Then $D_M^\transpose C_M=I_t$, and the columns of $C_M$ represent $\coker M=K^r/\operatorname{im}M$ dual to $\ker M^\transpose$.

For a single-variable packet $g$, this convention is applied to $A_g$ and $B_g$.  We write the resulting columns as
\begin{align}
u^A_{g,i},\quad q^A_{g,i},\quad d^A_{g,\mu},\quad c^A_{g,\mu},
\end{align}
and analogously for $B$.  Here $u$ denotes a kernel vector, $q$ a quotient representative dual to a kernel vector, $d$ a left-kernel vector, and $c$ a cokernel representative.  If $\{\beta_{g,a}\}_{a=1}^{d_g}$ is the fixed $\F_2$-basis of $K_g$, a packet $Z$ basis for the left and right K\"unneth summands is represented in packet coordinates by
\begin{align}
Z^L_{g,i,\nu,a}
&=
\bigl((\beta_{g,a}u^A_{g,i})\otimes c^B_{g,\nu},0\bigr),
\nonumber\\
Z^R_{g,\mu,j,a}
&=
\bigl(0,(\beta_{g,a}c^A_{g,\mu})\otimes u^B_{g,j}\bigr).
\end{align}
These are packet-coordinate representatives; the physical binary vectors are obtained only after the primitive-idempotent lift and coefficient expansion.

The reciprocal packet fixes the conjugate $X$ labels.  Let $\tau_g:K_{g^\vee}\to K_g$ be the inversion-induced field isomorphism.  Given a basis $\{\gamma_{g^\vee,b}\}$ of $K_{g^\vee}$, form the trace Gram matrix
\begin{align}
T_{ab}
&=
\operatorname{Tr}_{K_g/\F_2}
\bigl(\beta_{g,a}\tau_g(\gamma_{g^\vee,b})\bigr).
\end{align}
The trace-dual basis in the reciprocal packet is
\begin{align}
\beta^\#_{g,a}
&=
\sum_b
(T^{-1})_{ba}\gamma_{g^\vee,b}.
\end{align}
The nondegeneracy of the finite-field trace pairing and the existence of trace-dual bases are standard finite-field facts~\cite{LidlNiederreiter1997}.
Using the quotient-dual representatives above, the corresponding trial reciprocal-packet $X$ basis is
\begin{align}
\widetilde X^L_{g^\vee,i,\nu,a}
&=
\bigl((\beta^\#_{g,a}\tau_g^{-1}(q^A_{g,i}))
\otimes \tau_g^{-1}(d^B_{g,\nu}),0\bigr),
\nonumber\\
\widetilde X^R_{g^\vee,\mu,j,a}
&=
\bigl(0,(\beta^\#_{g,a}\tau_g^{-1}(d^A_{g,\mu}))
\otimes \tau_g^{-1}(q^B_{g,j})\bigr).
\end{align}
With the dual choices above, the packet Gram block is already the identity.  For arbitrary finite-field bases or pivot choices, the implementation instead forms the binary Gram block
\begin{align}
G_{\lambda\eta}
&=
\bigl\langle Z_\lambda,\widetilde X_\eta\bigr\rangle_{\rm bin}.
\end{align}
Whenever $G$ is invertible, the trial $X$ representatives are replaced by the corresponding linear combinations using $G^{-\transpose}$, so the final representatives satisfy $ZX^\transpose=I$.

The idempotent lift is deterministic as well.  Write $f=x^\ell+1$ and $h_g=f/g$.  Choose $a_g$ by the extended Euclidean algorithm so that $a_gh_g\equiv1\pmod g$, and set
\begin{align}
e_g
&=
a_gh_g
\pmod{f}.
\end{align}
This is the usual constructive formula for the primitive idempotent; the inverse $a_g$ is computed by the extended Euclidean algorithm~\cite{DummitFoote2004,vonZurGathenGerhard2013}.
For $w_g\in K_g^m$, choose the coordinate polynomial representative $\widetilde w_g\in R_\ell^m$ with degrees below $\deg g$, and lift by
\begin{align}
w_g^{\rm lift}
&=
e_g\widetilde w_g
\in
R_\ell^m.
\end{align}
The binary expansion of a polynomial $p=\sum_{a=0}^{\ell-1}p_ax^a\in R_\ell$ is the coefficient vector
\begin{align}
\operatorname{coeff}_\ell(p)
&=
(p_0,\ldots,p_{\ell-1})
\in
\F_2^\ell.
\end{align}

Finally, every reported basis certificate is checked after binary expansion.  With row storage conventions, the code verifies
\begin{align}
H_X Z^\transpose=0,\quad H_Z X^\transpose=0,\quad Z X^\transpose=I.
\end{align}
For any standalone distance witness entering $\Dbest$, the expanded binary vector is additionally checked to commute with the stabilizers, to have the reported support weight, and to lie outside the appropriate stabilizer row space.  The theorem supplies the exact algebraic construction, while the implementation retains the concrete binary representatives and catches convention or lifting errors.  Appendix~\ref{app:seed-distance-witness-examples} illustrates this distinction for the toy duplicate-column seed.

\subsection{What changes for multicyclic Abelian lifts?}

The single-variable quasi-cyclic setting uses the cyclic group $\mathbb Z_\ell$ and the ring $\F_2[x]/(x^\ell+1)$.  A multicyclic lift uses several independent shift variables, for example
\begin{equation}
    \F_2[x,y]/(x^L+1,y^M+1),
\end{equation}
which corresponds to the group $\mathbb Z_L\times\mathbb Z_M$.  A monomial $x^ay^b$ is a displacement by $(a,b)$ on a two-dimensional periodic grid.

The spectral picture is the same.  A mode is now a point
\begin{equation}
    (\omega_1,\omega_2),
    \qquad
    \omega_1^L=1,
    \quad
    \omega_2^M=1.
\end{equation}
The Frobenius map squares every coordinate:
\begin{equation}
    (\omega_1,\omega_2)\mapsto(\omega_1^2,\omega_2^2).
\end{equation}
Packets are Frobenius orbits on this finite frequency torus, and reciprocal packets are obtained by inversion
\begin{equation}
    (\omega_1,\omega_2)\mapsto(\omega_1^{-1},\omega_2^{-1}).
\end{equation}
For example, with $L=7$ and $M=5$, the packet containing exponent pair $(1,1)$ is obtained by repeatedly doubling both coordinates modulo $(7,5)$:
\begin{align}
    &(1,1),(2,2),(4,4),(1,3),(2,1),(4,2),\nonumber\\
    &(1,4),(2,3),(4,1),(1,2),(2,4),(4,3).
\end{align}
The construction is unchanged: compute finite-field packet data, form same-packet tensor contributions, pair reciprocal packets, and lift back by primitive idempotents.  The practical new issue is that the primitive idempotents of individual packets on a multidimensional torus can be spatially dense.  This motivates partial packet localization, where several packets are grouped to trade packet resolution for smaller physical support.

\section{A small worked example}
\label{app:worked-example}

This appendix works through the toy duplicate-column seed \seedlabel{T}{15}{2}{4}, whose packet ranks and representative weights are small enough to inspect by hand.  It is not a good-distance construction example: its third and fourth columns are identical, which produces the weight-two distance witness made explicit below.  Over $R=\F_2[x]/(x^{15}+1)$, let
\begin{align}
A_T
&=
\begin{pmatrix}
1&1&1&1\\
1&x^{10}&x^7&x^7
\end{pmatrix}
\in
R^{2\times4},
\end{align}
viewed as a map $R^4\to R^2$.  The polynomial factors as
\begin{align}
x^{15}+1&=(x+1)(x^2+x+1)(x^4+x+1)\notag\\
&\times(x^4+x^3+1)(x^4+x^3+x^2+x+1),
\end{align}
giving five packets: the factors $x^4+x+1$ and $x^4+x^3+1$ are reciprocal to each other, and the other three packets are self-reciprocal.  For a packet $g$ with representative root $\theta_g$, the packet matrix is
\begin{align}
(A_T)_g
&=
\begin{pmatrix}
1&1&1&1\\
1&\theta_g^{10}&\theta_g^7&\theta_g^7
\end{pmatrix},
\end{align}
and reducing modulo the five factors gives packet ranks $1,2,2,2,2$.  The first rank is the forced monomial $x+1$ rank defect, since every monomial evaluates to $1$ at $x=1$; all other packets have full row rank.  For the self-adjoint code $C=\LP(A_T,A_T^\dagger)$, the dimension accounting is
\begin{align}
k&=\underbrace{(4-1)^2+(2-1)^2}_{x+1\text{ packet}}\notag\\
&\quad+\underbrace{2(4-2)^2}_{x^2+x+1}\notag\\
&\quad+\underbrace{4(4-2)^2+4(4-2)^2}_{\text{reciprocal degree-4 pair}}\notag\\
&\quad+\underbrace{4(4-2)^2}_{\text{self-reciprocal degree-4 packet}}=66,
\end{align}
and the binary length is $n=(4^2+2^2)15=300$.

For one explicit full-row-rank packet, take $g=x^2+x+1$ and let $\omega$ be the image of $x$ in $K_g=\F_2[x]/(g)$.  Since $10\equiv7\equiv1\pmod 3$,
\begin{align}
(A_T)_g
&=
\begin{pmatrix}
1&1&1&1\\
1&\omega&\omega&\omega
\end{pmatrix}.
\end{align}
The primitive idempotent for this packet is
\begin{equation}
e_g=x^{14}+x^{13}+x^{11}+x^{10}+x^8+x^7+x^5+x^4+x^2+x,
\end{equation}
with support weight $10$.  With pivot columns $\{1,2\}$, the two free columns give field-kernel representatives $(0,1,1,0)$ and $(0,1,0,1)$ over $K_g$; the corresponding quotient coordinates are represented by the free coordinate vectors in $K_g^4/\row((A_T)_g)$, not by the ordinary cokernel of $(A_T)_g$.  Lifting the first kernel vector gives $e_g(0,1,1,0)$, with support weight $20$.  Tensoring it with a quotient coordinate gives a $Z$ logical representative in the left K\"unneth summand, and the reciprocal-packet trace pairing gives the conjugate $X$ representatives.

The lightest representative comes instead from the self-reciprocal degree-$4$ packet $g_s(x)=x^4+x^3+x^2+x+1$.  Let $\gamma$ be the image of $x$ in $K_{g_s}$.  Since the exponents reduce modulo $5$ as $0,0,2,2$,
\begin{align}
(A_T)_{g_s}
&=
\begin{pmatrix}
1&1&1&1\\
1&1&\gamma^2&\gamma^2
\end{pmatrix}.
\end{align}
With pivot columns $\{1,3\}$, the kernel representatives may be chosen with two nonzero block coordinates, for example $u=(1,1,0,0)$ or $u=(0,0,1,1)$.  The primitive idempotent itself has weight $12$, but a trace-dual multiplier $\beta^\#$ in the reciprocal packet produces the lighter coefficient block
\begin{align}
e_{g_s}\beta^\#
&=
x^{14}+x^{10}+x^9+x^5+x^4+1
\end{align}
of support weight $6$, so the corresponding $X$ representative, with two such nonzero block coordinates, has binary weight $2\cdot6=12$.

Collecting the weights: the generic-packet trace-dual certificate is $\Wgen=24$, the rank-one $x+1$ packet has representatives of weight $2\ell=30$, and the lightest representative overall is the weight-$12$ one above.  Hence
\begin{align}
\Wcert=30,\qquad \Dbasis=12,
\end{align}
and the duplicate-column distance witness of the next subsection gives $\Dbest=2$, so the parameter form of this diagnostic code is
\begin{align}
\llbracket n,k,d \rrbracket
=
\llbracket 300,66,\le 2 \rrbracket,
\end{align}
with the hierarchy $\Dbest=2<\Dbasis=12<\Wcert=30$: the constructed addressable basis contains $66$ conjugate pairs of weight at most $30$, hence $B(C)\le30$, while the true distance is at most $2$.

\subsection{Duplicate-column distance witness}
\label{app:seed-distance-witness-examples}

The seed-level distance-witness mechanism of Sec.~\ref{subsec:seed-distance-witnesses} is explicit for this seed: since the third and fourth columns of $A_T$ coincide, $u=\mathbf e_3+\mathbf e_4$ satisfies $A_Tu=0$ with $\wt_R(u)=2$.  Taking the monomial coordinate $q=\mathbf e_3$, the induced representative $Z=(u\otimes q,0)$ is verified after binary expansion to be a nontrivial logical operator, giving
\begin{align}
d(\LP(A_T,A_T^\dagger))\leq2.
\end{align}
A $20{,}000$-iteration randomized QDist search finds the same bound independently.

\subsection{Clean comparison seed}
\label{subsec:clean-comparison-seed}

A same-size monomial comparison seed isolates the defect above:
\begin{align}
A_{\rm cmp}
=\begin{pmatrix}
1&1&1&1\\
1&x&x^3&x^7
\end{pmatrix}
\in R^{2\times4}.
\end{align}
It has the same packet-rank profile $1,2,2,2,2$ as \seedlabel{T}{15}{2}{4}, hence the same self-adjoint parameters $n=300$, $k=66$, and $\Wcert=30$, but no monomial-equivalence distance witness: $\Dbasis=18$, the generic binary RREF quotient basis gives $\Drref=6$ and $\Wrref=108$, and the same $20{,}000$-iteration randomized QDist search budget gives $d(C)\le6$.  The weight-two distance witness of \seedlabel{T}{15}{2}{4} therefore comes from the chosen exponent pattern, not from the mere use of a $2\times4$ seed at $\ell=15$.  This monomial comparison seed is separate from the sparse \seedlabel{P}{15}{2}{4} example row in Table~\ref{tab:main}.

\section{Seed data and sparse supports}
\label{app:seed-data}

For a monomial seed, we display the exponent matrix $E$ such that the seed entry in position $(i,j)$ is $x^{E_{ij}}\in R_\ell$.  The monomial seeds used in Table~\ref{tab:main} are:
\begin{widetext}
\begin{align}
E\bigl(\seedlabel{A}{45}{3}{7}\bigr)
&=
\begin{pmatrix}
29&21&31&15&37&25&27\\
13&25&19&26&11&18&29\\
31&2&27&32&41&41&18
\end{pmatrix},\qquad
E\bigl(\seedlabel{A}{75}{3}{7}\bigr)
=
\begin{pmatrix}
0&71&73&68&33&50&47\\
38&39&60&26&18&1&23\\
73&6&5&42&20&22&73
\end{pmatrix},
\\
E\bigl(\seedlabel{A}{91}{3}{7}\bigr)
&=
\begin{pmatrix}
57&75&42&80&7&67&27\\
57&73&34&12&27&50&87\\
21&53&70&18&1&3&18
\end{pmatrix},
\qquad
E\bigl(\seedlabel{M}{45}{4}{10}\bigr)
=
\begin{pmatrix}
0&0&0&0&0&0&0&0&0&0\\
0&0&11&44&1&10&14&18&29&0\\
0&18&23&4&1&0&33&7&22&43\\
0&2&20&40&1&1&40&15&40&1
\end{pmatrix},
\\
E\bigl(\seedlabel{R}{75}{3}{7}\bigr)
&=
\begin{pmatrix}
0&0&0&0&0&0&0\\
0&12&60&5&36&28&70\\
0&37&13&9&20&27&17
\end{pmatrix},
\qquad
E\bigl(\seedlabel{R}{81}{3}{7}\bigr)
=
\begin{pmatrix}
0&0&0&0&0&0&0\\
0&30&73&24&1&46&0\\
0&18&75&10&24&60&6
\end{pmatrix},
\\
E\bigl(\seedlabel{M}{45}{5}{9}\bigr)
&=
\begin{pmatrix}
0&0&0&0&0&0&0&0&0\\
0&4&7&35&29&16&38&28&14\\
0&5&43&9&35&4&33&6&17\\
0&9&22&40&40&38&9&40&31\\
0&18&19&6&9&32&26&22&31
\end{pmatrix}.
\end{align}
\end{widetext}
The reference seeds \seedlabel{A}{45}{3}{7}, \seedlabel{A}{75}{3}{7}, and \seedlabel{A}{91}{3}{7} have rank one at $x+1$ and full row rank elsewhere, giving $k=744,1224,1480$ for $\ell=45,75,91$.

The small-scale monomial example from Table~\ref{tab:main} is
\begin{align}
E\bigl(\seedlabel{R}{21}{4}{5}\bigr)
&=
\begin{pmatrix}
0&0&0&0&0\\
0&19&1&19&10\\
0&11&8&10&18\\
0&4&17&15&10
\end{pmatrix}.
\end{align}

The even-order monomial example \seedlabel{E}{14}{3}{4} is obtained from a length-$7$ residue seed by the exponent-lift move of Appendix~\ref{app:even-lifts}: each exponent $a$ in the residue seed is replaced by $a+7q$ with $q\in\{0,1\}$.  The resulting exponent matrix over $R_{14}=\F_2[x]/(x^{14}+1)$ is
\begin{align}
E\bigl(\seedlabel{E}{14}{3}{4}\bigr)
&=
\begin{pmatrix}
0&0&0&0\\
0&2&13&4\\
0&10&12&1
\end{pmatrix}.
\end{align}
Equivalently, the residue matrix over $R_7$ is
\begin{align}
\begin{pmatrix}
0&0&0&0\\
0&2&6&4\\
0&3&5&1
\end{pmatrix},
\end{align}
with row-major lift vector $(0,0,0,0,0,0,1,0,0,1,1,0)$.

For sparse seeds, we display the support matrix $S$ such that the entry in position $(i,j)$ is $\sum_{a\in S_{ij}}x^a$.  Singleton supports therefore denote monomials.  The sparse parity-systematic probes in Table~\ref{tab:main} are:
\begin{widetext}
\begin{align}
S\bigl(\seedlabel{P}{75}{3}{7}\bigr)
&=
\begin{pmatrix}
\{18\}&\{12,19\}&\{63,67\}&\{60,72\}&\{41,66\}&\{34,39\}&\{33,44\}\\
\{14,32\}&\{48\}&\{11,20\}&\{7,37\}&\{26,53\}&\{39,53\}&\{11,45\}\\
\{15,31\}&\{56,60\}&\{62\}&\{30,54\}&\{20,31\}&\{8,45\}&\{27,31\}
\end{pmatrix},
\\
S\bigl(\seedlabel{P}{91}{3}{7}\bigr)
&=
\begin{pmatrix}
\{36\}&\{9,83\}&\{11,84\}&\{17,89\}&\{75,87\}&\{27,42\}&\{63,88\}\\
\{39,82\}&\{86\}&\{23,39\}&\{32,74\}&\{45,58\}&\{16,81\}&\{26,49\}\\
\{16,31\}&\{6,27\}&\{80\}&\{1,62\}&\{50,83\}&\{2,46\}&\{8,67\}
\end{pmatrix},
\\
S\bigl(\seedlabel{H}{41}{3}{8}\bigr)
&=
\begin{pmatrix}
\{25\}&\{11,23\}&\{1,22\}&\{16,30\}&\{14,26\}&\{0,5\}&\{17,21\}&\{19,22\}\\
\{2,17\}&\{10\}&\{2,22\}&\{20,28\}&\{0,23\}&\{13,32\}&\{14,22\}&\{25,32\}\\
\{17,30\}&\{18,40\}&\{34\}&\{6,20\}&\{3,20\}&\{1,28\}&\{23,36\}&\{2,24\}
\end{pmatrix},
\\
S\bigl(\seedlabel{H}{33}{3}{9}\bigr)
&=
\begin{pmatrix}
\{30\}&\{1,20\}&\{4,32\}&\{8,30\}&\{8,29\}&\{26,32\}&\{12,15\}&\{15,21\}&\{9,13\}\\
\{23,31\}&\{23\}&\{16,27\}&\{20,25\}&\{21,29\}&\{4,24\}&\{4,30\}&\{18,24\}&\{19,32\}\\
\{0,14\}&\{1,16\}&\{17\}&\{5,14\}&\{5,14\}&\{10,12\}&\{2,3\}&\{18,22\}&\{0,22\}
\end{pmatrix}.
\end{align}

\begin{align}
S\bigl(\seedlabel{P}{15}{2}{4}\bigr)
&=
\begin{pmatrix}
\{6\}&\{4,13\}&\{0,1\}&\{3,12\}\\
\{7,14\}&\{7\}&\{0,13\}&\{3,14\}
\end{pmatrix},
\\
S\bigl(\seedlabel{P}{35}{3}{4}\bigr)
&=
\begin{pmatrix}
\{31\}&\{5,25\}&\{6,10\}&\{19,31\}\\
\{0,10\}&\{5\}&\{1,2\}&\{4,30\}\\
\{25,28\}&\{11,20\}&\{24\}&\{21,31\}
\end{pmatrix},
\\
S\bigl(\seedlabel{P}{31}{2}{5}\bigr)
&=
\begin{pmatrix}
\{8\}&\{3,22\}&\{14,17\}&\{27,28\}&\{10,28\}\\
\{20,27\}&\{0\}&\{0,2\}&\{11,23\}&\{16,29\}
\end{pmatrix}.
\end{align}

\end{widetext}


\section{Even lifts and primary packets}
\label{app:even-lifts}

The odd-order assumption is the semisimplicity assumption that makes the main theory especially transparent: the group algebra decomposes into finite fields, and packetwise ranks determine the relevant kernel and quotient dimensions.  Even-order lifts are nonsemisimple.  The algebraic ingredients below---primary decomposition, chain-ring Smith forms, and Frobenius-ring duality---are classical~\cite{vanLint1991,LingSole2003,DinhLopez2004,Wood1999}; the statements organize them for lifted-product homology.  This appendix establishes which parts of logical spectroscopy survive at theorem level without semisimplicity, and which parts are deferred to the follow-up program of Ref.~\cite{LeeCodeSearchInPrep}.

\begin{proposition}[Primary packet decomposition]
\label{prop:primary-decomposition}
Let $G$ be a finite Abelian group of arbitrary order.  Then $\RG=\F_2[G]$ decomposes as a finite product of local rings,
\begin{align}
\RG\simeq\prod_\Omega C_\Omega,
\quad
1=\sum_\Omega e_\Omega,
\quad
e_\Omega e_\Lambda=\delta_{\Omega\Lambda}e_\Omega,
\end{align}
called the \emph{primary packets}.  The product complex of Sec.~\ref{sec:finite-abelian-basis} decomposes as $Q_\bullet\simeq\prod_\Omega Q_{\bullet,\Omega}$, so $\cH_1=\bigoplus_\Omega\cH_{1,\Omega}$, and Theorem~\ref{thm:orbit-erasure-packets} holds verbatim for whole-orbit erasures, with $d_{Z,\Omega}$ and $d_{X,\Omega}$ computed as $\F_2$-dimensions of the local packet restrictions.
\end{proposition}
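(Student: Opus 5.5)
The plan is to obtain the ring decomposition from structure theory of finite commutative rings, and then push it through the product complex and the erasure rank formulas by additivity.

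\textbf{Ring decomposition.} $\RG=\F_2[G]$ is a finite commutative ring, hence Artinian. A commutative Artinian ring is the finite product of its localizations at its finitely many maximal ideals. Concretely, I would write $G=P\times H$ with $P$ a $2$-group and $|H|$ odd, so that $\RG\simeq\F_2[H]\otimes_{\F_2}\F_2[P]$. By the semisimple case of Sec.~\ref{sec:pivots-to-packets}, $\F_2[H]\simeq\prod_\Omega K_\Omega$, indexed by Frobenius packets of characters of $H$. This gives
\begin{align}
\RG\simeq\prod_\Omega K_\Omega[P].
\end{align}
Each factor $C_\Omega=K_\Omega[P]$ is local: its augmentation ideal is nilpotent, since $(g-1)^{2^a}=g^{2^a}-1=0$ for $g\in P$ in characteristic two. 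The quotient by that ideal is the field $K_\Omega$. The idempotents $e_\Omega$ are the images of the primitive idempotents of $\F_2[H]$, so the relations $1=\sum e_\Omega$ and $e_\Omega e_\Lambda=\delta_{\Omega\Lambda}e_\Omega$ are inherited. In the cyclic case this is the factorization $x^\ell+1=\prod g^{2^s}$ followed by CRT on coprime prime powers.

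\textbf{Complex and homology.} Every free module splits as $e_\Omega$-components, $\RG^m=\bigoplus_\Omega e_\Omega\RG^m\simeq\prod_\Omega C_\Omega^m$. The boundary maps $\partial_1,\partial_2$ are $\RG$-linear, so they commute with each $e_\Omega$ and preserve the components. Tensor products over $\RG$ also split, because $(e_\Omega M)\otimes(e_\Lambda N)=0$ for $\Omega\neq\Lambda$ by the argument of Lemma~\ref{lem:same-packet-tensor}, which used only idempotent orthogonality and not fields. Moreover $e_\Omega M\otimes_{\RG}e_\Omega N\simeq e_\Omega M\otimes_{C_\Omega}e_\Omega N$. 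Hence $Q_\bullet\simeq\prod_\Omega Q_{\bullet,\Omega}$, where $Q_{\bullet,\Omega}$ is the product complex over $C_\Omega$ built from the packet components of $A$ and $B$. Kernels and images are computed componentwise, so $\cH_1=\bigoplus_\Omega\cH_{1,\Omega}$. The same holds for the cohomology $\cH^1$, since the adjoint $\dagger$ maps the $\Omega$ component to the $\Omega^\vee$ component, which is again a primary packet.

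\textbf{Erasure statement.} For a whole-orbit erasure indexed by a block set $S$, the erased coordinates form the submodule $\RG^{S}$. Column restriction to $S$ or $\bar S$ is therefore $\RG$-linear and commutes with the $e_\Omega$. Every binary rank in the formula for $d_Z(E_S)$, for example $\rank H_X|_{E_S}$, is the $\F_2$-dimension of the image of an $\RG$-linear map. That image decomposes as a direct sum of its $e_\Omega$-components, so each rank equals $\sum_\Omega\dim_{\F_2}$ of the image of the corresponding packet restriction. Likewise $|E_S|=|S|\,|G|=\sum_\Omega|S|\dim_{\F_2}C_\Omega$. The formula for $d_Z$, and likewise $d_X$, is an alternating sum of such dimensions, hence additive over packets. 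This gives $d_Z(E_S)=\sum_\Omega d_{Z,\Omega}(S)$ with $d_{Z,\Omega}$ defined by the same formula using $\F_2$-dimensions inside $C_\Omega$. In the semisimple case this recovers Theorem~\ref{thm:orbit-erasure-packets}, since there $\dim_{\F_2}=[K_\Omega:\F_2]\rank_{K_\Omega}$.

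\textbf{Main obstacle.} The only delicate point is the ring decomposition, namely showing that the factors are local rather than merely indecomposable. This is exactly the nilpotence of the augmentation ideal of $K_\Omega[P]$. The rest is additivity of $\F_2$-dimension over an idempotent splitting. No rank over $C_\Omega$ is needed, which is why the statement phrases the packet quantities as $\F_2$-dimensions.
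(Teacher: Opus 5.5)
Your proof is correct and follows the same route as the paper's: a finite commutative ring is a finite product of local rings. Free modules, tensor products over $R_G$, homology, and column restriction to the $R_G$-submodule of a whole-orbit erasure all commute with the resulting idempotent splitting, so every term in the erasure-rank formula is additive over packets. Your explicit identification $C_\Omega\simeq K_\Omega[P]$ via $G=P\times H$, with locality coming from nilpotence of the augmentation ideal, is a useful concrete supplement that the paper only cites abstractly. It also makes precise the paper's later description of primary packets as nilpotent thickenings of the odd-skeleton Frobenius packets.
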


\begin{proof}
A finite commutative ring is a finite product of local rings~\cite{DummitFoote2004}, and the corresponding complete orthogonal idempotents give the displayed decomposition.  Free modules, tensor products, and homology commute with this finite product, and a whole-orbit erased subspace is an $\RG$-submodule, so column restriction commutes with the decomposition as well.
\end{proof}

Because each factor $C_\Omega$ is local, it has no nontrivial idempotents; hence the displayed $e_\Omega$ are the primitive central idempotents of $\RG$, while $C_\Omega$ is the associated primary packet.

Packet labels, packet attribution, and exact whole-orbit erasure spectroscopy therefore require no semisimplicity; only the finite-field rank formulas do.  In particular, Proposition~\ref{prop:primary-decomposition} applies to bivariate-bicycle codes~\cite{BravyiCross2024}, which are the scalar-seed case $\LP(a,b)$ with $a,b\in\RG$ over even-order two-variable groups.

For the rest of this appendix we take the single-variable case.  If $\ell=2^s m$ with $m$ odd and $e=2^s$, then
\begin{align}
x^\ell+1
&=
(x^m+1)^{e},
\end{align}
the repeated-root cyclic regime~\cite{vanLint1991,DinhLopez2004}, and the primary packets are the finite chain rings
\begin{align}
C_g=\F_2[x]/(g^{e}),
\qquad
g\mid x^m+1,
\end{align}
with maximal ideal generated by $\pi_g=g\bmod g^{e}$, $\pi_g^{e}=0$, and residue field $K_g=\F_2[x]/(g)$; this is the chain-ring constituent regime of quasi-cyclic codes~\cite{LingSole2003}.  Over a finite chain ring every matrix admits a Smith normal form~\cite{DinhLopez2004,Elsheikh2012}: invertible row and column operations over $C_g$ bring $A_g$ to a direct sum of elementary pieces
\begin{align}
D_a: C_g\xrightarrow{\ \pi_g^{\,a}\ }C_g\ \ (0\le a\le e),
\quad
S: C_g\to 0,
\quad
T: 0\to C_g,
\end{align}
where $S$ and $T$ are unmatched source and target coordinates and $D_e\simeq S\oplus T$ is the zero square block.  The multiset of exponents $a_i$, together with the $S$ and $T$ multiplicities, is the \emph{nilpotent Smith profile} of $A_g$; it refines the residue rank, which counts only the exponents $a_i=0$.

\begin{theorem}[Even one-variable dimension formula]
\label{thm:even-dimension}
Let $\ell=2^sm$ with $m$ odd, and let $A,B$ be seeds over $R_\ell$.  Write $A_g\simeq\bigoplus_i E_i$ and $B_g\simeq\bigoplus_j F_j$ in elementary pieces over $C_g$.  Then
\begin{align}
k=\dim_{\F_2}\cH_1\bigl(\LP(A,B)\bigr)
=\sum_{g\mid x^m+1}\deg(g)\sum_{i,j}h(E_i,F_j),
\label{eq:even-dimension-formula}
\end{align}
where
\begin{align}
h(D_a,D_b)&=2\min(a,b),
\nonumber\\
h(D_a,S)=h(D_a,T)&=a,
\nonumber\\
h(S,D_b)=h(T,D_b)&=b,
\nonumber\\
h(S,T)=h(T,S)&=e,
\quad
h(S,S)=h(T,T)=0.
\end{align}
\end{theorem}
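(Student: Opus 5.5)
The plan is to reduce \eqnref{eq:even-dimension-formula} to a finite table of elementary homology computations over a single chain ring. There are three reductions: primary decomposition, Smith normal form with bi-additivity of the product complex, and a direct computation for each pair of elementary pieces.

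\textbf{Step 1: primary decomposition.} By Proposition~\ref{prop:primary-decomposition}, $\cH_1=\bigoplus_g\cH_{1,g}$, where $g$ ranges over the irreducible factors of $x^m+1$. Here $\cH_{1,g}$ is the degree-one homology of the product complex built over $C_g=\F_2[x]/(g^e)$ from $A_g$ and $B_g$. It therefore suffices to show
\begin{align}
\dim_{\F_2}\cH_{1,g}=\deg(g)\sum_{i,j}h(E_i,F_j).
\end{align}
Every $C_g$-module in sight has finite length, and $\dim_{\F_2}=\deg(g)\cdot\operatorname{length}_{C_g}$, since each composition factor is $K_g$. So I will compute lengths.

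\textbf{Step 2: Smith form and bi-additivity.} Write the Smith forms as $A_g=PD_AQ$ and $B_g=P'D_BQ'$ with $P,Q,P',Q'$ invertible over $C_g$. The tensor products of these invertible matrices (for example $Q^{-1}\otimes Q'^{-1}$ on $Q_2$, and the analogous block-diagonal maps on $Q_1$ and $Q_0$) give an isomorphism of the product complex of $(A_g,B_g)$ with that of $(D_A,D_B)$. I expect this to be the step needing the most care with conventions. It uses only commutativity of $C_g$, and checking that the sign-free boundary maps in \eqnref{eq:boundary_map} intertwine is routine. The product complex is additive in each seed separately, because $\otimes_{C_g}$ and $\oplus$ distribute and each $\partial_i$ is built blockwise. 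Hence it splits as a direct sum over pairs $(E_i,F_j)$ of the product complexes of elementary pieces, and homology is additive.

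\textbf{Step 3: elementary computations.} Recall that multiplication by $\pi^c$ on $C_g$ has kernel $\pi^{e-c}C_g$ and cokernel $C_g/\pi^cC_g$, each of length $c$.
\begin{itemize}
\item For $(D_a,D_b)$, the complex is $C\to C^2\to C$ with maps $(\pi^b,\pi^a)^\transpose$ and $(\pi^a,\pi^b)$. Here $H_2=\operatorname{Ann}(\pi^{\min(a,b)})$ and $H_0=C/(\pi^{\min(a,b)})$, both of length $\min(a,b)$. The Euler characteristic in lengths is $e-2e+e=0$, which forces $\operatorname{length}H_1=2\min(a,b)$.
\item For $(D_a,S)$, we have $Q_0=0$ and $Q_1\cong C$ with $\partial_2=\pi^a$, so $H_1$ is the cokernel, of length $a$.
\item For $(D_a,T)$, we have $Q_2=0$ with $\partial_1=\pi^a$, so $H_1$ is the kernel, of length $a$.
\item The cases $(S,D_b)$ and $(T,D_b)$ are symmetric.
\item For $(S,T)$, we have $Q_1=C$ with both boundaries zero, giving length $e$; likewise $(T,S)$.
\item For $(S,S)$ and $(T,T)$, every $Q_1$ summand contains a zero tensor factor, so the contribution is $0$.
\end{itemize}
These values reproduce the table for $h$. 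Summing over pairs and over $g$ gives \eqnref{eq:even-dimension-formula}.

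Finally, at $s=0$ we have $e=1$, so every $D_a$ has $a\in\{0,1\}$ and $D_1\simeq S\oplus T$. The profile is then $s^g$ copies of $D_0$ together with unmatched $S$ and $T$ pieces. The only nonzero terms are $h(S,T)=h(T,S)=1$, which count (the number of $S$ pieces of $A$) times (the number of $T$ pieces of $B$), plus the symmetric term. This is $(n_A-s_A)(r_B-s_B)+(r_A-s_A)(n_B-s_B)$, recovering Eq.~\eqref{eq:finite-abelian-K}.
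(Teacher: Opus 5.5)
Your proposal is correct and follows essentially the same route as the paper. Like the paper, you use primary decomposition to reduce to $K_g$-lengths over $C_g$, then bi-additivity of the product complex over the Smith pieces, then a case-by-case homology computation for elementary pairs. The differences are minor: you make the Smith-form isomorphism of product complexes explicit, and you obtain the $(D_a,D_b)$ entry from the Euler characteristic of lengths, whereas the paper computes $\ker\partial_1$ and the image of $\partial_2$ directly.
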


This table is consistent with the convention $D_e\simeq S\oplus T$.  For example,
\begin{align}
h(D_e,D_b)
&=
2b
=
h(S,D_b)+h(T,D_b),
\nonumber\\
h(D_e,D_e)
&=
2e
=
h(S,T)+h(T,S).
\end{align}

\begin{proof}
By Proposition~\ref{prop:primary-decomposition} it suffices to compute the $K_g$-length of the packet homology and multiply by $\dim_{\F_2}K_g=\deg g$, since every finite $C_g$-module has a composition series with factors $K_g$.  The product complex is additive in each argument under direct sums, so it suffices to evaluate $h$ on elementary pairs; write $C=C_g$, $\pi=\pi_g$, and $m_0=\min(a,b)$.

For $D_a$ against $D_b$, the packet complex is $C\xrightarrow{\partial_2}C^2\xrightarrow{\partial_1}C$ with $\partial_2 z=(\pi^{b}z,\pi^{a}z)$ and $\partial_1(u,v)=\pi^{a}u+\pi^{b}v$.  Then $\Im\partial_1=(\pi^{a},\pi^{b})=(\pi^{m_0})$ has length $e-m_0$, so $\ker\partial_1$ has length $2e-(e-m_0)=e+m_0$; and $\ker\partial_2=\operatorname{ann}(\pi^{m_0})=(\pi^{\,e-m_0})$ has length $m_0$, so $\Im\partial_2$ has length $e-m_0$.  The homology therefore has length $2m_0$.

The remaining entries are single-length counts.  For $D_a$ against $T$, the middle term is $C$ with $\partial_1=\pi^{a}$ and $\partial_2=0$, so the homology is $\ker\pi^{a}=(\pi^{\,e-a})$ of length $a$; for $S$ against $D_b$, the middle term is $C$ with $\partial_1=0$ and $\Im\partial_2=(\pi^{b})$, giving length $b$; for $S$ against $T$, both boundary maps vanish on the middle term $C$, giving length $e$; and $S$ against $S$ or $T$ against $T$ has zero middle term.  The mixed entries $h(D_a,S)$ and $h(T,D_b)$ follow from the symmetric counts.
\end{proof}

\begin{corollary}[Reduction to the semisimple case]
\label{cor:even-odd-reduction}
For $s=0$ the packet ring is the field $K_g$, all Smith exponents vanish, and the only nonzero table entries are $h(S,T)=h(T,S)=1$.  Counting the $S$ pieces of $A_g$ as $n_A-s_A^{g}$ and the $T$ pieces as $r_A-s_A^{g}$, and likewise for $B$, recovers Eq.~\eqref{eq:finite-abelian-K}.  For $B=A^\dagger$, the involution identifies the nilpotent Smith profile of $B$ at $g^{e}$ with that of $A$ at $(g^\vee)^{e}$, generalizing the self-adjoint specialization, Eq.~\eqref{eq:self-adjoint-K}.
\end{corollary}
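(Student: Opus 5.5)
The proof is a specialization of Theorem~\ref{thm:even-dimension} to $s=0$, followed by a reindexing of the elementary pieces and, for the self-adjoint case, a transport of Smith data through the involution. First I note that when $s=0$ we have $e=2^0=1$ and $x^\ell+1=x^m+1$ is square-free. So each primary packet $C_g=\F_2[x]/(g)$ is the field $K_g$, with $\pi_g=g\bmod g=0$. Over a field the Smith normal form is ordinary rank normal form. Since $e=1$, the only allowed exponent is $a=0$, because $D_1=D_e\simeq S\oplus T$ is absorbed into unmatched pieces by the stated convention. Hence $A_g\simeq D_0^{\oplus s_A^g}\oplus S^{\oplus(n_A-s_A^g)}\oplus T^{\oplus(r_A-s_A^g)}$. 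The multiplicities follow from counting: each $D_0$ and each $S$ consumes one source coordinate, and each $D_0$ and each $T$ consumes one target coordinate. The same decomposition holds for $B_g$.

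Next I evaluate the table of $h$ at $e=1$ and $a=b=0$. This gives $h(D_0,\cdot)=h(\cdot,D_0)=0$, $h(S,T)=h(T,S)=e=1$, and $h(S,S)=h(T,T)=0$. Substituting into Eq.~\eqref{eq:even-dimension-formula} and using bilinearity of the double sum, the packet-$g$ contribution is $\deg g\,[(n_A-s_A^g)(r_B-s_B^g)+(r_A-s_A^g)(n_B-s_B^g)]$. Since $\deg g=[K_g:\F_2]=d_\Omega$ and the factors $g\mid x^\ell+1$ are in bijection with Frobenius packets, this is Eq.~\eqref{eq:finite-abelian-K}. As a sanity check, it also matches the K\"unneth count of Eq.~\eqref{eq:packet-dimension}.

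For the adjoint statement, I use that the involution $x\mapsto x^{-1}$ is a ring automorphism of $R_\ell$ that maps the ideal $(g^e)$ to $((g^\vee)^e)$. This holds because $x^{\deg g}g(x^{-1})=g^\vee$ up to a unit monomial. The involution therefore induces a ring isomorphism $\tau:C_{g^\vee}\to C_g$ sending the uniformizer $\pi_{g^\vee}$ to a unit multiple of $\pi_g$. Now $(A^\dagger)_g$ is $\tau$ applied entrywise to the transpose of $A_{g^\vee}$. I transport a Smith decomposition $U A_{g^\vee}V=\operatorname{diag}(\pi^{a_i})\oplus S\oplus T$ through $\tau$ and then transpose. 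The exponents $a_i$ are preserved, since a unit times $\pi^{a}$ is equivalent to $\pi^a$, and the roles of $S$ and $T$ are exchanged. Thus the nilpotent Smith profile of $B=A^\dagger$ at $g^e$ equals that of $A$ at $(g^\vee)^e$, with $S$ and $T$ swapped.

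The main obstacle is this $S\leftrightarrow T$ swap, which the corollary's wording ("identifies the profile") suppresses. When it is fed into the formula at $s=0$, it yields $n_B-s_B^g=r-s_{g^\vee}$ and $r_B-s_B^g=n-s_{g^\vee}$. This reproduces Eq.~\eqref{eq:self-adjoint-K}, with $(n-s_\Omega)(n-s_{\Omega^\vee})$ and $(r-s_\Omega)(r-s_{\Omega^\vee})$. I would state the swap explicitly so that the bookkeeping closes.
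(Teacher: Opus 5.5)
Your proposal is correct and follows the paper's own argument, which the paper gives only in outline inside the corollary statement itself: at $e=1$ the Smith form over $K_g$ is $D_0^{\oplus s_A^g}\oplus S^{\oplus(n_A-s_A^g)}\oplus T^{\oplus(r_A-s_A^g)}$, the $h$-table collapses to $h(S,T)=h(T,S)=1$, and the Smith data are transported through the inversion involution. Your explicit $S\leftrightarrow T$ exchange under the adjoint is the right reading of ``identifies the profile'' (the paper handles it only implicitly, through $n_B=r_A$ and $r_B=n_A$), and it is exactly what makes Eq.~\eqref{eq:self-adjoint-K} come out.
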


\vspace{3pt}\noindent\textbf{Scalar seeds and generalized bicycle codes.}
For $1\times1$ seeds $a,b\in R_\ell$, the Smith exponent of $a$ in the packet $C_g$ is the multiplicity of $g$ in $a$, so Eq.~\eqref{eq:even-dimension-formula} collapses to
\begin{align}
k=2\deg\gcd(a,b,x^\ell+1),
\end{align}
the known dimension formula for generalized-bicycle codes at arbitrary $\ell$~\cite{KovalevPryadko2013,PanteleevKalachev2021}.  Theorem~\ref{thm:even-dimension} is its rectangular Smith-profile generalization, interpolating between this scalar anchor and the semisimple rank formula of Corollary~\ref{cor:even-odd-reduction}.

The even complex can also be resolved recursively rather than packetwise.  Every even-lift complex is a square-zero thickening of the complex at half the length, in the following canonical sense.

\begin{proposition}[Halving exact sequence]
\label{prop:halving-ses}
Let $\ell$ be even, let $u=x^{\ell/2}+1\in R_\ell$, and let $A,B$ be arbitrary seeds over $R_\ell$ with reductions $\bar A,\bar B$ modulo $u$.  Then $u^2=0$ and $\operatorname{ann}(u)=(u)$, and multiplication by $u$ induces an isomorphism of complexes $Q_\bullet/uQ_\bullet\simeq uQ_\bullet$, both identified with the half-length product complex $\bar Q_\bullet$ built from $\bar A,\bar B$ over $R_{\ell/2}\simeq R_\ell/(u)$.  Hence there is a canonical short exact sequence of complexes
\begin{align}
0\longrightarrow \bar Q_\bullet \xrightarrow{\ \cdot u\ } Q_\bullet \longrightarrow \bar Q_\bullet \longrightarrow 0,
\end{align}
whose long exact homology sequence gives the exact dimension recursion
\begin{align}
\dim_{\F_2}\cH_1(\ell)
=
2\dim_{\F_2}\cH_1(\ell/2)
-\operatorname{rank}\delta_1
-\operatorname{rank}\delta_2,
\label{eq:bockstein-recursion}
\end{align}
where $\delta_i:\cH_i(\bar Q)\to\cH_{i-1}(\bar Q)$ are the connecting (Bockstein) maps associated with this short exact sequence of complexes~\cite{Weibel1994}.  The same construction applies to $u=x_i^{L_i/2}+1$ one variable at a time in a multicyclic group, so iterating over each factor of two in each variable reduces any even Abelian lift to its odd skeleton through a finite tower of such sequences.
\end{proposition}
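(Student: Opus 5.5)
The plan is to verify the ring facts first, then build the short exact sequence of complexes, and finally read the dimension recursion off the long exact homology sequence.

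\textbf{Step 1: ring facts.} Write $\ell=2h$.
\begin{itemize}
\item Since $2=0$, we get $u^2=x^{\ell}+1=0$ in $R_\ell$.
\item Clearly $(u)\subseteq\operatorname{ann}(u)$. For the reverse, if $fu\equiv0 \pmod{x^\ell+1}=(x^h+1)^2$, then $x^h+1$ divides $f$, so $f\in(u)$.
\item $R_\ell/(u)=\F_2[x]/(x^h+1)=R_{h}$.
\item Multiplication by $u$ gives a surjection $R_\ell\to uR_\ell$ with kernel $\operatorname{ann}(u)=(u)$. Hence $R_\ell/(u)\simeq uR_\ell$ as $R_\ell$-modules, and counting dimensions gives $\dim_{\F_2}uR_\ell=h$.
\end{itemize}

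\textbf{Step 2: the complexes.} Each $Q_i$ is a free $R_\ell$-module $R_\ell^{N_i}$ in a fixed basis. The statements $Q_i/uQ_i\simeq\bar R^{N_i}$ and $uQ_i\simeq\bar R^{N_i}$ therefore hold coordinatewise. The boundary maps $\partial_i$ are matrices over $R_\ell$ built from $A,B$ and identities. Reducing $\partial_i$ mod $u$ gives the half-length boundaries $\bar\partial_i$, built from $\bar A,\bar B$ by the same formulas (Eq.~\eqref{eq:boundary_map}). On $uQ_i$ the map $\partial_i$ acts through $\bar\partial_i$ as well: $\partial(u q)=u\,\partial q$, and $u\,\partial q$ depends only on $\partial q \bmod u$. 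So both identifications are chain isomorphisms with $\bar Q_\bullet$.

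This gives the short exact sequence $0\to uQ_\bullet\to Q_\bullet\to Q_\bullet/uQ_\bullet\to0$. Composing with multiplication by $u$, the first map becomes $\bar Q\xrightarrow{\cdot u}Q$, i.e.\ lift to $Q$ and multiply by $u$; this is well defined because $u\cdot(u)=0$. Here the identification of $uQ$ with $\bar Q$ is essential: $\operatorname{ann}(u)=(u)$ is precisely what makes $\cdot u:\bar Q\to uQ$ injective.

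\textbf{Step 3: long exact sequence and counting.} The long exact homology sequence is
\[
\cdots\to\cH_2(\bar Q)\xrightarrow{\delta_2}\cH_1(\bar Q)\xrightarrow{\iota}\cH_1(Q)\xrightarrow{\pi}\cH_1(\bar Q)\xrightarrow{\delta_1}\cH_0(\bar Q)\to\cdots
\]
By exactness,
\[
\dim\cH_1(Q)=\dim\operatorname{im}\iota+\dim\operatorname{im}\pi=\bigl(\dim\cH_1(\bar Q)-\rank\delta_2\bigr)+\dim\ker\delta_1 .
\]
Since $\dim\ker\delta_1=\dim\cH_1(\bar Q)-\rank\delta_1$, this equals $2\dim\cH_1(\bar Q)-\rank\delta_1-\rank\delta_2$, which is Eq.~\eqref{eq:bockstein-recursion} with the Bockstein maps $\delta_i$ of the sequence.

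\textbf{Step 4: multicyclic case.} For $G=\prod\Z_{L_i}$, take $u=x_i^{L_i/2}+1$ in $R_G=R'[x_i]/(x_i^{L_i}+1)$, where $R'$ is the group algebra of the remaining factors. The arguments of Steps 1--3 go through verbatim, because $R_G$ is free over $R'[x_i]$-quotients and the divisibility argument uses only that $(x_i^{L_i/2}+1)^2=x_i^{L_i}+1$ in the polynomial ring over $R'$, together with the fact that $x_i^{L_i/2}+1$ is monic. Induction on the total $2$-adic valuation then reaches the odd skeleton.

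\textbf{Main obstacle.} I expect the hardest point to be the careful check that $\operatorname{ann}(u)=(u)$ over the coefficient ring $R'$. For a monic polynomial $p$, if $p^2\mid fp$ in $R'[x]$, then $p\mid f$; this holds by the division algorithm because $p$ is monic, and that is how I would handle it. The remaining steps are routine.
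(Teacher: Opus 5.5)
Your proof is correct. Its skeleton matches the paper's: the chain isomorphisms $Q_\bullet/uQ_\bullet\simeq\bar Q_\bullet\simeq uQ_\bullet$, the resulting short exact sequence, and the count $\dim\operatorname{im}\iota+\dim\operatorname{im}\pi$ in the long exact sequence.

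The one genuine difference is how you prove the key ring fact $\operatorname{ann}(u)=(u)$.
\begin{itemize}
\item The paper goes through the primary decomposition. It writes $\ell=2^sm$ and uses that $x^m+1$ is squarefree. Then, in each chain-ring packet $C_g=K_g[\pi]/(\pi^{e})$, the element $u=(x^m+1)^{e/2}$ is a unit times $\pi^{e/2}$, whose annihilator is $(\pi^{e/2})$. The multicyclic case is handled by the brief remark that $u$ lies in one tensor factor.
\item You cancel directly. Since $x^\ell+1=(x^{\ell/2}+1)^2$ and $x^{\ell/2}+1$ is monic, it is a non-zero-divisor in $\F_2[x]$ or $R'[x_i]$. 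Hence $fu\in(x^\ell+1)$ forces $f\in(u)$.
\end{itemize}

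What each route buys:
\begin{itemize}
\item Your route needs no factorization, no squarefreeness, and no chain-ring structure. It also covers the multicyclic case with an explicit argument instead of an appeal to tensor-factor structure.
\item For that case, the cleanest justification is the non-zero-divisor property: $fp=qp^2$ implies $p(f-qp)=0$, hence $f=qp$. Your remark that $R_G$ is ``free over $R'[x_i]$-quotients'' is unnecessary and garbled, so drop it.
\item The paper's route costs more, but it keeps the halving argument inside the primary-packet picture behind Theorem~\ref{thm:even-dimension}. There, $u$ is identified packetwise with $\pi_g^{e/2}$ up to a unit, which relates the Bockstein tower to the nilpotent layers and Smith profiles.
\end{itemize}
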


\begin{proof}
First, $u^2=x^\ell+1=0$.  Since $m$ is odd, $x^m+1$ is squarefree over $\F_2$: its derivative is $mx^{m-1}=x^{m-1}$, which is coprime to $x^m+1$.  Thus each irreducible factor $g$ divides $x^m+1$ exactly once.  In each primary packet $C_g=K_g[\pi]/(\pi^e)$, the element $x^m+1$ has $\pi$-valuation one, so $u=(x^m+1)^{e/2}$ maps to a unit times $\pi^{e/2}$ and $\operatorname{ann}(u)=(u)$ packetwise, hence globally; in the multivariable case $u$ lies in one tensor factor and the annihilator computation is unchanged.  Multiplication by $u$ is $R_\ell$-linear, so it commutes with the boundary maps, kills $uQ_\bullet$, and induces an isomorphism $Q_\bullet/uQ_\bullet\to uQ_\bullet$ by $\operatorname{ann}(u)=(u)$.  Both complexes carry the differential reduced modulo $u$, which is the boundary map of the half-length complex of the reduced seeds under $R_\ell/(u)\simeq R_{\ell/2}$.  The recursion is the standard rank count in the long exact sequence: $\dim\cH_1(Q)$ equals $[\dim\cH_1(\bar Q)-\operatorname{rank}\delta_2]$ from the image of the subcomplex plus $[\dim\cH_1(\bar Q)-\operatorname{rank}\delta_1]$ from the kernel of $\delta_1$ in the quotient.
\end{proof}

\vspace{5pt}\noindent\textbf{Computing the Bockstein maps.}
The connecting maps are explicit.  Given a class $[\bar z]\in\cH_i(\bar Q)$, lift $\bar z$ to any $z\in Q_i$; then $\partial\bar z=0$ forces $\partial z\in uQ_{i-1}$, and $\delta_i[\bar z]=[w]$ for the unique $w\in\bar Q_{i-1}$ with $uw=\partial z$, well defined because $\operatorname{ann}(u)=(u)$.  The name reflects the analogy with the classical Bockstein of the coefficient sequence $0\to\Z_2\to\Z_4\to\Z_2\to0$: $R_\ell$ is a square-zero extension of $R_{\ell/2}$ in the same way.  In practice each $\delta_i$ is a finite binary matrix: fix a basis of $\cH_i(\bar Q)$, lift each basis cycle, apply $\partial$, divide by $u$, and expand the resulting classes in a basis of $\cH_{i-1}(\bar Q)$; $\rank\delta_i$ is then an ordinary $\F_2$ rank.

\vspace{5pt}\noindent\textbf{Bivariate-bicycle example.}
As a numerical check on the nonsemisimple statement, consider the standard bivariate-bicycle code over
\begin{align}
R_{12,6}
=
\F_2[x,y]/(x^{12}+1,y^6+1)
\end{align}
with scalar seeds
\begin{align}
a=x^3+y+y^2,
\qquad
b=y^3+x+x^2,
\end{align}
in the convention $H_X=(a\ b)$ and $H_Z=(b^\dagger\ a^\dagger)$~\cite{BravyiCross2024}.  Direct binary expansion gives $n=2\cdot12\cdot6=144$, $\rank H_X=\rank H_Z=66$, and hence $k=12$, the known $\llbracket144,12,12\rrbracket$ example.  The Bockstein tower reduces this code to the odd skeleton over $\Z_3\times\Z_3$ by halving $x$ twice and $y$ once; read bottom-up, the same tower recovers $k=12$.  Table~\ref{tab:bb-bockstein} lists the binary ranks of the connecting maps computed by the construction above.

\begin{table*}[t]
\caption{Bockstein validation for the $\llbracket144,12,12\rrbracket$ bivariate-bicycle code.  The row $(L_x,L_y)\leftarrow(L'_x,L'_y)$ means that the full complex at $(L_x,L_y)$ is the square-zero thickening of the half-length complex at $(L'_x,L'_y)$ in the displayed variable.  The recursion $k_{\rm full}=2k_{\rm half}-\rank\delta_1-\rank\delta_2$ agrees with a direct binary rank computation at every stage.}
\label{tab:bb-bockstein}
\small
\begin{tabular}{cccccccc}
\toprule
full & half & var. & $(h_2,h_1,h_0)_{\rm half}$ & $\rank\delta_1$ & $\rank\delta_2$ & $k_{\rm rec}$ & $k_{\rm direct}$\\
\midrule
$(3,6)$ & $(3,3)$ & $y$ & $(4,8,4)$ & 4 & 4 & 8 & 8\\
$(6,6)$ & $(3,6)$ & $x$ & $(4,8,4)$ & 2 & 2 & 12 & 12\\
$(12,6)$ & $(6,6)$ & $x$ & $(6,12,6)$ & 6 & 6 & 12 & 12\\
\bottomrule
\end{tabular}
\end{table*}

The same computation gives primary-packet erasure attribution.  The five primary packets are the nilpotent thickenings of the odd-skeleton Frobenius packets with representatives $(0,0)$, $(1,0)$, $(0,1)$, $(1,1)$, and $(1,2)$; their binary dimensions in $R_{12,6}$ are $(8,16,16,16,16)$.  Erasing either of the two complete physical block-orbits gives $d_Z=d_X=6$, with packet-total vector $d_Z+d_X=(0,0,0,4,8)$.  Erasing both block-orbits gives $d_Z=d_X=12$, with packet-total vector $(0,0,0,8,16)$.  Thus the exact erased-logical contribution is concentrated in the two nontrivial mixed primary packets, and the packet totals agree with the full binary erasure-rank formula.

Proposition~\ref{prop:halving-ses} complements Theorem~\ref{thm:even-dimension} in two ways.  For multivariable even lifts, where the primary packets are not chain rings and no Smith profile is available, Eq.~\eqref{eq:bockstein-recursion} still computes the logical dimension exactly by a finite tower of Bockstein-rank computations over successively smaller groups.  And the connecting maps depend on the extension data of the lift, not only on the residue seed, which is precisely the design freedom exploited by the exponent lifts below.

\begin{lemma}[Reciprocal primary pairing]
\label{lem:primary-pairing}
The group-inversion involution permutes primary packets by $g^{e}\leftrightarrow(g^\vee)^{e}$.  Under the physical binary pairing, $R_\ell^m$ decomposes into orthogonal reciprocal pairs of primary blocks, the induced pairing between $e_g R_\ell^m$ and $e_{g^\vee}R_\ell^m$ is perfect, and the CSS homology--cohomology pairing decomposes into perfect reciprocal primary blocks.  A complete conjugate basis with primary packet labels is therefore obtained, as in Theorem~\ref{thm:finite-abelian-basis}, by inverting each finite binary Gram block.
\end{lemma}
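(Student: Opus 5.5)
The plan is to follow the semisimple argument of Lemma~\ref{lem:reciprocal-binary-pairing} and Lemma~\ref{lem:packet-homology-cohomology-pairing}. The one new ingredient is that perfectness on each primary block must come from the Frobenius property of $\RG$ rather than from field trace duality.

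\textbf{Step 1: the involution permutes packets.} The involution $\bar\cdot$ is a ring automorphism of $R_\ell$ with $x\mapsto x^{-1}$. It therefore permutes the primitive central idempotents, since these are characterized intrinsically as the idempotents $e$ with $eR_\ell$ local and indecomposable. To see which packet goes where, note that $e_g$ is the unique idempotent with $e_g\equiv1\pmod{g^e}$ and $e_g\equiv0$ modulo the other primary factors. Applying $x\mapsto x^{-1}$ and multiplying by a suitable power of $x$ turns the ideal $(g^e)$ into $((g^\vee)^e)$, because $x^{\deg g}g(x^{-1})=g^\vee$ up to a unit. Hence $\bar e_g=e_{g^\vee}$.

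\textbf{Step 2: orthogonality of non-reciprocal blocks.} As in Lemma~\ref{lem:reciprocal-binary-pairing}, write $\langle a,b\rangle_{\rm bin}=\epsilon(\sum_i a_i\bar b_i)$, where $\epsilon$ extracts the coefficient of the identity. For $a\in e_gR^m$ and $b\in e_hR^m$, the sum contains the factor $e_g\bar e_h=e_ge_{h^\vee}$. This vanishes unless $h=g^\vee$.

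\textbf{Step 3: perfectness of each reciprocal block.} The binary pairing on all of $R_\ell^m\simeq\F_2^{m\ell}$ is the standard dot product, so it is nondegenerate. By Step 2, $R^m$ splits as an orthogonal sum of the pieces $e_gR^m\oplus e_{g^\vee}R^m$, together with a single self-paired summand when $g=g^\vee$, and cross terms vanish. Nondegeneracy of the whole form then forces each block pairing $e_gR^m\times e_{g^\vee}R^m\to\F_2$ to be nondegenerate. Equivalently, $\epsilon$ restricted to $C_g$ is a generating character, which is the Frobenius property.

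\textbf{Step 4: the homology--cohomology pairing.} The pairing is $R$-balanced, since $\langle ra,b\rangle=\langle a,\bar rb\rangle$. Hence $\langle\partial u,v\rangle=\langle u,\partial^\dagger v\rangle$ holds blockwise, and the packet maps satisfy $(\partial_g)^{*}=(\partial^\dagger)_{g^\vee}$ under the block pairing. A perfect pairing between finite-dimensional $\F_2$-spaces $V\times W$ with adjoint maps gives $(\Im\partial_g)^\perp=\ker(\partial^\dagger)_{g^\vee}$ and $(\ker\partial_g)^\perp=\Im(\partial^\dagger)_{g^\vee}$. From these identities, $\cH_{1,g}$ and $\cH^1_{g^\vee}$ are perfectly paired by standard linear algebra. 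This works because a perfect pairing induces one on $\ker\partial_1/\Im\partial_2$ against $\ker\partial_2^\dagger/\Im\partial_1^\dagger$ whenever these perp relations hold.

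\textbf{Step 5: the conjugate basis.} Choose any $\F_2$-bases of the packet homologies, form the finite Gram block for each reciprocal pair, and invert it to obtain $ZX^\transpose=I$. By Step 2, blocks for distinct reciprocal pairs do not interact.

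The main obstacle is Step 1: identifying $\bar e_g$ precisely as $e_{g^\vee}$ in the repeated-root case. The delicate point is checking that the reciprocal of $g^e$ generates $(g^\vee)^e$ in $R_\ell$, with $x$ a unit. Perfectness in Step 3 is then essentially free from global nondegeneracy, so no explicit chain-ring duality computation is needed.
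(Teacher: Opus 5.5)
Your proposal is correct and follows the paper's argument. The paper uses the same ingredients: global nondegeneracy of the binary dot product, $\bar e_g=e_{g^\vee}$, block orthogonality as in Lemma~\ref{lem:reciprocal-binary-pairing}, perfectness of each reciprocal block because a nondegenerate form restricts perfectly to an orthogonal decomposition, and finite Gram inversion.

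The differences are minor. You justify $\bar e_g=e_{g^\vee}$ explicitly, which the paper simply asserts; this step is routine since $x$ is a unit, so it is less of an obstacle than you suggest. You also derive the blockwise homology--cohomology perfectness from the adjoint perp relations, whereas the paper splits the always-perfect global CSS pairing along packet orthogonality.
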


\begin{proof}
The global binary pairing $[x^0]\sum_i a_i(x)b_i(x^{-1})$ is nondegenerate, and the involution sends $e_g$ to $e_{g^\vee}$, so block orthogonality holds exactly as in Lemma~\ref{lem:reciprocal-binary-pairing}; a nondegenerate pairing restricted to an orthogonal decomposition is perfect on each reciprocal pair of blocks.  The homology--cohomology pairing between $\cL_Z$ and $\cL_X$ is perfect over $\F_2$ for any CSS code, and packet orthogonality splits it into the displayed blocks.  Structurally, $C_g$ is a finite Frobenius ring and the block pairing is a socle-coefficient functional~\cite{Wood1999}, replacing the field trace of Lemma~\ref{lem:reciprocal-trace-pairing}.
\end{proof}

The even-order basis entries for \seedlabel{E}{14}{3}{4} in Table~\ref{tab:main} use the following direct binary implementation of Lemma~\ref{lem:primary-pairing}.  First expand the CSS code and compute the binary kernel and stabilizer spaces on each side.  For each primary idempotent $e_g$, project both spaces blockwise into $e_gR_\ell$ and choose a deterministic quotient basis in the projected kernel modulo the projected stabilizers.  For \seedlabel{E}{14}{3}{4}, this gives primary packet dimensions
\begin{align}
(x+1)^2:14,\quad
(x^3+x+1)^2:6,\quad
(x^3+x^2+1)^2:6.
\end{align}
The resulting $26$ $Z$- and $26$ $X$-representatives have a nonsingular binary Gram matrix; inverting that finite matrix gives a conjugate primary-packet-labeled basis with $\Dbasis=24$ and $\Wcert=168$.  A deterministic greedy stabilizer-coset reduction, not used for the table certificate, lowers the maximum support to $112$ while preserving the Gram identity.  This computation is included to show that primary-packet labels and binary conjugate pairing already work at small even order, even though the odd-order trace-dual finite-field construction is not being claimed in this nonsemisimple case.

The label set becomes (primary packet, nilpotent layer, index): the filtration $\pi_g^{j}\cH_{1,g}$ refines each packet block, with a graded splitting fixed in practice by the Smith basis.  Two sparse design moves act on the nilpotent layers at fixed odd residue.  Since $x^{m}=1+\eta$ with $\eta=x^{m}+1$ nilpotent, the exponent lifts
\begin{align}
x^{a}\mapsto x^{a+qm}=x^{a}(1+\eta)^{q},
\qquad
q=0,\ldots,e-1,
\end{align}
change only the nilpotent expansion while preserving monomial support and the residue seed---equivalently, they change the Bockstein maps of Proposition~\ref{prop:halving-ses} at fixed residue seed; similarly, sparse nilpotent repairs $p(x)\mapsto p(x)+(x^{m}+1)h(x)$ modify higher layers only.  These moves, the associated layer profiles as search objectives, and the multivariable even case---whose primary packets are local but not chain rings, so that Proposition~\ref{prop:primary-decomposition} and exact local binary expansion apply while closed Smith-profile formulas do not---are developed in Ref.~\cite{LeeCodeSearchInPrep}.

\section{Packet-local representatives and distance}
\label{app:packet-local-distance}

The true distance is optimized over all logical representatives, while the spectral construction begins from representatives localized in single packets.  These are different optimization problems.  A global low-weight distance witness may be packet-resolved, in the sense that it decomposes into packet components, without being localized in one packet.  Physical cancellations between several packet components can make such a distance witness lighter than any representative that preserves a single packet label.

A small counterexample is $R=\F_2[x]/(x^3+1)$ with seed $A=(1\;1)$ and the self-adjoint product $\LP(A,A^\dagger)$.  The seed relation $\mathbf e_1+\mathbf e_2\in\ker A$ gives a verified weight-$2$ logical, and a direct binary check rules out weight-one logicals, so $d=2$.  However, the $x+1$ packet idempotent has weight $3$, and the $x^2+x+1$ packet has minimum nonzero block weight $2$.  Since a nonzero packet-local logical needs at least two nonzero group-algebra blocks, the best packet-local logical has weight $4$.  This gives
\begin{align}
d
&=
2
<
d_{\rm pkt\mbox{-}loc}
=
4.
\end{align}

Packet-local spectral representatives should therefore be viewed as an addressable coordinate basis; they need not contain minimum-weight logical operators.  Low-weight seed distance witnesses, minor distance witnesses, sparse syzygies, and QDist outputs are best used to tighten distance upper bounds or to build hybrid bases from a larger verified distance-witness pool.  If the goal is instead to preserve the packet labels of the constructed basis, then the optimization must remain inside each packet, or inside a deliberately chosen reciprocal-stable packet union.

There are several packet-preserving refinements.  One can search for sparse vectors in $\ker A_g$, sparse quotient representatives in $K_g^n/\row(A_g)$, sparse cokernel representatives in $\coker A_g$, and analogous reciprocal-packet data for the conjugate $X$ representatives.  One can also optimize over pivots, coefficient bases, trace-dual bases in reciprocal packets, and lifted packet elements $e_g\beta$ with $\beta\in K_g$.  These refinements are stronger than knowing only the packet rank: the rank counts packet logical coordinates, while packet-local sparsity and idempotent-support profiles ask how expensive those coordinates are after lifting.

Stabilizer-coset reduction can also be made packet-preserving, but only if the added stabilizers are themselves packet-local.  If $L$ is localized in packet $g$, meaning
\begin{align}
e_gL = L, \quad e_hL = 0
\qquad
(h\neq g),
\end{align}
then replacing $L$ by $L+s$ preserves this packet label only when
\begin{align}
s \in e_g\operatorname{Stab}.
\end{align}
A generic physical stabilizer row is usually broad-spectrum, so adding it can reduce weight while destroying the sharp packet label.  Packet-local coset reduction is therefore a more constrained problem than ordinary stabilizer-coset reduction.

For a conjugate basis, the projected replacements announced in Sec.~\ref{sec:representatives-certificates} are
\begin{align}
Z\mapsto Z+sH_Z\Pi_\Omega,
\qquad
X\mapsto X+tH_X\Pi_{\Omega^\vee},
\end{align}
where $\Pi_\Omega$ is the binary projector induced by multiplying every group-algebra block coordinate by the primitive idempotent $e_\Omega$.  The projected stabilizers are still binary stabilizers, so the logical cosets and the conjugate Gram matrix are unchanged.  Writing $\Pi_\lambda=\Pi_{\Omega_\lambda}$ and $\Pi^\vee_\lambda=\Pi_{\Omega_\lambda^\vee}$ for the projectors attached to the packet label $\Omega_\lambda$ of $Z_\lambda$, these replacements define a packet-labeled refinement of the basis-width certificate:
\begin{align}
w^Z_\lambda
&=
\min_s\wt\!\left(Z_\lambda+sH_Z\Pi_\lambda\right),
\nonumber \\
w^X_\lambda
&=
\min_t\wt\!\left(X_\lambda+tH_X\Pi^\vee_\lambda\right),
\nonumber \\
W_{\mathrm{pkt\mbox{-}red}}^{\mathrm{cert}}
&=
\max_\lambda\max\{w^Z_\lambda,w^X_\lambda\}.
\label{eq:packet-reduced-certificate}
\end{align}
The minimizations are coset-leader problems with an additional packet constraint and with physical Hamming weight as the objective; they decompose by packet label and can be run as a local improvement step after the deterministic spectral basis has been constructed.  If the resulting binary representatives are reverified, then
\begin{align}
B(C)
\leq
W_{\mathrm{pkt\mbox{-}red}}^{\mathrm{cert}}
\leq
\Wcert.
\end{align}

Finally, one need not insist on single packets.  Choose a reciprocal-stable union of packets $\mathcal P$ and define the coarser projector
\begin{align}
e_{\mathcal P}
&=
\sum_{g\in\mathcal P}
e_g.
\end{align}
Representatives built inside $e_{\mathcal P}R_\ell$ have coarser spectral labels, but they may be substantially lighter because several packet components can combine and cancel in physical coordinates.  The price is a larger conjugate-pairing block and a larger Gram inversion.  This gives an explicit tradeoff between spectral resolution and physical support.

\section{Seed-level prefilters before QDist}
\label{sec:seed-prefilters}

The spectral construction gives an addressable conjugate logical basis and a verified basis-width certificate; the true distance is a separate question.  This section assembles inexpensive seed-level diagnostics that can be applied before QDist~\cite{QDistRnd2023} or other full-code distance samplers.

The diagnostic sequence reserves full-code distance search for selected seeds.  Packet ranks, basis-width certificates, seed-kernel relations, minor distance witnesses, proper-divisor quotient lifts, short-cycle filters, and exact orbit-erasure ranks are less expensive than unrestricted distance search.  These quantities are algebraic prefilters or upper-bound mechanisms.  When verified, they can reject low-distance seeds and identify the seed-level or quotient-level relation behind some QDist distance witnesses.

A search instance fixes a lift group, seed shapes, a support budget for each group-algebra entry, and optionally a hardware displacement palette; natural search moves include changing monomial exponents, toggling binomial corrections, repairing the parity skeleton $A(1)$, and restricting displacements to allowed hardware moves.  A typical prefilter rejects candidates with the wrong rate, an undesirable packetwise rank profile, erased-logical dimension concentrated in a vulnerable packet, very small $\Dbasis$, very large $\Wcert$, or poor whole-orbit erasure diagnostics; these spectral diagnostics are then combined with ordinary LDPC graph filters and late-stage QDist, BP-OSD, or erasure-decoder tests on selected seeds.

Whenever a diagnostic entry is used as a reported value in $\Dbest$, we apply the distance-witness verification convention of Sec.~\ref{subsec:seed-distance-witnesses}.

\begin{table*}[!t]
\caption{Seed-level diagnostics before QDist.  The entries are prefilters or ranking metrics whose purpose is to reject obvious failures and identify seeds worth expensive distance and decoder studies.}
\label{tab:seed-diagnostic-summary}
\small
\renewcommand{\arraystretch}{1.16}
\setlength{\tabcolsep}{4pt}
\begin{tabular}{@{}L{0.16\textwidth}L{0.17\textwidth}L{0.31\textwidth}L{0.28\textwidth}@{}}
\toprule
Diagnostic layer & Typical quantities & What it detects & Computability and references\\
\midrule \midrule

Gauge and duplicate filters &
canonical seed form, monomial-equivalence tests &
Duplicate seeds and shifted duplicate rows or columns that can induce weight-$2$ distance witnesses after verification. &
Hash normalized exponent patterns; exponent-matrix and QC-LDPC search precedents are in Refs.~\cite{Fossorier2004,SmarandacheVontobel2012}, with related finite-length LP constraints in Ref.~\cite{Raveendran2025}.\\ \midrule

Seed-level distance-witness filters &
$\Dseed$, $\mathcal C_{r+1}$, $\Dminor$, $\Dquot$, $\Dsyz$ &
Sparse kernel relations from duplicate columns, Cramer/minor spectra, proper-divisor quotient lifts, bounded syzygies, or low-weight seed-code words. &
Efficient for bounded seed shape and distance-witness size; permanent/minor distance witnesses are standard QC-LDPC tools~\cite{SmarandacheVontobel2012}.\\ \midrule

Packet and basis diagnostics &
packet ranks, $c_g$, $\Dbasis$, $\Wcert$ &
Logical dimension, packet-attributed labels, packet-local dependency sizes, and constructed-basis width. &
Finite-field row reduction after factoring $x^\ell+1$, plus bounded packet searches; these are the spectral diagnostics supplied here.\\ \midrule

Graph and erasure diagnostics &
$N_4,N_6,N_8,\ldots$, $s_{\rm fail}$, $\rho_{s\rm orb}$ &
Short cycles, Tanner-graph defects, and whole-orbit erasures supporting logical operators. &
Exponent-matrix girth tests are classical~\cite{Fossorier2004,2021necessarysufficientgirthconditions}; orbit-erasure attribution follows from the present packet decomposition, with erasure-decoder motivation from Ref.~\cite{Yao2024}.\\ \midrule

Decoder-stage tests &
QDist, BP-OSD, erasure decoder &
Final full-code distance and decoder checks on seeds passing the algebraic filters. &
Expensive late-stage tests; BP-OSD and iterative decoder studies provide natural evaluation tools~\cite{PanteleevKalachev2021,Roffe2023,Pradhan2025,GolowichGuruswami2024}.\\

\bottomrule
\end{tabular}
\end{table*}

Figure~\ref{fig:seedwitnesses} isolates the distance-witness mechanism that is kept separate from the construction-example comparison in Fig.~\ref{fig:codesignlandscape}.

\begin{figure*}[t]
\includegraphics[width=.96\textwidth]{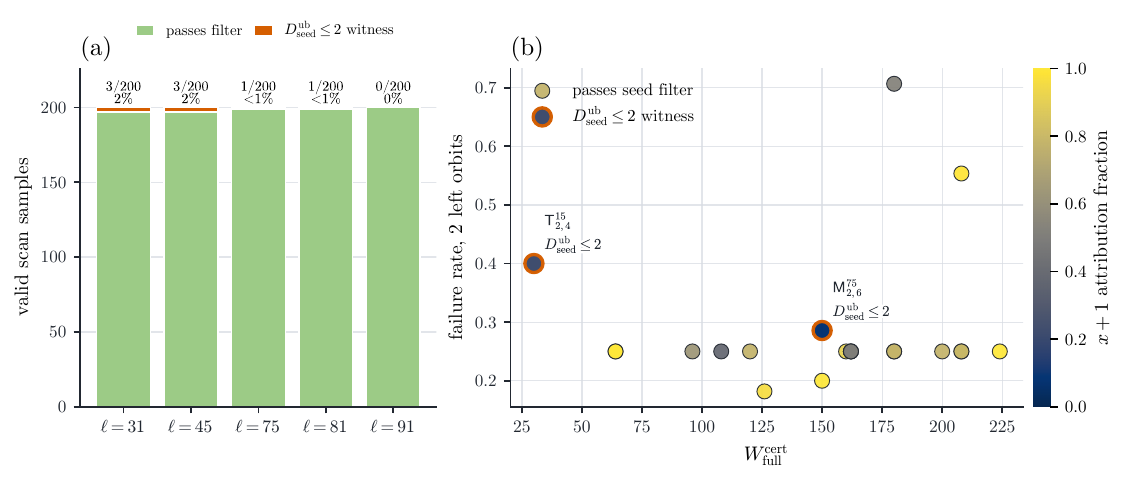}
\caption{Seed distance-witness diagnostic. (a) Frequency of low-cost monomial-equivalence distance witnesses in the reproducible random scans of $3\times7$ seeds for $\ell=31,45,75,81,91$. Each stacked bar counts valid scan samples at that lift length: the green portion passes the seed filter, while the orange cap is the subset with a verified $\Dseed\le2$ distance witness. The annotations above the bars give the distance-witness count out of the valid scan samples and the corresponding percentage. (b) Certificate-erasure coordinates for the construction examples together with diagnostic examples. Orange rings mark seeds with $\Dseed\le2$; filled points pass the seed filter. The vertical color gives the fraction of total erased-logical dimension attributed to the $x+1$ packet under two-left-block-orbit erasure. The distance-witness points are retained as diagnostics but are not treated as construction examples in Fig.~\ref{fig:codesignlandscape}.}
\label{fig:seedwitnesses}
\end{figure*}

Figure~\ref{fig:seedwitnesses} counts only the shifted-duplicate, monomial-equivalence mechanism behind very small $\Dseed$ distance witnesses.  It does not count Cramer/minor or proper-divisor quotient-lift distance witnesses; those are the separate algebraic candidates studied below and are the source of several $\Ddiag$ entries in Table~\ref{tab:main}.

Throughout this section, let
\begin{align}
R_\ell
&=
\F_2[x]/(x^\ell+1),
\end{align}
with $\ell$ odd, and let
\begin{align}
A:R_\ell^n
&\longrightarrow
R_\ell^r
\end{align}
be a single-variable quasi-cyclic seed.  For a polynomial vector $u\in R_\ell^n$, the group-basis weight $\wt_R(u)$ is as in Eq.~\eqref{eq:group-basis-weight}.

\subsection{Seed-kernel distance witnesses}

The simplest diagnostic subclass uses a sparse kernel vector of the seed map through the $Z$-type cycle $Z=(u\otimes q,0)$ of Sec.~\ref{subsec:seed-distance-witnesses}.  Let $\mathcal U_A$ be the set of nonzero $u\in\ker A$ for which some monomial coordinate $q$ gives a verified nonzero product class.  Define
\begin{align}
\Dseed(A)
&=
\min_{u\in\mathcal U_A}\wt_R(u),
\label{eq:Dseed-witness}
\end{align}
with the convention that the value is unavailable when $\mathcal U_A$ is empty.  The same test applied to $A^\dagger$ gives row-side witnesses.  Whenever either verified value is finite,
\begin{align}
d(\LP(A,A^\dagger))
\leq
\min\{
\Dseed(A),
\Dseed(A^\dagger)
\}.
\label{eq:seed-witness-bound}
\end{align}

It remains to explain the choice of the monomial coordinate $q$, and why the nontriviality condition is kept.  If $0\neq u\in\ker A$, one can choose a monomial coordinate vector $q$ with nonzero binary pairing against $u$: such a $q$ exists because $u$ has at least one nonzero monomial coefficient, and it detects a nonzero quotient component of $\coker A^\dagger$, since
\begin{align}
\langle A^\dagger s,u\rangle_{\rm bin}
&=
\langle s,Au\rangle_{\rm bin}
=
0
\end{align}
for all $s$, while $\langle q,u\rangle_{\rm bin}=1$.  However, the product representative over $R_\ell$ is nonzero only through common packet components of the two tensor legs.  A sparse seed-kernel vector is therefore included as a distance witness only after checking that the induced K\"unneth class is nonzero, equivalently after verifying the expanded binary representative as a nontrivial logical operator.  A sparse classical seed-codeword is thus a candidate low-distance witness; it becomes a quantum distance witness only after the packet or binary nontriviality check.

\subsection{Monomial-equivalent rows and columns}

For monomial seeds this candidate test becomes extremely inexpensive.  If $A_{ij}=x^{a_{ij}}$ and two columns $j,k$ satisfy
\begin{align}
a_{ij}-a_{ik}
&\equiv
\delta
\pmod\ell
\qquad
\text{for every row } i,
\end{align}
then $A_{\cdot,j}=x^\delta A_{\cdot,k}$ and
\begin{align}
u=\mathbf e_j+x^\delta \mathbf e_k
\end{align}
lies in $\ker A$ with $\wt_R(u)=2$.  If the induced product representative is verified as a nontrivial logical operator, then
\begin{align}
d\bigl(\mathrm{LP}(A,A^\dagger)\bigr)
&\le
2.
\end{align}
The row version is the same test applied to $A^\dagger$.  Appendix~\ref{app:seed-distance-witness-examples} gives the worked toy example, so the present appendix uses this filter only as one entry in the pre-QDist diagnostic table.  Related finite-length lifted-product constraints are studied in Ref.~\cite{Raveendran2025}.

\subsection{Minor and permanent distance witnesses}

The duplicate-column test is the smallest case of a more general seed-kernel candidate.  Let $A$ be an $r\times n$ seed over $R_\ell$.  For any subset $S\subset\{1,\ldots,n\}$ with $|S|=r+1$, define a vector $u_S\in R_\ell^n$ supported on $S$ by
\begin{align}
(u_S)_j
&=
\det A_{S\setminus j},
\qquad
j\in S,
\\
(u_S)_j
&=
0,
\qquad
j\notin S,
\label{eq:minor-witness}
\end{align}
where $A_{S\setminus j}$ is the $r\times r$ submatrix obtained by deleting column $j$ from $A_S$.  Since the characteristic is two, signs in the usual Laplace relation disappear.  It follows that
\begin{align}
Au_S
&=
0.
\end{align}
If $u_S$ is nonzero and sparse, it gives a seed-kernel candidate with candidate weight
\begin{align}
w_{\rm minor}(S)
&=
\sum_{j\in S}
\wt_R\!\left(\det A_{S\setminus j}\right),
\end{align}
with the convention that all-zero candidates are discarded.  It becomes a distance witness only when the induced product class is nontrivial after packet or binary verification.  Define the verified minor distance-witness value by
\begin{align}
&D_{\rm minor}^{\rm ub}(A)
=
\min\{
w_{\rm minor}(S):
|S|=r+1,\ u_S\neq0,\ \text{and} \nonumber \\
&\,\text{the induced product class is verified nontrivial} \},
\end{align}
with the convention that the value is unavailable if no minor candidate passes verification.  Whenever this value is available,
\begin{align}
d\bigl(\mathrm{LP}(A,A^\dagger)\bigr)
&\le
D_{\rm minor}^{\rm ub}(A).
\end{align}
The same construction applied to $A^\dagger$ gives row-side distance witnesses.

For sparse-polynomial seeds, these determinants are often computable directly in the polynomial ring.  For monomial seeds, the determinant is a permanent-like sum of monomials, since signs vanish over $\F_2$.  This connects the seed-level LP distance witness to the classical QC-LDPC distance-bound literature, where polynomial parity-check matrices and their minors or permanents give explicit minimum-distance upper bounds~\cite{SmarandacheVontobel2012}.

\subsection{Proper-divisor quotient-lift witnesses}
\label{app:quotient-lift-witnesses}

There is another inexpensive source of distance witnesses that is not captured by the Cramer/minor spectrum.  Let $m$ be a proper divisor of $\ell$, and reduce the seed modulo $x^m+1$:
\begin{align}
A^{(m)}
&=
A \bmod (x^m+1)
\quad
\text{over }
R_m=\F_2[x]/(x^m+1).
\end{align}
Define
\begin{align}
P_{\ell,m}
&=
1+x^m+x^{2m}+\cdots+x^{\ell-m}
\in R_\ell .
\end{align}
If $\bar u\in\ker A^{(m)}$ and $\widetilde u$ is its coefficient lift to $R_\ell^n$ with exponents in $\{0,\ldots,m-1\}$, then
\begin{align}
u
&=
P_{\ell,m}\widetilde u
\end{align}
satisfies $Au=0$ in $R_\ell^r$.  Indeed, $A\widetilde u$ is divisible by $x^m+1$, and
\begin{align}
(x^m+1)P_{\ell,m}
&=
x^\ell+1
\end{align}
vanishes in $R_\ell$.

Thus a sparse quotient-kernel vector gives a candidate product representative $Z=(u\otimes q,0)$ as in Sec.~\ref{subsec:seed-distance-witnesses}.  With the canonical lift above,
\begin{align}
\wt_R(u)
&=
\frac{\ell}{m}\wt_{R_m}(\bar u).
\end{align}
The quotient-lift diagnostic $\Dquot$ is the minimum of this quantity over quotient-kernel candidates whose expanded binary product representative passes the verification convention of Sec.~\ref{subsec:seed-distance-witnesses}.  As with $\Dseed$ and $\Dminor$, no quotient-lift candidate enters $\Dbest$ unless the expanded binary vector is checked to commute with the stabilizers, to lie outside the opposite stabilizer row space, and to have the reported support weight.

This diagnostic can be tighter than the full-lift Cramer/minor spectrum because cancellation is first performed in the smaller quotient ring $R_m$ and then repeated periodically in the full lift.  In Table~\ref{tab:main}, the verified quotient-lift values include
\begin{align}
\Dquot\!\left(\seedlabel{M}{45}{4}{10}\right)
&=
18
\quad (m=15,\ \wt_{R_m}(\bar u)=6),
\nonumber \\
\Dquot\!\left(\seedlabel{M}{45}{5}{9}\right)
&=
50
\quad (m=9,\ \wt_{R_m}(\bar u)=10),
\nonumber \\
\Dquot\!\left(\seedlabel{P}{75}{3}{7}\right)
&=
25
\quad (m=3,\ \wt_{R_m}(\bar u)=1),
\nonumber \\
\Dquot\!\left(\seedlabel{P}{91}{3}{7}\right)
&=
70
\quad (m=13,\ \wt_{R_m}(\bar u)=10).
\label{eq:quotient-lift-values}
\end{align}
The first value is superseded in the parameter column by the tighter independent $\Dwit=16$ entry, while the last three sharpen the reported diagnostic bound for their rows.  For \seedlabel{P}{91}{3}{7}, the listed value comes from a quotient-kernel vector of weight $10$ at $m=13$; a weight-$11$ vector at the same divisor would give the weaker bound $91\cdot 11/13=77$.

\subsection{Cramer/minor spectra in the highlighted example rows}
\label{app:cramer-spectra}

The minor construction also explains, before any full-code distance search, one peculiarity of the highlighted example rows in Table~\ref{tab:main}: they do not show the toy-scale determinantal distance witnesses seen in several comparison seeds.  For an $r\times n$ seed, every $(r+1)$-column subset gives a canonical Cramer/minor candidate $u_S\in\ker A$ whose entries are the $r\times r$ minors in Eq.~\eqref{eq:minor-witness}.  The Cramer/minor spectrum is the multiset
\begin{align}
\mathcal C_{r+1}(A)
&=
\left\{
w_{\rm minor}(S):
|S|=r+1,\ u_S\neq0
\right\}.
\end{align}
When this multiset is nonempty, its minimum is an inexpensive seed-level diagnostic; when a spectrum element passes the nontriviality verification above, it contributes to the verified minor distance witness $\Dminor$.

For monomial seeds this spectrum has a simple analytic ceiling.

\begin{proposition}[Monomial minor ceiling]
\label{prop:cramer-ceiling}
Let $A_{ij}=x^{a_{ij}}$ be an $r\times n$ monomial seed over $R_\ell$.  For any $r$-column set $T=\{j_1,\ldots,j_r\}$,
\begin{align}
\det A_T
&=
\sum_{\pi\in S_r}
x^{\sum_{i=1}^r a_{i,j_{\pi(i)}}}
\in R_\ell,
\label{eq:monomial-minor-permanent}
\end{align}
since signs vanish over $\F_2$; hence $\wt_R(\det A_T)\le r!$, and every nonzero Cramer/minor candidate satisfies
\begin{align}
w_{\rm minor}(S)
&\le
(r+1)r!,
\label{eq:monomial-cramer-ceiling}
\end{align}
independently of the lift length $\ell$.
\end{proposition}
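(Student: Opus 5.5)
The plan is to expand the determinant over the commutative ring $R_\ell$ with the Leibniz formula and then bound supports term by term. Since $R_\ell$ is commutative, the Leibniz expansion $\det A_T=\sum_{\pi\in S_r}\operatorname{sgn}(\pi)\prod_{i=1}^r A_{i,j_{\pi(i)}}$ holds as an identity in $R_\ell$. In characteristic two every sign $\operatorname{sgn}(\pi)=\pm1$ equals $1$. Substituting $A_{ij}=x^{a_{ij}}$ turns each product into the single monomial $x^{\sum_i a_{i,j_{\pi(i)}}}$, with exponents read modulo $\ell$ because $x^\ell=1$ in $R_\ell$. This gives Eq.~\eqref{eq:monomial-minor-permanent}.

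For the weight bound, I would use that $\wt_R$ of a single element of $R_\ell$ is the number of nonzero coefficients in its length-$\ell$ coefficient vector. This support size is subadditive: $|\operatorname{supp}(f+g)|\le|\operatorname{supp}f|+|\operatorname{supp}g|$. Each monomial has weight exactly one, so $\wt_R(\det A_T)\le |S_r|=r!$. Coinciding exponents modulo $\ell$ only cancel in pairs over $\F_2$ or merge, so they can only lower the count.

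Finally, for $|S|=r+1$ the vector $u_S$ of Eq.~\eqref{eq:minor-witness} has exactly $r+1$ possibly nonzero entries. Each entry is $\det A_{S\setminus j}$ for an $r$-column set $S\setminus j$. Summing the previous bound over $j\in S$ gives $w_{\rm minor}(S)\le (r+1)r!$. None of these bounds depends on $\ell$, since $\ell$ enters only through reducing exponents, which never increases support.

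I expect no real obstacle. The only point needing care is justifying the Leibniz formula over a commutative ring rather than a field, and noting that the weight is measured after reduction in $R_\ell$ and not on formal exponent sums.
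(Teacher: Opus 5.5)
Your proposal is correct and is the same argument the paper compresses into the statement itself: Leibniz expansion over the commutative ring $R_\ell$, signs collapsing in characteristic two, at most $r!$ monomial terms per minor, and summing over the $r+1$ entries of $u_S$ to get $(r+1)r!$ with no dependence on $\ell$. Your explicit appeal to subadditivity of support, and your note that colliding exponents can only cancel, make precise what the paper leaves implicit.
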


\noindent As a result, a monomial $3\times n$ seed has a built-in Cramer/minor ceiling of $24$ whenever one of these candidates verifies as a nontrivial LP logical.  This explains why several $3\times7$ monomial or reference rows naturally produce verified determinantal distance witnesses with weights in the range $18$--$24$.

The ceiling is an upper-bound mechanism, not a lower-bound theorem.  It says that if the canonical determinantal syzygies survive as nontrivial logicals, then monomial $3\times n$ seeds cannot avoid a small distance witness by increasing $\ell$.  To suppress this mechanism one must either change the seed shape or change the entry support so that the relevant minors are no longer six-term monomial permanents.

The same count extends to sparse-polynomial entries.  For an $r$-column set $T=\{j_1,\ldots,j_r\}$, define the raw determinant expansion count
\begin{align}
\nu(T)
&=
\sum_{\pi\in S_r}
\prod_{i=1}^r
\wt_R(A_{i,j_{\pi(i)}}).
\label{eq:raw-determinant-count}
\end{align}
Then $\wt_R(\det A_T)\le \nu(T)$, with strict inequality when exponent collisions cancel in pairs.  If all entries have support at most $s$, then
\begin{align}
\wt_R(\det A_T)
&\le
r!s^r,
\nonumber \\
w_{\rm minor}(S)
&\le
(r+1)r!s^r.
\label{eq:sparse-cramer-ceiling}
\end{align}
Sparse binomial entries enlarge the determinantal scale by a factor up to $2^r$ compared with monomial entries, at the cost of increasing check weight.  This is a seed-level design tradeoff rather than a distance theorem: larger raw expansion counts make the standard Cramer distance witnesses less automatically small, but the resulting vectors must still be verified and may still have cancellations or be superseded by other syzygies.

A useful scale heuristic is obtained by treating the $\nu(T)$ raw monomial exponents in a determinant as independent uniform residues in $\Z_\ell$.  If $N=\nu(T)$, then a fixed residue has odd parity with probability
\begin{align}
p_{\rm odd}(N,\ell)
&=
\frac{1}{2}
\left(
1-\left(1-\frac{2}{\ell}\right)^N
\right),
\end{align}
so
\begin{align}
\mathbb E\,\wt_R(\det A_T)
&\approx
h_\ell(N)
:=
\frac{\ell}{2}
\left(
1-\left(1-\frac{2}{\ell}\right)^N
\right).
\label{eq:minor-occupancy-heuristic}
\end{align}
The function $h_\ell(N)$ grows approximately linearly for $N\ll \ell$ and saturates near $\ell/2$ for $N\gg \ell$.  The independence assumption is not a theorem: the exponent sums come from permutations of the same seed matrix and can be correlated.  We use $h_\ell(N)$ only as a scale estimate for support after parity cancellations, not as evidence for a distance claim.  Simply adding more raw terms eventually stops increasing a typical minor support; the useful search objective is to raise the lower tail of the Cramer spectrum while keeping the seed check weight and basis-width certificate acceptable.

The highlighted example rows use both effects.  For \seedlabel{M}{45}{5}{9}, the relevant $5\times5$ monomial minors have $5!$ possible permutation terms.  After reduction modulo $x^{45}+1$ and parity cancellation, the light minors in this seed retain support weights between $16$ and $24$, so the lightest six-component Cramer/minor candidate has total seed weight $114$.  A crude random-permanent heuristic gives the same scale: if the $r!$ permutation sums in Eq.~\eqref{eq:monomial-minor-permanent} are treated as independent uniform residues in $\Z_\ell$, then
\begin{align}
\mathbb E\,\wt_R(\det A_T)
&\approx
h_\ell(r!).
\end{align}
For $r=5$ and $\ell=45$ this is close to $22$ per minor, consistent with a six-component Cramer weight of order $10^2$ rather than the $3\times n$ ceiling of $24$.

For \seedlabel{P}{75}{3}{7} and \seedlabel{P}{91}{3}{7}, the sparse parity-systematic binomial entries both repair the $x+1$ rank profile and inflate the nontrivial-packet minors beyond the six-term monomial pattern.  The lightest verified Cramer/minor distance witnesses have weights $78$ and $81$, respectively.  Table~\ref{tab:main} records the smaller verified quotient-lift witnesses $\Dquot=25$ and $\Dquot=70$ for these rows, so the values $78$ and $81$ should be read only as Cramer/minor-spectrum data.  These values show that these rows suppress the standard low-weight determinantal syzygies at the seed level, rather than merely hiding them from the spectral basis; they remain upper-bound diagnostics, not distance proofs.

In this restricted sense, the Cramer spectrum predicts why \seedlabel{M}{45}{5}{9}, \seedlabel{P}{75}{3}{7}, and \seedlabel{P}{91}{3}{7} look different from the low-bound comparison rows before running an expanded-code distance search.  It predicts the absence of the standard duplicate-column and small-minor mechanisms, not the absence of all low-weight logical operators; the quotient-lift witnesses above are precisely examples of a different mechanism.  We therefore treat lower-tail Cramer spectra, verified minor distance witnesses, quotient-lift witnesses, and bounded syzygy searches as separate diagnostic components in broader code-search studies.

\subsection{Sparse syzygy and seed-code searches}

Minor distance witnesses only use $r+1$ columns.  More generally, one can search for bounded sparse polynomial syzygies.  If sparse polynomials $p_j(x)$ supported on a small column set $S$ satisfy
\begin{align}
\sum_{j\in S}
p_j(x)A_{\cdot,j}
&=
0,
\end{align}
then
\begin{align}
u
&=
\sum_{j\in S}
p_j(x)\mathbf e_j
\end{align}
lies in $\ker A$, and
\begin{align}
\wt_R(u)
&\le
\sum_{j\in S}
\wt(p_j).
\end{align}
When the induced product class is verified as nonzero, this gives
\begin{align}
d\bigl(\mathrm{LP}(A,A^\dagger)\bigr)
&\le
\sum_{j\in S}
\wt(p_j).
\end{align}
The unrestricted sparsest-syzygy problem is a classical minimum-distance problem, so the useful prefilter is the bounded search with a small support budget $w_0$.
The notation $\Dsyz$ denotes the minimum verified distance witness found by such bounded syzygy searches.

One can also run a classical low-weight codeword search on the binary expansion $A_{\rm bin}:\F_2^{n\ell}\to\F_2^{r\ell}$ and on $A_{\rm bin}^\transpose$.  Any low-weight word found in $\ker A_{\rm bin}$ or $\ker A_{\rm bin}^\transpose$ gives an LP distance witness after verification.  This search is still exponential in the target weight in the worst case, but it is performed on length $n\ell$ or $r\ell$, not on the full self-adjoint LP length $(n^2+r^2)\ell$; this is consistent with the finite-length LP distance-diagnostic viewpoint of Ref.~\cite{Raveendran2025}.

\subsection{Structural and decoder scores}

The remaining prefilters are structural scores rather than standalone distance statements.  Packet ranks determine logical dimensions, but they do not measure the minimum support of packet-local kernel vectors.  A useful refinement is the packet dependency size
\begin{align}
c_g(A)
&=
\min
\left\{
|S|:
\rank_{K_g} A_{g,S}<|S|
\right\},
\end{align}
possibly computed only up to a small support cap.  A minimal set $S$ attaining such a dependence is a minimal dependent column set (a circuit of the column matroid of $A_g$, in the matroid-theoretic rather than quantum-circuit sense).  Combined with the idempotent-support profile
\begin{align}
\mathcal W_g
&=
\left\{
\wt_R(e_g\beta):
\beta\in K_g^\times
\right\},
\end{align}
this indicates which packets are likely to produce physically light representatives after lifting.

For monomial seeds, exponent-vector collisions and short-cycle counts give fast graph-quality filters.  The duplicate-column test is the smallest collision; higher additive coincidences can be tallied as well, while Tanner-graph defects are summarized by the cycle spectrum $N_4,N_6,N_8,\ldots$.  These diagnostics are relevant to BP-type decoders and are complementary to the packet rank profile; the underlying exponent-matrix and girth tests are standard QC-LDPC tools~\cite{Fossorier2004,2021necessarysufficientgirthconditions}.

Whole-orbit erasures give an exact structured-erasure score, motivated by the role of erasure structure in QLDPC decoding~\cite{Yao2024}.  For a block subset $S$, let $\Delta(S)=d_Z(E_S)+d_X(E_S)$, where $E_S$ erases the full lift-group orbit in each selected block coordinate.  The first dangerous orbit-erasure size is
\begin{align}
s_{\rm fail}&=\min\{|S|:\Delta(S)>0\},
\end{align}
and $\rho_{s\rm orb}$ (\secref{sec:erasure}) is the fraction of size-$s$ block subsets with $\Delta(S)>0$.  Because whole-orbit erasures decompose by packets, the same calculation identifies which packets support the erased logicals.

Table~\ref{tab:erasure} collects the exact orbit-erasure scores summarized in Fig.~\ref{fig:codesignlandscape}.  The model erases one or two whole cyclic block-orbits from the left $n_{\rm seed}^2$ block summand (\secref{sec:erasure}), and the final column gives the dominant two-orbit packet by erased-logical dimension contribution.  Distance-witness rows are shown only to illustrate failure modes; they are not treated as construction examples.

\begin{table*}[t]
\caption{Selected exact orbit-erasure diagnostics used in the code-design comparison. The quantities $\rho_{1\rm orb}$ and $\rho_{2\rm orb}$ are the $s=1$ and $s=2$ instances of $\rho_{s\rm orb}$: the exact fractions of one- and two-left-block-orbit erasure patterns with nonzero erased-logical dimension. The dominant packet is the packet contributing the largest share of erased-logical dimension for two-left-block-orbit erasures; the number in parentheses is that share.}
\label{tab:erasure}
\begin{tabular}{llccccc}
\toprule
Seed & role & $\Wcert$ & $\Dbest$ & $\rho_{1\rm orb}$ & $\rho_{2\rm orb}$ & dominant packet\\
\midrule
\seedlabel{A}{45}{3}{7} & example & 120 & 16 & 0 & 0.250 & $x+1$ (0.778)\\
\seedlabel{M}{45}{4}{10} & example & 126 & 16 & 0 & 0.182 & $x+1$ (0.957)\\
\seedlabel{M}{45}{5}{9} & example & 150 & 50 & 0 & 0.200 & $x+1$ (1.000)\\
\seedlabel{R}{75}{3}{7} & example & 160 & 16 & 0 & 0.250 & $x+1$ (0.913)\\
\seedlabel{P}{75}{3}{7} & example & 180 & 25 & 0.449 & 0.707 & $x+1$ (0.529)\\
\seedlabel{A}{75}{3}{7} & example & 200 & 20 & 0 & 0.250 & $x+1$ (0.778)\\
\seedlabel{R}{81}{3}{7} & example & 162 & 16 & 0 & 0.250 & $x+1$ (0.563)\\
\seedlabel{P}{91}{3}{7} & example & 208 & 70 & 0.327 & 0.554 & $x+1$ (0.987)\\
\seedlabel{A}{91}{3}{7} & example & 224 & 22 & 0 & 0.250 & $x+1$ (1.000)\\
random $\ell=31$ scan point & filtered random & 64 & 48 & 0 & 0.250 & $x+1$ (1.000)\\
\midrule
\seedlabel{T}{15}{2}{4} & distance witness & 30 & 2 & 0 & 0.400 & non-$x+1$ (0.302)\\
\bottomrule
\end{tabular}
\end{table*}

\subsection{Computability and diagnostic summary} \label{app:score}

Most of the filters above are efficient in the intended search regime: bounded seed shape, bounded group-basis support, and bounded distance-witness size.  They should be read as bounded algebraic tests rather than polynomial-time solutions to distance.  Shifted-duplicate filters use hashed exponent differences, minor filters examine $\binom{n}{r+1}$ column sets for fixed seed shape, quotient-lift filters fold the seed to proper divisors of $\ell$, packet ranks use finite-field row reduction over packets with total degree $\sum_g\deg g=\ell$, packet-dependency searches are combinatorial only in the imposed support cap, and whole-orbit erasure enumeration is exact for the small block sizes normally used in erasure-pattern scans.

The search procedure used here is: $(i)$ seed normalization, $(ii)$ packet rank and basis-certificate computation, $(iii)$ verified seed-level distance-witness search, $(iv)$ graph and erasure diagnostics, and finally $(v)$ QDist/BP-OSD on selected seeds.  The first four stages are algebraic and combinatorial; the last stage is a full-code distance or decoder test.
The final stage is deliberately separated from the algebraic filters: BP-OSD and iterative decoder studies provide natural late-stage evaluation tools for QLDPC, HGP, and LP codes~\cite{PanteleevKalachev2021,Roffe2023,Pradhan2025,GolowichGuruswami2024}.
The diagnostic output can be summarized as
\begin{align}
\text{seed}
&\mapsto
\big(k,\Wcert,\Dbasis,\Dseed,\mathcal C_{r+1},\Dminor,\Dquot,\nonumber \\
&\hspace{2.8em}
\Dsyz,\rho_{s\rm orb},N_4,N_6,\ldots\big).
\end{align}
Here $\Dseed$ denotes the best verified direct seed-kernel distance witness, such as a shifted duplicate row or column.  The separate entries $\Dminor$, $\Dquot$, and $\Dsyz$ refer to verified minor/permanent, proper-divisor quotient-lift, and bounded-syzygy distance witnesses.  If any verified distance-witness value is smaller than $\Dbasis$, then the parameter line should use the sharper distance upper bound, while $\Dbasis$ and $\Wcert$ remain useful as properties of the constructed addressable basis.

The entries draw on standard QC-LDPC exponent-matrix, permanent/minor, girth, and decoder diagnostics, while the packet attribution and basis-width columns are supplied by the present spectral construction.  Table~\ref{tab:seed-diagnostic-summary} is therefore a compact summary for follow-up search rather than a substitute for full distance or decoder simulations.

\bibliography{ref}

\end{document}